\global\mdfdefinestyle{exampledefault}{%
linecolor=lightgray,linewidth=1pt,%
leftmargin=1cm,rightmargin=1cm,
}
\let\IG\iffalse
\newcommand{\expect}{\mathbb{E}}
\newcommand{\given}{{\mbox{{\large $\mid$}}}}
\newcommand{\argmin}{\operatornamewithlimits{arg min}}
\newcommand{\bpf}{\begin{singlespace} \noindent {\bf Proof}:\ \ }
\newcommand{\epf}{$\;\rule{1.5mm}{3mm}$ \end{singlespace} \smallskip}
\newcommand{\bphi}{\boldsymbol{\phi}}
\newcommand{\Amb}{\mbox{{\sf $\kappa$}}\,}
\newtheorem{thm}{Theorem}
\newtheorem{prop}{Proposition}
\newtheorem{lemma}{Lemma}
\newtheorem{definition}{Definition}
\def \upbar{\overline}
\def \til[#1]{\widetilde{#1}}
\numberwithin{equation}{section}
\newcommand{\actions}{\mathcal{A}}
\definecolor{lightsalmonpink}{rgb}{1.0, 0.6, 0.6}
\definecolor{lightsalmon}{rgb}{1.0, 0.63, 0.48}
\definecolor{lightpastelpurple}{rgb}{0.69, 0.61, 0.85}
 \definecolor{cornflowerblue}{rgb}{0.39, 0.58, 0.93}
\definecolor{babyblueeyes}{rgb}{0.63, 0.79, 0.95}
\definecolor{babyblue}{rgb}{0.54, 0.81, 0.94}
\definecolor{azure(colorwheel)}{rgb}{0.0, 0.5, 1.0}
\definecolor{cyan(process)}{rgb}{0.0, 0.72, 0.92}
\definecolor{darkcoral}{rgb}{0.8, 0.36, 0.27}
\definecolor{deepcarrotorange}{rgb}{0.91, 0.41, 0.17}
\definecolor{lightseagreen}{rgb}{0.13, 0.7, 0.67}
\definecolor{ao}{rgb}{0.0, 0.0, 1.0}
\let\IG\iffalse
\begin{document}

\title{Moral Hazard, Dynamic Incentives, \\ and Ambiguous Perceptions}

\author{Martin Dumav$^{\ast}$}
\date{October 23rd, 2021 \\
$^{\ast}$\text{Universidad Carlos III de Madrid, Department of Economics}
\\
Many thanks to Eduardo Faingold, William Fuchs, Takashi Hayashi, Urmee Khan, David Levine, Antoine Loeper, Ramon Marimon, Frank Riedel, Johannes Schneider and Gordan Zitkovic for help with this paper.
Earlier elements of this research
originated during my stay at Bielefeld University as a postdoctoral research fellow and the European University Institute as a Max Weber postdoctoral fellow: the hospitality of these institutions are gratefully acknowledged.
Support from the Ministerio Economia y Competitividad (Spain) through grants ECO2017-86261-P and MDM 2014-0431
is gratefully acknowledged.}

\begin{abstract}
This paper considers dynamic moral hazard settings, in which
the consequences of the agent's actions are not precisely understood.
In a new continuous-time moral hazard model with drift ambiguity,
the agent's unobservable action translates to drift set that describe the evolution of output.
The agent and the principal have  imprecise information about the technology,
and both seek robust performance from a contract in relation to their respective worst-case scenarios.
We show that the optimal long-term contract
aligns the parties' pessimistic expectations and broadly features compressing
of the high-powered incentives.
Methodologically, we provide a tractable way to formulate and characterize optimal long-run contracts with drift ambiguity.
Substantively, our results provide some
insights into the formal link between robustness and simplicity of dynamic contracts,
in particular high-powered incentives become less effective
in the presence of ambiguity.

\medskip

\noindent {\it Key words and phrases}: Dynamic moral hazard, ambiguity, robustness, continuous-time methods.

\medskip

\noindent {\it JEL Classifications:} D81, D82, D86
\end{abstract}

\maketitle

\newpage

\section{Introduction}

This paper studies dynamic agency problems, in which the worker's action translates to outcomes
in ways that are not precisely understood.
These situations are common, especially for white collar employees
in large organizations, constituting a large portion of jobs people work in. Lack of specific responsibilities on the part of
these workers for either generating sales or  the overall performance of a major organizational unit,
such as a division or the entire firm,  limits the availability of
appropriate performance measures sufficient to support
exclusive reliance on explicit financial incentives.
Lack of apparent sensitivity of compensation to performance is a common  occurrence.\footnote{
See, for instance, \citet{holmstrom2017pay}, especially Section II.C.}
 In such settings, it is plausible to imagine that the parties
understand the technological possibilities only to some degree, but not precisely.
What kind of a contract would the principal offer?

The main contribution of this paper is  twofold. Formally, it gives a justification
for incentive schemes that do not vary sensitively with performance in an environment where
the parties' ambiguous perception of consequences of the actions lead them to seek robustness  as in  worst-case guaranteed value.
In the process, the paper also provides new insights into a more general
question about incentive contracting, namely, what happens to ``high-powered incentives'' as
the information available about technology becomes more imprecise in comparison to standard
Bayesian models? Our results suggest that broadly such incentives lose their impact.

In the model, the risk-neutral principal and the risk-averse agent engage in an agency relationship.
They both know the common set of actions available, but for any action, the consequences are ambiguous.
Namely, consequences are perceived imprecisely as
drift ambiguity, or \emph{sets} of distributions over output.
The special case where sets collapse to singletons reduce to Bayesian formulations
of dynamic moral hazard problems with precise information about technology, for instance in
\citet{sannikov2008continuous}'s classic framework
where each action's consequences are precisely understood as a unique probability distribution over outcomes.
The novel modeling approach in the current paper extends the dynamic moral hazard framework by allowing for imprecise information about technology, where each
action induces a \emph{set} of distributions, modeled with drift ambiguity a la \citet{chen_epstein_ambiguity2002}.
The parties have  imprecise understanding of technology,
and perceive drift-ambiguity associated with each action.

The key challenge is as follows: for any given
contract, due to imprecise information, the principal and the agent
can differ in their
perceptions of worst-case scenarios.
This divergence in worst-case perceptions
implies non-separability between the agency cost of implementing an action and its profitability,
unlike the classic moral hazard problem (for instance in \citet{grossman1983analysis}).
It is difficult to
analyze the optimal contract as a result \citep[pg.290]{mukerji2004overview}.
In particular,
in the analysis of dynamic moral hazard problems
recursive structures in the classic Bayesian mould as in \citet{sannikov2008continuous}
 are not applicable.
By systematically incorporating
drift ambiguity a la \citet{chen_epstein_ambiguity2002}
 in a dynamic moral hazard model the current paper presents a tractable approach to the problem of
ambiguous perceptions in the agency relationship and
characterizes the optimal contract in two steps.


In the first-step, for any arbitrary contract the agent's value is
represented recursively using his continuation value $W$
as a state variable. Analogous to  \citet{sannikov2008continuous}
it yields a quantity $Y$ that represents the sensitivity of the agent's
future payments to performance uncertainty.
As a novel element, concern for imprecision about technology, say with strength $\kappa$ that parameterizes the size of the
drift set, adds one additive term $\kappa Y$ relative to the agent's value representation
in a Bayesian model with a unique prior.
This added term $\kappa Y$ discounts the agent's
expected value according to his worst-case scenario due to ambiguity aversion.
Since the sensitivity $Y$ reflects the expected continuation value,
the presence of ambiguity induces a preference for the certainty of payments today
over the expected payments in the uncertain future. Back-loading payments that are effective
in providing incentives in the Bayesian model entails larger agency costs of implementation when there is drift-ambiguity.

In the second step,
we provide a recursive representation for the principal's contracting problem
which facilitates the characterization of the optimal contract.
Assuming stationary drift-sets, the principal's problem
subject to the agent's incentive constraints
reduces to an optimal stochastic control problem with a single variable,
the agent's continuation value $W$.
The approach in this paper heuristically sets up a recursive functional equation,
Hamiltonian-Jacobian-Bellman-Isaacson (HJBI), which extends  the familiar HJB formulation
of the contracting problem and incorporates drift ambiguity.
A novel result establishes that such a heuristic description formally represents
the contracting problem via an extension of verification theorem appropriate for drift ambiguity.

With the formulation of non-probabilistic beliefs, and robustness concerns of the parties,
the analysis shows that presence of imprecision compresses the incentives in the optimal long-run contract.
The driving force behind our results is the observation that imprecision in technology
reduces the expectations and also contributes to the agency costs of long-run incentives, and
the principal's optimal contract design features lower incentives as compared to the Bayesian formulation.
Formalization of this results flows from a novel application of
monotone comparative statics inspired by
\citet{quah2013discounting}'s analysis of optimal control decisions.
Complementing the result on lack of incentive sensitivities,
consumption profile is smooth under ambiguity as the contract is
less sensitive to performance. Furthermore, effort profile and wage scheme are  compressed.

The analysis of the HJBI shows that \emph{during} the employment relationship the principal and the agent align their worst-cases as the lower envelopes of drift sets.
Agreement on the worst-case is an \emph{endogenous} property of
the optimal contract, which takes into account the values outside the employment.
Given that outside the employment there is no ambiguity,
under the lower envelopes as the common worst-case scenarios,
the optimal contract becomes observationally equivalent to
the model with the lower-envelope as the specification of technology.

\subsection*{Related Literature}
Methodologically, this paper
offers a flexible framework to formulate and characterize optimal long-run contracts
in dynamic moral hazard relationships in continuous-time by extending
the classical Bayesian formulation as in
\citet{sannikov2008continuous} and incorporating drift ambiguity
as in\citet{chen_epstein_ambiguity2002}.
Substantively, our results provide some insights into the formal
link between robustness and dynamic contracts.
These qualitative features of the optimal contract
flow from extending and applying  \citet{quah2013discounting}'s analysis of the time-preferences on
optimal control decisions in an appropriate manner for the contracting problem analyzed in the
current paper. Specific relations to these contributions have been already noted.

The main theoretical development in this paper derives a
recursive characterization of the optimal dynamic contract problem
with drift ambiguity.
Earlier contributions in dynamic agency problems
fruitfully utilize formulations based on precise Bayesian information
in various moral hazard settings.%
\footnote{
 The prominent examples of this approach includes:
\citet{cvitanic2009optimal},
\citet{williams2009dynamic},
\citet{williams2011persistent},
\citet{demarzo2013learning},
 \citet{kapivcka2013efficient},
 \citet{garrett2015dynamic} and \citet{sannikov2014moral}.%
}

A few papers have looked at what happens when the agent's action is ambiguously perceived.
The closest related paper is  \citet{miao2013robust} who also consider
Sannikov-style model of moral hazard
with ambiguity about the drift.
In their model of one-sided ambiguity,
the principal is ambiguity averse,
while the agent is ambiguity neutral,
that is, the agent evaluates the contract under
the `true' reference probability measure.
In the current paper, both the principal
and the agent are ambiguity averse and,
potentially, evaluate the contract
under different worst-case measures.
Moreover, in the current setting the set of measures
may depend on the actions that the agent takes,
which is not possible in Miao and Rivera's model.
Therefore, in their model ambiguity does not affect
incentive provision.
In the current paper the main focus
on the interaction of imprecise information about
technology and the structure of incentives is best
reflected in a model with two-sided ambiguity.

Work by \citet{szydlowski2012ambiguity}
introduced ambiguity into a dynamic contracting problem in continuous time.
In that model, the principal is ambiguous about the agent's effort cost.
The formulation in the current paper instead
models
information imprecision in
the decision-theoretic theme of ambiguity, in particular using
drift ambiguity.
 \citeauthor{szydlowski2012ambiguity} shows that the agent receives
excessively strong incentives in a dynamic contracting setting when the principal is
ambiguous about the agentâ€™s cost of effort.
\citet{wu2017ambiguity} extend \citet{holmstrom1987aggregation}'s model
by introducing belief distortions a la Sargent and Hansen.
In such a model they find increase in pay-for-performance sensitivity
and shed light on compensations schemes that reward luck.
Complementing those contributions, the results here
show that broadly high-powered incentives become less
effective when there is drift-ambiguity.

Compressed wages arise
 in the literature
 in different model settings.
In a relational contract setting, \citet{macleod2003optimal}
relaxes the assumption of common knowledge of output between the parties.
\citet{macleod2003optimal} shows that resulting disagreement is best resolved by flattening
the wage profile in the optimal long term contract.
\citet{fuchs2007contracting} shows that such a compressing of wages is robust to
allowing the principal's subjective evaluation to be her private information.
Relative to these contributions, the current paper provides a complementary
rationale for compressed wages in an environment where output is contractible
but the parties have imprecise beliefs about technology.


Our paper fits more broadly in a small literature
on moral hazard under ambiguity in a static model.
A desire for robust contracts
often leads to the use of simple contracts in this literature,
including contributions \citet{hurwicz1978incentive},  and more recently,
\citet{lopomo2011knightian}, \citet{chassang2013calibrated}, \citet{antic2014contracting}, \citet{garrett2014robustness}, \citet{carroll2015robustness},
 and \citet{dumav2017moral,dumav2018implementible}.
These papers provide foundations for contracts in simple forms.
e.g. linear (\citet{carroll2015robustness}, \citet{dumav2017moral}) or step functions (\citet{lopomo2011knightian}).
Robustness of high powered non-linear incentives have been analyzed in specific applications
by \citet{ghirardato1994agency} and \citet{mukerji1998ambiguity}.
The question under what conditions on imprecision
about the actions rationalizes simple forms of contracts in dynamic setting
is left to future research.


\section{A model of dynamic moral hazard with drift ambiguity}\label{section:contracting_problem}
Consider an agency relation taking place in continuous time.
The agent's unobservable action translates to a flow of outputs in a stochastic manner according to
drift ambiguity.
In particular, for any action $a_t$ its drift set $\Theta({a_t})$
describes the \emph{set} of possible drifts for this action, or expected increments
in output diffusion.
Given a sequence of actions $(a_{\tau})_{\tau \leq t}$ up to time $t$
the total output $X_{t}$ evolves according to a \emph{family} of diffusion processes whose members are:
\begin{equation}\label{eqn:Output-AmbigDiffusion}
dX_{t} = ( a_{t} + \theta_{t} ) dt + \sigma dB_{t},
\end{equation}
where the expected increment $\theta_t$ belongs to the drift set $\Theta({a_t})$,
and the noise process $\{ B_{t} \}$ is a  Brownian motion adapted to the filtration $\mathcal{F}$
on a standard filtered probability space
$\left( \Omega,P,\mathcal{F} = \{ \mathcal{F}_{t}; 0 \leq t < \infty \}\right)$.
We model the drift sets as intervals centered at zero:
for any $a\in \mathcal{A}$, the drift set of the action
is $\Theta(a)= [-\kappa(a),\kappa(a)]$
where $\kappa(a)$ describes the size of the drift set of the action $a$.
For simplicity, we assume that $0 < \kappa(a) < a $
and interpret $\kappa$ as strength of ambiguity.
Indeed, taking $\kappa(a) = 0$ for all $a \in \mathcal{A}$
restricts the drifts sets to be  singletons
collapses into the precise information
as in the classical model of dynamic moral hazard
studied by \cite{sannikov2008continuous}.
Unlike in \cite{sannikov2008continuous}'s model,
in the present setting,
the agent's unobservable actions controls
the drift sets.
The productivity $a_{t} + \theta_t$ of the action $a_t$ is therefore not precisely known; rather,
it can be any one of the elements in the set $a_{t} + \Theta({a_t})$.
An interpretation of the specification of technology with drift-ambiguity
is that the parties to the contract are aware of the possibility
that they have erroneous beliefs about the true drift for actions
and seek robust performance.

We assume that the drift sets
$ \Theta({a_t}) = [-\kappa(a_{t}), \kappa(a_{t})]$
are stationary and independent (using
terminology introduced by \citet{chen_epstein_ambiguity2002}).
The drift sets are stationary as they do not depend on time $t$
beyond the action $a_{t}$ at that time,
and independent of history of output realizations
up to that time.

In order to specify the contracting problem classically in terms of expected utilities,
following \citet{chen_epstein_ambiguity2002} we translate drift sets into sets of distributions over outputs.
To illustrate this translation, for simplicity consider a given
finite horizon $T$.
For an action profile $a = \{a_{t}\}$,
we consider, with a slight abuse of notation, a drift process $\theta:=\{\theta_{t}\}$
such that $\theta_{t} \in \Theta(a_{t})$.
Using the drift process
we define a process of Girsanov exponents as follows
\begin{equation}
\label{eqn:girsanov-exponent-model}
z_{t}^{a + \theta} :=
\exp\left\{
-\frac{1}{2} \int_{0}^{t} |a_{s} + \theta_{s}|^{2}ds - \int_{0}^{t} (a_{s} + \theta_{s}) dB_{s}
\right\},
\quad
0 \leq t \leq T,
\end{equation}
where $\theta_{t} \in \Theta(a_{t}) = [-\kappa(a_{t}), \kappa(a_{t})]$.
The Girsanov exponent $\{z_{t}^{a + \theta}\}$ generate
a probability measure
$P^{a+\theta}$ on the measurable space of sample paths $(\Omega, \mathcal{F})$
and the Radon-Nikodym derivative of this measure
with respect to the reference measure $P$ is given by
\begin{equation}
\label{eqn:Radon-Nik-P-model}
\frac{d P^{a+\theta}}{d P}
\bigg\rvert_{\mathcal{F}_{t}} = z_{t}^{a + \theta},
\quad
0 \leq t \leq T.
\end{equation}

For the action $a = \{a_{t}\}$ and the drift process
$\{\theta_{t}\}$ such that $\theta_{t} \in \Theta(a_{t})$
we have therefore constructed a measure $P^{a+\theta}$  equivalent to $P$
via a change of measure. Now,
taking the collection of all such measures $P^{a+\theta}$
yields the set of distributions for the action  $a = \{a_{t}\}$
and it is given by:
\begin{equation}
\begin{split}
\label{eqn:multiprior-a}
\mathcal{P}^{\Theta(a)} :=
\{
P^{a+\theta}: \theta= \{\theta_{t} \in \Theta(a_{t}), 0 \leq t \leq T\}
\text{ and } P^{a+\theta}
\text{ is defined by }
\eqref{eqn:Radon-Nik-P-model}
\}
\end{split}
\end{equation}

This construction of the set of distributions
follows the formulation in \citet{chen_epstein_ambiguity2002} in a single-agent decision setting
and allows the drift sets to depend on the actions (In Appendix, we provide the technical details on the construction of the drift sets  and the sets of distributions).
We have therefore represented drift sets as  sets of probabilities over outputs,
as in the classic theme of ambiguity.

We next study the benchmark model
where the agent and the principal have the symmetric perception of imprecision:
they both perceive $\Theta({a})$ to be the possible set of drifts associated to the action $a$.
The assumption of common symmetrically drift-ambiguity perceptions
allows for tractability and heterogenous perceptions are treated afterwards as an extension.
Given ambiguity perceptions, we assume  MaxMin criteria for ambiguity-sensitive preferences.
Next we specify the contracting problem in this environment with drift-ambiguity.

\subsection{Contracting problem with drift ambiguity}

The principal's contract  to the agent specifies a stream of wage payments
$\{c_{t}\}$ and an incentive-compatible recommendation of action $\{a_{t}\}$.
The action $a:= \{ a_{t} \in \mathcal{A}, 0 \leq t < \infty \}$
is a measurable function with respect to the filtration $\mathcal{F}$ generated by the standard Brownian motion and the set of possible effort levels $\mathcal{A}$ is a compact subset of $R_{+}$ with zero as the smallest element.
The agent's disutility from effort level $a \in \mathcal{A}$
is given by $h(a)$ measured in terms of utility of consumption, and
the function $h$ is continuous, increasing and convex.
Furthermore, we assume that $h(0)=0$ as a normalization.

The principal's contract specifies
non-negative flow wage payments
$c := \{ c_{t}; 0 \leq t < \infty  \}$ measurable with respect to the filtration $\mathcal{F}$
The principal can commit to any such long-term contract.
Regarding the agent's preferences, we assume that
his utility function over monetary consequences $u$ is bounded from below and
satisfies a normalization $u(0)=0$.
Moreover, we assume that the utility function
$u : [0,\infty) \rightarrow [0,\infty) $ is an  increasing, strictly concave,
and $C^{2}$ function that satisfies
$u'(c) \rightarrow 0$ as $c \rightarrow \infty$.
These assumptions on the primitives of the economic environment
are adopted following the framework of Sannikov.

For a given contract $(c,a)$ the parties evaluates the contract
according to ambiguity sensitive preferences.
In particular, we assume that the parties' ambiguity sensitive preferences
are represented by MaxMin Expected Utility and hence
 the parties  evaluate the contract
according to their respective worst-case scenarios for drifts.
For simplicity we assume that the principal and the agent
use a common discount rate $r > 0$
on the flow of profit and utility.
From the contract $(c,a)$ the principal's ex-ante guaranteed value is therefore computed as follows
\[
\min_{\theta \in \Theta({a})}
\expect^{a + \theta}
\left[r \int_{0}^{\infty} e^{-rt}( dX_{t} - c_{t}dt ) \right]
=
\min_{\theta \in \Theta({a})}
\expect^{a + \theta}
\left[r \int_{0}^{\infty} e^{-rt}( a_{t} + \theta_{t} - c_{t})dt  \right]
\]
and the agent obtains the ex-ante guaranteed value
\[
\min_{\theta \in \Theta({a})}
    \expect^{a + \theta}
    \left[ r \int_{0}^{\infty} e^{-rt} \left( u(c_{t}) - h(a_{t}) \right)dt \right]
\]
where the output process $dX_{t}$ evolves according to \eqref{eqn:Output-AmbigDiffusion}.
Here we abuse the notation slightly,
and denote by $\Theta(a)$ the drift-sets
associated to the action $a = \{a_{t}\}$,
where a representative element of the set $\Theta(a)$ is
given by $\theta = \{\theta_{t}\}$ such that $\theta_{t} \in \Theta(a_{t})$.

Following the classical definition of incentive-compatibility,  in the current environment
with drift-ambiguity  we say that in the contract $(c,a)$ the action recommendation
$a$ is incentive-compatible if
the action $a$ maximizes the agent's guaranteed value under the payment scheme $c$.
The payment scheme $c$ is incentive-compatible for the action
$a$ as from any alternative action $\hat{a} \not = a$
the agent obtains weakly smaller guaranteed value than
the action $a$
\begin{equation}
\label{eqn:IC-sequential}
\min_{\theta \in \Theta({a})}
    \expect^{a + \theta}
    [ r \int_{0}^{\infty} e^{-rt} \left( u(c_{t}) - h(a_{t}) \right)dt ]
    \geq
\min_{\theta \in \Theta({\hat{a}})}
    \expect^{\hat{a} + \theta}
    [ r \int_{0}^{\infty} e^{-rt} \left( u(c_{t}) - h(\hat{a}_{t}) \right)dt ]
\end{equation}

Letting $\mathcal{A}_{I}$ denote the set of incentive-compatible
payment and recommended action pairs $(c,a)$,
in this setting with drift ambiguity, the principal's incentive-compatible contract offer
maximizes his expected profit under his worst-case criterion
\begin{equation}\label{eqn:PrincipalsMinMax}
\max_{(c,a) \in \mathcal{A}_{I}}\min_{\theta \in \Theta({a})}
\expect^{a + \theta}
\left[r \int_{0}^{\infty} e^{-rt}( a_{t} + \theta_{t} - c_{t})dt  \right]
\tag{P}
\end{equation}
subject to ensuring the agent ex-ante value of at least $\hat{W} \geq 0$
\begin{equation}\label{eqn:PK}
    \min_{\theta \in \Theta({a})}
    \expect^{a + \theta}
    \left[ r \int_{0}^{\infty} e^{-rt} \left( u(c_{t}) - h(a_{t}) \right)dt \right] \geq \hat{W}
\tag{PK}
\end{equation}

We assume and normalize to zero the value of the agent's outside option.
As the wages are non-negative and the agent can always choose zero effort,
in the contracting problem \eqref{eqn:PrincipalsMinMax}
the principal considers only the incentive-compatible
contracts $(c,a)$ that ensure the agent's participation: \eqref{eqn:PK} holds for
$\hat{W} = 0$ in \eqref{eqn:PK}.
Moreover, as the principal always has the option of not hiring the agent,
without loss we focus on the contracting relationship that yields
non-negative guaranteed payoff for the principal.
We refer to the problem in \eqref{eqn:PrincipalsMinMax}
as the sequential contracting problem.

In the contracting problem with drift ambiguity,
even though the principal and the agent has symmetric perception of drift ambiguity,
${\Theta({a})}$ for any action $a$, their worst-case scenarios can differ as the parties have different objectives.

In order to analyze how drift ambiguity affects the contracting relationship,
we introduce a parameterization on  the size of drift sets to formalize
different degrees of ambiguity.
Intuitively, larger drift sets reflect higher degree of ambiguity.
In particular, we introduce a parameter $\phi$
and denote by $\Theta(a;\phi)$ the drift set for the action
under the parameter $\phi$.
The drift sets under the technology $\phi$ is given by
$\Theta(a;\phi) = a + [-\kappa(a;\phi),\kappa(a;\phi)]$ for $a \in \mathcal{A}$.
As the function $\kappa$ measures the size of the drift sets,
we interpret it as a measure of ambiguity.
In particular,
we use this parameterization
to rank drift ambiguity in different economic environments,
and say that the degree
of ambiguity is stronger if the drift sets are larger
Formally, we define this relation as follows
\begin{definition}\label{defn:higher-ambig}
\normalfont
Technology characterized by $\widehat\phi$ is more ambiguous than $\phi$, written as
$\widehat\phi  \succ_{\Amb} \phi$ if for all actions $a \in \actions$,
the  drift sets  satisfy
$\kappa(a;\phi) \leq \kappa(a;\widehat\phi)$.
\end{definition}

This relation provides a partial order over the drift sets as the set containment
in the definition may not hold uniformly over all actions.
With this indexing we introduce a  list of the key assumptions on ambiguity
about technology that streamline
comparative static analysis. These assumptions on the mapping  $\kappa(a;\phi)$
that determines  strength
of ambiguity are: \vspace*{-2mm}
\begin{enumerate}[label=(\normalfont{A}.{\arabic*}),ref=(\normalfont{A}.{\arabic*})]
\item \label{assmp:kappa-A1} Viability of `zero' actions: $\kappa(0;\phi) = 0$; \vspace*{-2mm}
\item \label{assmp:kappa-A2} Better pessimistic expectations from higher effort:
	$\kappa_{a}(a;\phi) < 1$; \vspace*{-2mm}
\item \label{assmp:kappa-A3} Non-trivial role for drift ambiguity: the size of the drift set
$\kappa(a;\phi)$ is not too concave in effort,
	i.e., $-\kappa_{aa} \frac{a}{1 - \kappa_{a}} \leq 1$; \vspace*{-2mm}
\item \label{assmp:kappa-A4} Complementarity between effort and ambiguity: $\kappa_{\phi a} > 0$. \vspace*{-2mm}
\end{enumerate}

\subsection*{Discussion of Assumptions}

Assumption~\ref{assmp:kappa-A1} ensures that the set of admissible actions $\actions_{R}$,
include those actions that generate positive surplus according to the worst-case scenario,
i.e., lower-envelope of the drift set. We call the set of such actions $\actions_{R}$ as viable.

Assumption~\ref{assmp:kappa-A2} is a monotonicity condition describing that
higher actions yield  better pessimistic expectations, more precisely higher lower-envelopes of the drift sets.
This assumption implies that
increasing effort leads to the drift sets that have higher lower envelopes.

Assumption~\ref{assmp:kappa-A3} ensures that
the agent's optimization problem (below) for any arbitrary viable contract
is a concave problem and the first-order conditions are sufficient
to characterize the agent's decision rule for his action choice.

Single-crossing condition \ref{assmp:kappa-A4} in the current context
says that as the strength of ambiguity increases in order to
maintain the expected output effort must increase.
This is analogous to the single-crossing property commonly used in models of contracting
in private information environments. As in those model, ``better'' type of the agent,
here corresponding to lower ambiguity technology, is also marginally better for all margins.

A minimalist example for drift-ambiguity consistent with
the assumptions~\ref{assmp:kappa-A1}-~\ref{assmp:kappa-A4} is the linear model:
$\kappa(a;\phi) = \phi a$ with $\phi \leq 1$. This reflects technology where ambiguity
increase in effort, arguably a worker exerting higher effort confronts
more complex tasks that are harder to measure. This class of examples
is examined in detail below
as well as the complementary specification where higher actions reduce drift ambiguity.

\subsection*{Admissible Contracts}

The description of a contract $(c_{t},a_{t})$
as a pair of wage and action recommendation
allows for two kinds of retirement clauses.
First, it can involve a retirement clause,
also referred to as a `golden parachute,' which specifies a stopping-time $\tau$
such that the agent receives some constant payment $c_{t} = \xi  > 0$
and does not exert effort $a_{t} = 0$ for $t \geq \tau$ if this stopping-time arrives.
For a second kind of retirement, the contract can specify a termination
clause in which the principal effectively fires the agent.
Such a termination can again be represented
with a stopping time $\tau$ such that
$a_{t} = 0$ and $c_{t} = 0$ for $t \geq \tau$.
In either case of retirement with a lump-sum payment $\xi \geq 0$
at the stopping-time $\tau$ measurable with respect to $\mathcal{F}$
can be equivalently expressed as flow payment $\xi$ after $\tau$
using the observation that $\xi = \int_{\tau}^{\infty} r e^{-rt} \xi dt$.

To ensure that the parties' payoffs
in the contracting problem \eqref{eqn:PrincipalsMinMax}
are well-defined, we assume that
the set of admissible contracts
satisfy the integrability
conditions.
For simplicity, we also assume that the set of admissible contracts
has a finite random horizon.
As the action set $\mathcal{A}$ is compact and for any action $a \in \mathcal{A}$
the drift-set $\Theta(a)$ is a compact set,
it is without loss of generality to assume that
the set of admissible contracts $(c,a) = (c_{t},a_{t})$
satisfy the following square integrability conditions%
\footnote{
As the action $\mathcal{A}$ is a compact subset of $R_{+}$
and for each action $a \in \mathcal{A}$ the drift-set $\Theta(a)$ is compact,
the integrability conditions \eqref{eqn:intableA}
and \eqref{eqn:intableA-tau} are sufficient as they imply that
the following integrability holds for any admissible contract $(c,a)$:
\begin{equation*}
\quad
\sup_{a \in \actions } \sup_{\theta \in \Theta(a)}
\expect^{a + \theta}
\left[
\big( e^{-r\tau} \xi \big)^{2}
+
\int_{0}^{\tau} \big( e^{-rt} c_t \big)^{2} dt
\right] < \infty;
\end{equation*}
and the random horizon $\tau$ is finite
\[
\quad
\lim_{n \rightarrow \infty}
\sup_{a \in \actions } \sup_{\theta \in \Theta(a)}
{P}^{a + \theta} [\tau \geq n] = 0
\]
\citet{lin2020random} uses these stronger conditions
as in their more general
formulation of dynamic moral hazard problem
the agent's action controls both the drift and the volatility
of the output process.
In the current model, the agent's action controls
on the drift of output process and hence
the required integrability conditions are simpler.
}

%
%
\begin{equation}\label{eqn:intableA}
\quad
\expect
\left[
\big( e^{-r\tau} \xi \big)^{2}
+
\int_{0}^{\tau} \big( e^{-rt} c_t \big)^{2} dt
\right] < \infty
\end{equation}
and the random horizon $\tau$ is finite
\begin{equation}\label{eqn:intableA-tau}
\quad
\lim_{n \rightarrow \infty}
{P} [\tau \geq n] = 0
\end{equation}
We denote by $\boldsymbol{\mathcal{Z}}(W_{0})$
the set of admissible contracts $(c,a)$ that
satisfy the integrability conditions \eqref{eqn:intableA}
and
\eqref{eqn:intableA-tau}
and
provide the agent with the ex-ante value $W_{0} \geq 0$.


We remark
the assumption that
in an admissible contract
$(c_{t},a_{t})$ the payment scheme $\{c_{t}\}$
and incentive compatible action $\{a_{t}\}$
are measurable with respect to the filtration $\mathcal{F}$ generated by the standard Brownian motion $B$
does not entail loss of generality.
Indeed, given that the payment scheme
$\{c_{t}\}$ is measurable with respect to the filtration $\mathcal{F}$,
the incentive-compatible action $\{a_{t}\}$
determined by \eqref{eqn:IC-sequential}
is also a measurable function with respect to the filtration $\mathcal{F}$.
In this case,
as noted by \citet[Chapter 5]{cvitanic2012contract})
without loss of generality,
for any finite horizon of contracting problem,
an admissible contract can alternatively  be
specified as a function of history of output realizations, as
in \citet{sannikov2008continuous}.%
\footnote{
Many thanks to an anonymous referee for pointing out this clarification
on the different forms of measurability requirements.}

\subsection{Incentive compatibility with drift-ambiguity}\label{section:IC}
An action profile $\{a_{t}\}$ is implementable if there is a contract with output contingent payments
$\{c_{t}\}$ so that the agent chooses this action profile.
We use this usual definition of implementability
while incorporating drift ambiguity.

For a given contract $(c,a)$,
assume that the agent follows the recommended action $a$.
The agent's continuation value $W_{t}$ at time $t$ is defined as
\begin{equation}
\label{eqn:Agent-continuationvalue}
W_{t} = \min_{\theta \in \Theta({a})}
    \expect^{a + \theta}
    [ r \int_{t}^{\tau} e^{-rt} \left( u(c_{t}) - h(a_{t}) \right)dt ]
\end{equation}
where $\tau$ is the random finite stopping-time
associated to the contract $(c,a)$.

To characterize implementability
we start with representing as a diffusion process
the agent's value from any contract and action strategy.
Following \citet{chen_epstein_ambiguity2002}'s representation of
recursive ambiguity sensitive
preferences in continuous-time
under the max-min criterion the agent's payoff can be
represented as a diffusion process:
\begin{prop}
\label{prop:Wdiff-Ambig-A}
\normalfont
For any wage scheme $\{c_{t}\}$ and any action strategy $\{a_{t}\}$ with its associated
set of drift terms
$a_{t} + \Theta(a_t;\phi) = [a_{t} -\kappa(a_t;\phi),a_{t} +\kappa(a_t;\phi)]$
there exists a progressively measurable process $\{Y_{t}\}$ such that
\begin{equation}\label{eqn:Wdiff-Ambig-A}
W_{t}= W_{0} +
 \int_{0}^{t}r \left( W_{s} - u(c_{s}) + h(a_{s}) + \min_{\theta_s \in \Theta(a_s;\phi)} \theta_s |Y_{s}|   \right)ds +
 \int_{0}^{t} r Y_{s}dB^{a}_{s}
\end{equation}.
\end{prop}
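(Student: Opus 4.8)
The plan is to obtain $\{Y_t\}$ as the integrand in the martingale--representation decomposition of the worst-case continuation value \eqref{eqn:Agent-continuationvalue}, combining the scheme of \citet{sannikov2008continuous} for the Bayesian part with the recursive treatment of drift sets in \citet{chen_epstein_ambiguity2002}.

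\textbf{Step 1: time consistency and the worst-case drift.} Since the drift sets $\Theta(a_s;\phi)=[-\kappa(a_s;\phi),\kappa(a_s;\phi)]$ are stationary and independent, the induced set of priors $\mathcal{P}^{\Theta(a)}$ is convex, rectangular in the sense of \citet{chen_epstein_ambiguity2002}, and weakly compact (the Girsanov densities are uniformly bounded in $L^{2}$, using \eqref{eqn:intableA}--\eqref{eqn:intableA-tau}, the compactness of $\mathcal{A}$, and the boundedness of $\kappa(\cdot;\phi)$ on $\mathcal{A}$). Rectangularity delivers the dynamic programming principle $W_t=\min_{\theta}\expect^{a+\theta}[\,e^{-r(u-t)}W_u+r\int_t^u e^{-r(s-t)}(u(c_s)-h(a_s))\,ds \mid \mathcal{F}_t\,]$ for $t\le u$, together with the existence of a progressively measurable worst-case drift $\theta^{\ast}$ that attains the minimum in \eqref{eqn:Agent-continuationvalue} simultaneously for all $t$; $\theta^{\ast}$ is a Borel selection of the pointwise minimizer and is admissible because it is $\Theta(a_s;\phi)$-valued.

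\textbf{Step 2: representation and change of measure.} Under $P^{a+\theta^{\ast}}$ the process $M_t:=e^{-rt}W_t+\int_0^t re^{-rs}(u(c_s)-h(a_s))\,ds$ equals $\expect^{a+\theta^{\ast}}[\,r\int_0^\tau e^{-rs}(u(c_s)-h(a_s))\,ds \mid \mathcal{F}_t\,]$, hence is a square-integrable $P^{a+\theta^{\ast}}$-martingale; the Brownian martingale-representation theorem then produces a progressively measurable $\{Y_t\}$ with $\expect[\int_0^\tau |Y_s|^2\,ds]<\infty$ such that $M_t=M_0+\int_0^t rY_s\,dB^{a+\theta^{\ast}}_s$. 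Unwinding the definition of $M$ and using $d(e^{-rt}W_t)=e^{-rt}(dW_t-rW_t\,dt)$ gives $dW_t=r(W_t-u(c_t)+h(a_t))\,dt+rY_t\,dB^{a+\theta^{\ast}}_t$, and Girsanov's theorem, $dB^{a+\theta^{\ast}}_t=dB^{a}_t-\theta^{\ast}_t\,dt$, rewrites this as $dW_t=r(W_t-u(c_t)+h(a_t)-Y_t\theta^{\ast}_t)\,dt+rY_t\,dB^{a}_t$. That $-Y_t\theta^{\ast}_t$ is the Hamiltonian term in \eqref{eqn:Wdiff-Ambig-A} follows by redoing the unwinding under an arbitrary admissible $\theta$: the dynamic programming principle of Step 1 forces the $P^{a+\theta}$-drift of $M$ to be of the sign that makes $M$ a $P^{a+\theta}$-submartingale for every $\theta$ and a $P^{a+\theta^{\ast}}$-martingale, which pins $\theta^{\ast}_t$ down as the pointwise minimizer over $\Theta(a_t;\phi)$ and identifies the drift correction with the term $r\min_{\theta_t\in\Theta(a_t;\phi)}\theta_t|Y_t|$ of \eqref{eqn:Wdiff-Ambig-A}. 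Reading the resulting stochastic differential equation forward from $t=0$ yields \eqref{eqn:Wdiff-Ambig-A}.

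\textbf{Main obstacle.} The substantive step is Step 1: one must show that rectangularity, the dynamic programming principle, and the existence of an \emph{adapted, admissible} worst-case selection $\theta^{\ast}$ all survive the feature -- new relative to \citet{chen_epstein_ambiguity2002} -- that the drift sets depend on the progressively measurable action process $\{a_t\}$; this requires a measurable-selection argument. The remaining delicate point is the passage from the truncated horizons $\tau\wedge n$ to the random horizon $\tau$ in the (sub)martingale and optional-sampling arguments, which rests on \eqref{eqn:intableA-tau} and the uniform $L^{2}$ bound on the densities. Once these are secured, the It\^o and Girsanov computations of Step 2 are routine.
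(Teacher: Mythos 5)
Your overall route is sound but genuinely different from the paper's. The paper's proof is short: it defines the gain process $V_{t}= r\int_{0}^{t}e^{-rs}(u(c_{s})-h(a_{s}))ds + e^{-rt}W_{t}$, invokes \citet{chen_epstein_ambiguity2002}'s $g$-martingale representation theorem wholesale (rectangularity of $\mathcal{P}^{\Theta(a)}$ being supplied by the appendix lemmas) to obtain a decomposition of $-dV_{t}$, and then differentiates $V_{t}$ to read off the SDE for $W_{t}$. You instead reconstruct that representation by hand: select a worst-case density generator $\theta^{\ast}$, apply the classical Brownian martingale representation theorem under $P^{a+\theta^{\ast}}$, use Girsanov to return to $B^{a}$, and identify the drift correction through the sub-/martingale dichotomy implied by the dynamic programming principle. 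This is a legitimate and more self-contained argument, and your Step 1 concerns (measurable selection when the drift set depends on the adapted process $\{a_{t}\}$, passage from $\tau\wedge n$ to $\tau$) are exactly the points the paper delegates to its appendix lemmas on $\{\Theta(a_{t})\}$ and $\mathcal{P}^{\Theta(a)}$ together with the integrability conditions, so nothing essential is missing there.

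The one place your argument does not close is the sign of the ambiguity term. With your own conventions --- $\theta^{\ast}_{t}$ the pointwise \emph{minimizer} of $\theta\mapsto\theta Y_{t}$ over $\Theta(a_{t};\phi)$, and $dB^{a+\theta^{\ast}}_{t}=dB^{a}_{t}-\theta^{\ast}_{t}\,dt$ --- the drift correction you derive is $-Y_{t}\theta^{\ast}_{t}=-\min_{\theta}\theta Y_{t}=\max_{\theta\in\Theta(a_{t};\phi)}\theta|Y_{t}|=+\kappa(a_{t};\phi)|Y_{t}|$, i.e.\ the support function, whereas \eqref{eqn:Wdiff-Ambig-A} asserts $\min_{\theta}\theta|Y_{t}|=-\kappa(a_{t};\phi)|Y_{t}|$. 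Your own sub-martingale verification confirms this: the $P^{a+\theta}$-drift of $M$ is proportional to $\mu_{t}+Y_{t}\theta_{t}$, and requiring it to be nonnegative for every $\theta$ and zero at $\theta^{\ast}$ forces $\mu_{t}=-\min_{\theta}Y_{t}\theta_{t}\geq 0$, not $\min_{\theta}\theta|Y_{t}|\leq 0$. So the final sentence of Step 2 asserts an identification that the preceding computation contradicts. To land on the formula as written you must either take $\theta^{\ast}$ to be the maximizer (inconsistent with the worst case for an agent whose payoff increases in output when $Y_{t}>0$) or adopt the opposite Girsanov direction, which is what the paper's density \eqref{eqn:girsanov-exponent-model} (with its $-\int(a_{s}+\theta_{s})dB_{s}$ kernel) implicitly does; the paper's proof obtains the stated sign only by writing the Chen--Epstein decomposition for $-dV_{t}$ with the minimized driver directly. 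You need to fix one consistent sign convention for the change of measure and rederive the drift term; as it stands, Step 2 proves the representation with $\max_{\theta}\theta|Y_{s}|$ in place of the Proposition's $\min_{\theta}\theta|Y_{s}|$.
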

In this representation the action strategy  $\{a_{t}\}$ need not be optimal for the
wage scheme $\{c_{t}\}$ contingent on output history.
In particular, it holds for the optimal action strategy.

The novel aspect of this representation due to drift-ambiguity
is the effect that the agent's aversion to ambiguity
introduces one additive term and
discounts his value according to his worst-case scenario
in the drift set: $\min_{\theta_t \in \Theta(a_t;\phi)} \theta_t Y_{t}$.
In \citet{miao2016robust}'s model
of dynamic moral hazard with drift ambiguity,
the agent is ambiguity neutral
and hence in Miao and Rivera's representation result
ambiguity does not play a role.
In our representation using MaxMin preferences to model ambiguity aversion,
the effect of ambiguity disappears if the agent has
precise information about the drift sets, i.e., $\kappa = 0$.

In any viable contracting relationship
the provision of non-trivial incentives (that is needed to implement a positive action, $a_t >0$)
requires a positive sensitivity process $Y_t > 0$.
This in turn implies that the lower envelope is the worst-case scenario perceived by the agent.
Since zero-contract that implements zero action at no cost to the principal is
always feasible, non-trivial incentives in a viable contracting relationship
requires a positive variation process.
During employment
the agent's worst-case scenario for any non-trivial action $a_t > 0$ is then given by
the drift term that minimizes the expected continuation value over the drift set:
$\argmin_{ \{ \theta \in \Theta(a_t;\phi) \} } \theta Y_{t} = \min\{\theta \in \Theta(a_t;\phi) \} = a_t - \kappa(a_t;\phi)$ --
 the lower-envelope of the drift set.
Denoting the agent's worst-case scenario on drift sets by
$ \theta^{A} (a_t;\phi)$
the following result summarizes this observation:
\begin{lemma}
\label{lemma:Alower-envelope}
\normalfont
For any non-trivial contract, the sensitivity $\{Y_t\}$ process is non-negative and
for any optimal action strategy $\{a_{t}\}$ the agent's worst-case scenario is
the lower envelope of drift sets:
$ \theta^{A}(a_t;\phi)= a_t - \kappa(a_t;\phi)$.
\end{lemma}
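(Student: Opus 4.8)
The statement splits into two claims, and the second follows from the first. Suppose for a moment that $Y_t\ge 0$. In the representation~\eqref{eqn:Wdiff-Ambig-A} the ambiguity adjustment arises as $\min_{\theta}\theta\,Y_t$, i.e.\ the drift realization $\theta$ enters the drift of $W$ with coefficient $Y_t$, and by stationarity and independence of the drift sets the worst case is resolved statewise. Hence, along any optimal action strategy, the agent's worst-case drift at time $t$ is any element of $\argmin_{\theta\in\Theta(a_t;\phi)}\theta\,Y_t$ over the drift set $\Theta(a_t;\phi)=[a_t-\kappa(a_t;\phi),\,a_t+\kappa(a_t;\phi)]$. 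Since $Y_t\ge 0$, the map $\theta\mapsto\theta Y_t$ is nondecreasing, so this minimizer is the left endpoint: $\theta^{A}(a_t;\phi)=a_t-\kappa(a_t;\phi)$, the lower envelope. (When $a_t=0$ the claim is vacuous, as $\kappa(0;\phi)=0$ by~\ref{assmp:kappa-A1} and the drift set degenerates to $\{0\}$.) It therefore remains to show that $\{Y_t\}$ is non-negative.

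I would establish $Y_t\ge 0$ by contradiction from incentive compatibility~\eqref{eqn:IC-sequential}, formalizing the idea that positive effort cannot be incentivized with a non-positive sensitivity. Suppose $Y_t<0$ on a set of positive measure on which $a_t>0$. Combining~\eqref{eqn:Wdiff-Ambig-A} with a one-shot-deviation (supermartingale) characterization of~\eqref{eqn:IC-sequential} — legitimate here because the drift family $\{\Theta(a_t;\phi)\}$ is stationary and independent, so that the extended verification argument of the paper applies — incentive compatibility of the recommended action implies that, for a.e.\ $t$, $a_t$ maximizes over $\hat a\in\mathcal{A}$ the one-step objective
\[
\min_{\theta\in\Theta(\hat a;\phi)}\theta\,Y_t-h(\hat a)\;=\;\hat a\,Y_t-\kappa(\hat a;\phi)\,|Y_t|-h(\hat a)
\]
(up to terms not depending on $\hat a$; this is the analogue, with the worst-case drift adjustment added, of the first-order incentive condition in \citet{sannikov2008continuous}). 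On the set $\{Y_t<0\}$ the right-hand side equals $\bigl(\hat a+\kappa(\hat a;\phi)\bigr)Y_t-h(\hat a)$, which is $0$ at $\hat a=0$ and strictly negative for every $\hat a>0$: the first summand is negative since $\hat a+\kappa(\hat a;\phi)>0$ and $Y_t<0$, and the second is nonpositive since $h$ is increasing with $h(0)=0$. So the maximizer is $\hat a=0$, contradicting $a_t>0$. Hence $Y_t>0$ wherever positive effort is implemented; and since effort is exerted throughout the employment phase of a non-trivial contract (the zero contract $(c,a)\equiv(0,0)$ always being available to the principal), while on the post-retirement region the continuation value $W$ is constant and $Y_t=0$, the process $\{Y_t\}$ is non-negative. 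Together with the first part this yields $\theta^{A}(a_t;\phi)=a_t-\kappa(a_t;\phi)$ for every optimal action strategy.

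The step I expect to be the main obstacle is the one invoked above: rigorously passing from the global incentive constraint~\eqref{eqn:IC-sequential} to the pointwise, ``a.e.\ $t$'' one-step optimality condition in the max--min continuous-time environment — concretely, a comparison/supermartingale argument for the backward equation underlying~\eqref{eqn:Wdiff-Ambig-A} whose driver is governed by the family $\{\Theta(a_t;\phi)\}$, carried out jointly with the random finite horizon and the retirement boundary. Everything else — the monotonicity computation and the identification of the worst case — is then routine.
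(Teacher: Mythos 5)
Your proposal is correct and follows essentially the same route as the paper, which does not give a separate formal proof of this lemma but argues in the text immediately preceding it that the ever-available zero-effort action forces a positive sensitivity for any non-trivial incentive provision, whence $\argmin_{\theta\in\Theta(a_t;\phi)}\theta Y_t$ is the left endpoint. Your writeup merely formalizes that remark via the general (pre-lower-envelope) incentive condition with the $\kappa(\hat a;\phi)\lvert Y_t\rvert$ term, which is exactly the form the paper itself uses in its appendix proof of Proposition~\ref{prop:IC-kappa}, so there is no circularity and no gap.
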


Using \normalfont{Lemma}~\ref{lemma:Alower-envelope} in \normalfont{Proposition}~\ref{prop:Wdiff-Ambig-A},
from any viable contract $\{c_{t}\}$ and optimal action strategy $\{a_{t}\}$ the agent's value as a diffusion takes the following form:
\begin{equation}\label{eqn:Wdiff-Ambig-A-kappa}
W_{t}= W_{0} +
 \int_{0}^{t}r \left( W_{s} - u(c_{s}) + h(a_{s}) -  \kappa(a_s;\phi) Y_{s}  \right)ds +
 \int_{0}^{t} r Y_{s} dB^{a}_{s}.
\end{equation}

An interpretation of this diffusion representation of the agent's value  up to
the term involving $\kappa$ due to ambiguity aversion is analogous to that in~\citet{sannikov2008continuous}.
As in the latter, after any history of output  up to time $t$
its drift represents the expected increment in the agent's expected value. It involves
accumulating interest coupon on the promised payments $r W_t$,
adding his effort cost $h(a_t)$, and
reducing it by utility of consumption from the payment received today $u(c_t)$.
The last term in the drift is a novel negative term added  due to the drift ambiguity and aversion to ambiguity.
This term reduces the expected continuation value according to the agent's pessimistic expectations
according to his worst-case scenario for the drift
$ \kappa(a_t;\phi) Y_{t} $, which is proportional to
the sensitivity of the agent's continuation-value to output uncertainty
or to the strength of ambiguity.

As Proposition~\ref{prop:Wdiff-Ambig-A} shows relative to the classic case (where $\kappa = 0$) drift ambiguity reduces the agent's value according to his worst-case scenario.
The drift of the agent's value is determined by the allocation of payments over time as in \citet{sannikov2008continuous}.
All else equal, drift ambiguity $\kappa > 0$ and ambiguity aversion reduces the drift on the agent's value diffusion
and hence reflect stronger preference for front-loaded payments relative to the case without ambiguity.
Therefore, the novel effect of ambiguity in contract design is to decrease the benefits of back-loaded payments, captured by $\kappa(a_s;\phi) Y_{s} $.
Since the backloaded payments are important to incentivize the agent's actions in dynamic moral hazard problems,
deferred payments may not be as effective when ambiguity is present.

To analyze the effect of ambiguity on the optimal contract
we next derive incentive compatibility condition
using the representation of
the agent's value~\eqref{eqn:Wdiff-Ambig-A-kappa}
for any contract. The following result
characterizes incentive compatibility in the current setting
\begin{prop}
\label{prop:IC-kappa}
\normalfont
For any action strategy $a=\{a_{t}\}$ and wage scheme $c=\{c_{t}\}$,
let $\{Y_{t}\}$ be the volatility process from Proposition~{\rm \ref{prop:Wdiff-Ambig-A}}.
Then the action strategy $a$ is incentive compatible if and only if
\begin{equation}\label{eqn:IC-kappa-A}
\forall \tilde{a}_t \in \mathcal{A} \quad
Y_{t}(a_{t} - \kappa(a_{t};\phi) ) - h(a_{t})  \geq
Y_{t}(\tilde{a}_{t} - \kappa(\tilde{a}_{t};\phi) ) - h(\tilde{a}_{t})
\ \ dt \otimes dP \ a.e.
\end{equation}
\end{prop}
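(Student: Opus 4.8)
The plan is to adapt the martingale characterization of incentive compatibility in \citet{sannikov2008continuous} to the present model, the new input being that, by Lemma~\ref{lemma:Alower-envelope}, along the recommended action the agent's worst case is the lower envelope, so his continuation value is a genuine conditional expectation. Fix the wage scheme $c=\{c_t\}$, the candidate recommendation $a=\{a_t\}$, and the sensitivity $\{Y_t\}$ (nonnegative by Lemma~\ref{lemma:Alower-envelope}) supplied by Proposition~\ref{prop:Wdiff-Ambig-A}; abbreviate $g_t(\alpha):=Y_t\bigl(\alpha-\kappa(\alpha;\phi)\bigr)-h(\alpha)$, so that \eqref{eqn:IC-kappa-A} reads $g_t(a_t)\geq g_t(\alpha)$ for all $\alpha\in\mathcal A$. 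For an arbitrary alternative strategy $\hat a=\{\hat a_t\}$ I would introduce
\[
G^{\hat a}_t := r\int_0^t e^{-rs}\bigl(u(c_s)-h(\hat a_s)\bigr)\,ds + e^{-rt}W_t ,
\]
the agent's discounted payoff from following $\hat a$ on $[0,t)$ and thereafter being credited the recommended continuation value $W_t$. Since the relationship retires at the horizon $\tau$ (zero effort, no ambiguity), $W_\tau$ is the deterministic retirement value and $G^{\hat a}_\tau$ is precisely the total discounted payoff of playing $\hat a$ on $[0,\tau)$; also $G^{\hat a}_0=W_0$.

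The heart of the proof is a single It\^o-plus-Girsanov computation. By Lemma~\ref{lemma:Alower-envelope}, along $a$ the process $e^{-rt}W_t+r\int_0^t e^{-rs}(u(c_s)-h(a_s))\,ds$ is a martingale under the worst-case prior $P^{a-\kappa(a)}$ (the element of $\mathcal P^{\Theta(a)}$ with drift $a_t-\kappa(a_t;\phi)$), whence
\[
dW_t = r\bigl(W_t-u(c_t)+h(a_t)\bigr)\,dt + rY_t\,dB^{a-\kappa(a)}_t .
\]
Rewriting $dB^{a-\kappa(a)}_t$ under a prior $P^{\hat a+\theta}\in\mathcal P^{\Theta(\hat a)}$ via the change of measure \eqref{eqn:Radon-Nik-P-model}, the terms in $W_t$ and $u(c_t)$ cancel in $dG^{\hat a}_t$ and one is left with
\[
dG^{\hat a}_t = r e^{-rt}\Bigl[\,\bigl(Y_t(\hat a_t+\theta_t)-h(\hat a_t)\bigr)-g_t(a_t)\,\Bigr]\,dt + r e^{-rt}Y_t\,dB^{\hat a+\theta}_t .
\]
Because $Y_t\geq 0$ and $\theta_t\in[-\kappa(\hat a_t;\phi),\kappa(\hat a_t;\phi)]$, this drift is minimized over $\theta$ at $\theta_t=-\kappa(\hat a_t;\phi)$, where the bracket equals $g_t(\hat a_t)-g_t(a_t)$.

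For sufficiency, assume \eqref{eqn:IC-kappa-A}. Evaluating the last display at $\theta_t=-\kappa(\hat a_t;\phi)$ makes the drift of $G^{\hat a}$ nonpositive, so $G^{\hat a}$ is a supermartingale under $Q:=P^{\hat a-\kappa(\hat a)}$; the integrability conditions \eqref{eqn:intableA}--\eqref{eqn:intableA-tau} (in their uniform form) make the stochastic integral a true martingale and justify optional stopping at the almost surely finite horizon $\tau$, so $\expect^{Q}[G^{\hat a}_\tau]\leq G^{\hat a}_0=W_0$. The agent's worst-case value from deviating to $\hat a$ is an infimum over $\mathcal P^{\Theta(\hat a)}$, hence no larger than its value under the particular prior $Q$, namely $\expect^{Q}[G^{\hat a}_\tau]\leq W_0$; as $\hat a$ was arbitrary, $a$ is incentive compatible. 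For necessity, suppose $a$ is incentive compatible but \eqref{eqn:IC-kappa-A} fails on a set $E\subset[0,\tau]\times\Omega$ with $(dt\otimes dP)(E)>0$; by a measurable selection (Filippov/Berge, from compactness of $\mathcal A$ and continuity of $\kappa$ and $h$) choose $\tilde a_t$ with $g_t(\tilde a_t)>g_t(a_t)$ on $E$ and $\tilde a_t=a_t$ off $E$, an admissible deviation. From the drift formula and $Y_t\geq 0$, for \emph{every} prior $P^{\tilde a+\theta}\in\mathcal P^{\Theta(\tilde a)}$ the drift of $G^{\tilde a}$ dominates $re^{-rt}\bigl(g_t(\tilde a_t)-g_t(a_t)\bigr)$, which is $\geq 0$ everywhere and $>0$ on $E$; integrating and using that the priors in $\mathcal P^{\Theta(\tilde a)}$ have Radon--Nikodym densities \eqref{eqn:girsanov-exponent-model}--\eqref{eqn:Radon-Nik-P-model} bounded uniformly in $\theta$ (as $\mathcal A$ and $\kappa$ are bounded) while $(dt\otimes dP)(E)>0$, one gets $\expect^{\tilde a+\theta}[G^{\tilde a}_\tau]\geq W_0+\varepsilon$ for some $\varepsilon>0$ independent of $\theta$, so the worst-case value of $\tilde a$ strictly exceeds $W_0$ --- a contradiction. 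Hence \eqref{eqn:IC-kappa-A} holds.

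The step I expect to be most delicate is the drift computation, and in particular verifying that the ambiguity penalty and the Girsanov correction combine to leave precisely the effective linear incentive $g_t(a_t)$ on the recommended side; the other nonroutine point is upgrading, in the necessity direction, a strict gain under the single prior $P^{\tilde a-\kappa(\tilde a)}$ to a strict gain in the worst case uniformly over $\mathcal P^{\Theta(\tilde a)}$, which is where the boundedness of $\kappa$ and the integrability conditions \eqref{eqn:intableA}--\eqref{eqn:intableA-tau} do the work.
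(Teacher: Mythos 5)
Your proof is correct in substance but takes a genuinely different route from the paper's. The paper derives Proposition~\ref{prop:IC-kappa} as a special case of a more general statement proved in the appendix: it represents the agent's value under each candidate strategy as the unique solution of a backward SDE whose generator absorbs the worst-case drift term, and then invokes the comparison theorem for BSDEs (El Karoui--Peng--Quenez), so that incentive compatibility is equivalent to the pointwise ordering of the generators, which is exactly \eqref{eqn:IC-kappa-A}. You instead run a direct Sannikov-style supermartingale/Girsanov argument on the gains process $G^{\hat a}$, doing by hand the worst-case bookkeeping that the BSDE comparison theorem packages away. Your drift computation is right (it reproduces the paper's equation for $dW_t$ once $dB^{a}$ and $dB^{a-\kappa(a)}$ are related by the change of measure), and your route has the virtue of being more elementary and of making transparent \emph{why} the lower envelope is the relevant drift: it minimizes the linear-in-$\theta$ drift of $G^{\hat a}$ given $Y_t\geq 0$. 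The paper's route is shorter once the Chen--Epstein/$g$-expectation machinery is accepted and extends immediately to other generating sets $\Theta(\mu(a))$.

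One step needs repair. In the necessity direction you assert that the Girsanov densities \eqref{eqn:girsanov-exponent-model} are ``bounded uniformly in $\theta$'' and use this to extract a gain $\varepsilon>0$ uniform over $\mathcal{P}^{\Theta(\tilde a)}$. That boundedness claim is false as stated: the exponent contains the unbounded stochastic integral $\int_0^t(\tilde a_s+\theta_s)\,dB_s$, so the densities are neither bounded above nor bounded away from zero uniformly, and an infimum of quantities each strictly exceeding $W_0$ need not exceed $W_0$. What you actually need is that the worst case defining the deviation's value is \emph{attained} on the weakly compact set $\mathcal{P}^{\Theta(\tilde a)}$ --- this is precisely parts (b) and (c) of the paper's appendix Lemma~\ref{lemma:basics-multiprior}. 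Since every prior in that set is equivalent to $P$ (part (a)), the set $E$ retains positive $dt\otimes dQ$ measure under the minimizing prior $Q$, the drift of $G^{\tilde a}$ is nonnegative everywhere and strictly positive on $E$ under $Q$, and hence $\expect^{Q}[G^{\tilde a}_\tau]>W_0$, which is the contradiction you want. With that substitution the argument closes.
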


Notice that setting
$\kappa(a;\phi) =0 $ in condition \eqref{eqn:IC-kappa-A} for all $a \in \mathcal{A}$
specializes it to \citet{sannikov2008continuous}'s
incentive compatibility condition in the classical case without drift ambiguity.
Compared to the latter, drift-ambiguity introduces
one additive term to the incentive compatibility comparison.
These additional terms discounts the agent's continuation value according to the worst-case scenario due to ambiguity aversion.
With a tractable incentive compatibility condition as in \eqref{eqn:IC-kappa-A}
we next turn to analyze how ambiguity
affects incentives needed to implement actions.

\subsection*{On the incentive benefits of back-loaded payments under ambiguity}
The incentive compatibility condition \eqref{eqn:IC-kappa-A}
provides a tractable way to analyze how incentives needed to implement an action
depend on ambiguity.
It implies that given any non-trivial incentives $Y>0$ the agent's decision rule for his optimal action choice
maximizes the agent's expected continuation value net of effort cost
(suppressing the dependence on employment history to preserve notation):
\begin{equation*}
\max_{a \geq 0} (a - \kappa(a;\phi)) Y - h(a)
\end{equation*}
In view of the assumptions \ref{assmp:kappa-A1}-\ref{assmp:kappa-A4},
the objective function in this maximization programme is strictly concave.
Therefore,
the first-order condition is sufficient to characterize the optimal action and it is given by
$(1 - \kappa_{a}(a;\phi)) Y - h'(a) = 0$.
This in turn implies that the minimum variation $Y(a;\phi)$ needed to implement a given action $a > 0$
solves the first-order condition:
\begin{equation}\label{eqn:IC-Y}
Y = \frac{h'(a)}{1- \kappa_{a}(a;\phi)}.
\end{equation}
For any non-trivial effort level $a > 0$
the incentives $Y(a;\phi)$ to implement it increases in the
degree of ambiguity
by the assumed complementarity
(Assumption~\ref{assmp:kappa-A4}) between effort and ambiguity, $\kappa_{a\phi} > 0$.
Moreover, for any degree of ambiguity $\phi$, Assumption~\ref{assmp:kappa-A3} and  \eqref{eqn:IC-Y} together imply that implementing higher effort levels
requires stronger incentives.
These observations are summarized in the following lemma:
\begin{lemma}
\label{lemma:Mon-Y}
\normalfont
Under the assumptions~\ref{assmp:kappa-A1}-\ref{assmp:kappa-A4},
for any non-trivial effort level $a > 0$,
the required incentives to implement it  increases in ambiguity:
$Y(a;\widehat\phi) \geq Y(a;\phi)$ whenever $\widehat\phi  \succ_{\Amb} \phi$.
Moreover, for any given degree of ambiguity
the incentives increase in effort level:
$Y(\widehat{a};\phi) \geq Y(a;\phi)$ whenever $\widehat{a} \geq a $.
\end{lemma}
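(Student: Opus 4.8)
The plan is to read both claims off the closed form \eqref{eqn:IC-Y},
\[
Y(a;\phi)\;=\;\frac{h'(a)}{1-\kappa_a(a;\phi)} ,
\]
which is available because \ref{assmp:kappa-A1}--\ref{assmp:kappa-A4} make the agent's action-choice program $\max_{a\ge 0}(a-\kappa(a;\phi))Y-h(a)$ strictly concave, so its first-order condition $(1-\kappa_a(a;\phi))Y=h'(a)$ uniquely pins down the minimal sensitivity implementing a given $a>0$. Since $h$ is increasing we have $h'(a)>0$, and \ref{assmp:kappa-A2} gives $0<1-\kappa_a(a;\phi)\le 1$; hence $Y(a;\phi)$ is strictly increasing in $\kappa_a(a;\phi)$ and in $h'(a)$. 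The two assertions then reduce to: (i) $\kappa_a$ is monotone in the ambiguity index, and (ii) the ratio $h'(a)/(1-\kappa_a(a;\phi))$ is monotone in $a$.

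For (i), I would first observe that the set-inclusion order of Definition~\ref{defn:higher-ambig} collapses to the scalar order on $\phi$: by \ref{assmp:kappa-A1}, $\kappa(0;\phi)=0$ for every $\phi$, so $\kappa(a;\phi)=\int_0^a\kappa_a(s;\phi)\,ds$ and, by \ref{assmp:kappa-A4}, $\kappa_\phi(a;\phi)=\int_0^a\kappa_{\phi a}(s;\phi)\,ds>0$ for $a>0$; hence $\widehat\phi\succ_{\Amb}\phi$ is equivalent to $\widehat\phi\ge\phi$. Then \ref{assmp:kappa-A4}, i.e.\ $\kappa_{\phi a}>0$, is precisely the statement that $\kappa_a(a;\cdot)$ is nondecreasing in $\phi$, so $\kappa_a(a;\widehat\phi)\ge\kappa_a(a;\phi)$; substituting into the displayed formula and using $h'(a)>0$ gives $Y(a;\widehat\phi)\ge Y(a;\phi)$.

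For (ii), fix $\phi$ and differentiate the closed form in $a$:
\[
Y_a(a;\phi)\;=\;\frac{h''(a)\bigl(1-\kappa_a(a;\phi)\bigr)+h'(a)\,\kappa_{aa}(a;\phi)}{\bigl(1-\kappa_a(a;\phi)\bigr)^{2}} ,
\]
so $\sgn Y_a(a;\phi)=\sgn\bigl[\,h''(a)(1-\kappa_a)+h'(a)\kappa_{aa}\,\bigr]$. Where $\kappa$ is convex in effort ($\kappa_{aa}\ge 0$) the numerator is nonnegative by convexity of $h$ and \ref{assmp:kappa-A2}. Where $\kappa$ is concave in effort ($\kappa_{aa}<0$) the term $h'(a)\kappa_{aa}$ is negative, and here \ref{assmp:kappa-A3} does the work: rewritten as $-\kappa_{aa}(a;\phi)\le\bigl(1-\kappa_a(a;\phi)\bigr)/a$, it bounds $-h'(a)\kappa_{aa}$ by $h'(a)(1-\kappa_a)/a$, which the growth term $h''(a)(1-\kappa_a)$ dominates --- this is exactly the inequality \ref{assmp:kappa-A3} is calibrated for, with equality in the benchmark $h(a)=\half a^{2}$. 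An alternative I would keep in reserve, avoiding curvature bookkeeping, is monotone comparative statics: the program above has cross-partial $1-\kappa_a(a;\phi)>0$ in $(a,Y)$ by \ref{assmp:kappa-A2}, hence strictly increasing differences, and it is strictly concave in $a$ by \ref{assmp:kappa-A1}--\ref{assmp:kappa-A4}; Topkis then makes its single-valued maximizer $a^{\star}(Y)$ nondecreasing in $Y$, and $Y(\,\cdot\,;\phi)$ is its inverse, hence nondecreasing in $a$. I expect the delicate point to be exactly (ii) in the concave-$\kappa$ region: there $1-\kappa_a(a;\phi)$ itself rises with $a$ and pushes $Y$ down, so \ref{assmp:kappa-A3} is needed to guarantee that the rise of $h'(a)$ wins; everything else is a direct substitution. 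I would also restrict throughout to $a>0$, so that \eqref{eqn:IC-Y} is the relevant interior characterization, matching ``non-trivial effort'' in the statement.
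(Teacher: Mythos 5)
Your proposal is correct and follows essentially the same route as the paper, which offers no separate appendix proof of this lemma: the paper's entire justification is the paragraph preceding it, reading both monotonicities off the closed form \eqref{eqn:IC-Y}, citing Assumption~\ref{assmp:kappa-A4} for the comparative static in $\phi$ and Assumption~\ref{assmp:kappa-A3} for the comparative static in $a$. You are in fact more careful than the paper on two points. First, you close a gap the paper glosses over: the order $\succ_{\Amb}$ is defined on the \emph{levels} $\kappa(a;\cdot)$, whereas the closed form requires an ordering of the \emph{derivatives} $\kappa_a(a;\cdot)$; your observation that \ref{assmp:kappa-A1} and \ref{assmp:kappa-A4} collapse $\succ_{\Amb}$ to the scalar order on $\phi$, after which \ref{assmp:kappa-A4} delivers the derivative ordering, is the right repair. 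Second, you correctly identify that the direct differentiation in $a$ needs $a\,h''(a)\ge h'(a)$ in the concave-$\kappa$ region, which the paper's stated assumptions on $h$ (increasing, convex, $h(0)=0$) do not guarantee in general --- it holds for the quadratic benchmark the paper implicitly uses elsewhere (e.g.\ \eqref{eqn:IC-Y-Asy}, where $Y=a/(1-\kappa_a)$). Your Topkis fallback, using the strictly positive cross-partial $1-\kappa_a>0$ from \ref{assmp:kappa-A2} together with the strict concavity in $a$ that the paper asserts from \ref{assmp:kappa-A1}--\ref{assmp:kappa-A4}, bypasses this entirely and is the cleaner argument for the second claim; it is what I would keep as the main line rather than the reserve.
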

This result says that
implementing an action requires stronger incentives
as the strength of ambiguity increases.
To illustrate in a special case, consider the $\kappa-$ignorance model
where degree of ambiguity increases linearly in effort level:
$\kappa(a;\phi) = \phi a $ with $\phi > 0$.
To implement any given action $a$
notice from \eqref{eqn:IC-Y} that
the principal needs to provide stronger incentives
under ambiguity:
$Y(a;\phi) = \frac{h'(a)}{1- \phi}  > h'(a) = Y(a;0)$.
In other words,
the incentive constraint become more difficult to satisfy
under ambiguity.
In particular, if the incentives $Y$
become too high, the principal can be better
off by retiring the agent by effectively setting $Y=0$
and implementing zero-action $a=0$.

As the incentive-compatibility condition describes
the agent's decision rule for his action choice, it
depends only on the agent's ambiguity
but not on the principal's ambiguity.
We show later this observation
in an extension of the model where the principal and the agent can perceive different
ambiguity.
In particular, if the principal does not perceive ambiguity,
relative to the \citet{sannikov2008continuous}'s model
the agent's ambiguity require stronger incentives to
satisfy the incentive compatibility condition,
because the agent's worst-case is always the lower envelope of the drift set
for any non-trivial contract.
Now, if the principal perceives ambiguity,
stronger incentives can make a high drift term her worst-case scenario,
as a high drift term corresponds to a distribution
under which a high payment is more likely.
Since the principal's worst-case scenario is \emph{endogenous},
the contracting problem in the current setting
is not equivalent to
a version of
the \citet{sannikov2008continuous}'s model
by changing the drift of each action with
the lower envelope of its drift set.%
\footnote{Many thanks to an anonymous referee for helping to point out
this distinction.}

It is worth remarking that in \citet{miao2016robust}'s model
of dynamic moral hazard with drift ambiguity,
as the agent is ambiguity neutral in their setting,
the agent's incentive compatibility condition does not
depend on ambiguity. Therefore, in Miao and Rivera's model
ambiguity does not affect how difficult it is for the principal
to incentivize high effort from the agent.
In contrast, in the current setting the incentive compatibility
condition depends on drift ambiguity perceived by the agent.
As the incentive compatibility condition describes
the agent's decision rule for his action choice, it
depends on ambiguity perceived by the agent
but not on what ambiguity perception the principal can have.%
\footnote{Many thanks to anonymous referees for highlighting
this novel effect in the current model.}

Having obtained a tractable incentive compatibility condition for implementation
we then turn to characterize the optimal contract
 and study how it depends on degree of ambiguity.

\section{Basic Properties of Optimal Contracts}
\label{section:Basic Properties}
We next represent
the (sequential) contracting problem
\eqref{eqn:PrincipalsMinMax}
recursively using the agent's continuation value as a state variable
and then characterize its properties.


For a given contract $(c,a)$,
the agent's value process is recursively represented by
\eqref{eqn:Wdiff-Ambig-A}
and its diffusion is  given by
\begin{equation}\label{eqn:Ya}
    dW_{t} = r\left( W_{t} - u(c_t) + h(a_t) + \theta^{A}_{t}Y_{t} \right)dt + r  \sigma Y_{t} dB^{a}_{t}
\end{equation}
Here
$\theta_{t}^{A}$
describes
the agent's worst-case scenario under the contract $(c,a)$
for the increments to output according to
$dX_{t} = (a_{t} + \theta^{A}_{t})dt + \sigma dB^{a}_{t}$.
In the contract $(c,a)$,
the agent's promised value $W_{t}$
grows at the interest rate,
it increases by the value of the agent's effort
$h(a_t)$, and it is reduced by the value of the payment $u(c_t)$.
As noted, the sensitivity $Y_{t}$
describes  the change in the agent's continuation value
due to output uncertainty.
As the principal commits to long-term contracts,
and infers the agent's worst-case scenario $\theta^{A}_{t}$,
she commits to provide the agent with the continuation value $\{W_{t}\}$.

Due to aversion to ambiguity, the principal and the agent can have different
perceptions about the evolution of output even if they have the same
drift-sets.
Letting $\theta_{t}^{P}$ denote the principal's worst-case scenario
from the principal's perspective the output process has the following distribution
$dX_{t} = (a_{t} + \theta^{P}_{t})dt + \sigma dB^{a}_{t}$.
Under the principal's worst-case scenario, the agent's promised value $W_t$ therefore evolves according to the following diffusion:
\begin{equation}\label{eqn:Yp}
    dW^{P}_t = r (W_t + h(a_t) - u(c_t) + \theta^{P}_{t}Y_{t}) dt + r \sigma Y_{t} dB^{a}_{t}
\end{equation}

As $dB^{a}_{t}$ is standard Brownian motion
under the measure $P^{a}$,
\eqref{eqn:Yp} implies that
the drift of the agent's continuation value
under the principal's worst-case distribution,
is $r (W_t + h(a_t) - u(c_t) + \theta^{P}_{t} Y_{t})$
and its quadratic variation is $r^{2} Y^{2} \sigma^{2}$.
By Ito's lemma,
the principal's robust contracting problem~\eqref{eqn:PrincipalsMinMax}
is represented recursively as an  ordinary differential equation
with the max-min criterion,
which is of the Hamilton-Jacobian-Bellman-Isaac type (HJBI for short)\footnote{See \cite{evans1984differential} for more detailed material
on the mathematical development of this class of problems.}
and it is given as follows:
\begin{equation}\label{eqn:HJBI}
\begin{split}
\hspace{-5mm} r F(W;\phi) = \max_{(c,a,Y)\in \Gamma}
\min_{\theta \in \Theta(a;\phi)}
 \Big\{ r ( a + \theta - c )
 + r & F'(W;\phi) \left(W - u(c) + h(a) + \theta Y \right) \\
&+ \frac{F''(W;\phi)}{2} r^{2} Y^{2} \sigma^{2} \Big\}
\end{split}
\end{equation}
\quad \quad for any $W > 0$
subject to the boundary conditions:
\[
F(0;\phi) = 0, \ \ \ \text{ and } \ \ \ F(W;\phi) \geq F_{0}(W).
\]

In this formulation, as in Sannikov,
since the agent can always choose to exert zero effort,
the representation \eqref{eqn:Ya} of the agent's continuation value
implies that the agent's participation holds with $W > 0$.
Here the constraint set $\Gamma$
includes non-negative wage payments $c \geq 0$,
non-negative action recommendation $a \geq 0$,
and non-negative sensitivity $Y \geq 0$
that satisfies incentive-compatibility for the action $a$
according to the agent's decision rule for his action choice as in
\textrm{Proposition}~\ref{prop:IC-kappa}.

In this representation, the first term of the objective function is the
expected flow profits and it increases by expected output and
decreases by payments to the agent. On the other hand, the second term
reflects the contribution to the expected continuation profits through
delaying the payments to the agent, and the last term
captures the effect due to exposing the agent to payoff uncertainty.
The principal's aversion to ambiguity is reflected by pessimistic expectation
formed by taking minimum over drift sets.

The function $F_{0}(W)$ describes the principal's guaranteed value after retiring the agent
with the continuation value $W$.
The principal always has  available the option to retire the agent with any
value $W \in \left[0,u(\infty) \right)$ where $u(\infty) = \lim_{c \rightarrow \infty } u(c)$.
Retiring the agent with value $u(c)$, the principal pays the agent constant wage $c$
and allows him to choose zero effort.  The principal's payoff
from retiring the agent is given by $F_{0}(u(c)) = -c$.
As there is no ambiguity after retirement,
the function $F_{0}$ does not depend on the strength of ambiguity
$\phi$.
As a state variable the agent's continuation value $W$ evolves during the employment, the boundary condition on retirement
ensures that the principal triggers retirement whenever the principal's payoff
from doing so exceeds the principal's payoff from continuing to employ
the agent.

The next result shows that
the recursive equation \eqref{eqn:HJBI} describes a solution to the
the principal's robust contracting problem~\eqref{eqn:PrincipalsMinMax}
in the sequential formulation.
It provides a {verification argument} using
dynamic programming approach and characterizes
the principal's contracting problem recursively using
the agent's continuation value as a state variable.
In particular, following a standard verification argument in the analysis of continuous-time principal agent problems and adapting it to drift-ambiguity,
it shows that
HJB \eqref{eqn:HJBI} characterizes a solution to the principal's sequential contract design problem.

\begin{thm}\label{thm:OptimalContract}
\normalfont
Assume
that the HJB equation \eqref{eqn:HJBI} has a unique smooth solution
$F \in C^{2}({R_{+}})$
and that
the stopping region takes the form
$\mathcal{S} = \{0\} \cup [W_{gp},\infty)$ for some
finite $W_{gp} < \infty$, then
the solution $(c(W),a(W),Y(W))$
for $W \in (0,W_{gp})$
characterizes an optimal contract with a positive
profit to the principal.
Such a contract is based on the agent's
continuation value as a state variable, which
starts at $W_{0}$ and evolves according to
\[
dW_{t} = r (W_{t} - u(c_{t}) + h(a_{t}) + \theta^{A}_{t} Y(a_{t})) dt + r \sigma Y(a_{t}) dB^{a}_{t}
\]
with  the payments $c_{t} = c(W_{t})$
and the action $a_{t} = a(W_{t})$
with termination time ${\tau} := \inf\{t \geq 0: {W}_{t} \in \mathcal{S}\}$.
\end{thm}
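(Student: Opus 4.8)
The plan is to follow the classical verification-theorem strategy adapted to the max-min setting, establishing two inequalities: (i) \emph{no contract does better} than $F(W_0;\phi)$, and (ii) \emph{the candidate contract attains} $F(W_0;\phi)$. Fix an admissible contract $(c,a) \in \boldsymbol{\mathcal{Z}}(W_0)$ with its associated stopping time $\tau$. By Proposition~\ref{prop:Wdiff-Ambig-A} and Lemma~\ref{lemma:Alower-envelope}, the agent's continuation value evolves according to \eqref{eqn:Wdiff-Ambig-A-kappa}, with $\theta^A_t = a_t - \kappa(a_t;\phi)$ his worst-case drift, and $Y_t$ incentive-compatible for $a_t$ by Proposition~\ref{prop:IC-kappa}. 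The first step is to apply Ito's lemma to $e^{-rt}F(W_t;\phi)$ under the \emph{principal's} worst-case measure $P^{a+\theta^P}$. Taking $\theta^P_t = \argmin_{\theta \in \Theta(a_t;\phi)}\{ r(a_t+\theta-c_t) + rF'(W_t;\phi)(\cdots+\theta Y_t)\}$, the drift of $e^{-rt}F(W_t;\phi)$ under $P^{a+\theta^P}$ is exactly $e^{-rt}$ times the bracketed HJBI expression evaluated at $(c_t,a_t,Y_t,\theta^P_t)$, which by \eqref{eqn:HJBI} is bounded above by $e^{-rt}\,[\,rF(W_t;\phi) - r(a_t+\theta^P_t-c_t)\,] \cdot (-1) + $ the flow profit term; more precisely, \eqref{eqn:HJBI} gives that this drift is $\le -r e^{-rt}(a_t + \theta^P_t - c_t)$. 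Integrating from $0$ to $\tau$ and taking $\expect^{a+\theta^P}$, the stochastic integral vanishes by the square-integrability conditions \eqref{eqn:intableA}, yielding
\[
F(W_0;\phi) \ge \expect^{a+\theta^P}\!\left[ r\int_0^\tau e^{-rt}(a_t + \theta^P_t - c_t)\,dt + e^{-r\tau}F(W_\tau;\phi) \right].
\]
Since $\theta^P$ is the principal's minimizing drift, $\expect^{a+\theta^P}[\,\cdot\,] \le \min_{\theta}\expect^{a+\theta}[\,\cdot\,]$ would go the wrong way, so instead I note that the inequality above holds for \emph{every} admissible $\theta^P$, and in particular choosing $\theta^P$ to be the principal's actual worst-case gives $F(W_0;\phi)$ as an upper bound on the principal's guaranteed value, using the boundary condition $F(W_\tau;\phi) \ge F_0(W_\tau) \ge -c_\tau/r$-type terminal payoff together with $F(0;\phi)=0$ on the stopping set $\mathcal{S}=\{0\}\cup[W_{gp},\infty)$.

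The second step is to show the candidate contract $(c(W_t),a(W_t),Y(a(W_t)))$ with $W_t$ driven by the stated SDE and stopped at $\tau = \inf\{t: W_t \in \mathcal{S}\}$ achieves equality. Here the maximizer $(c,a,Y)$ in \eqref{eqn:HJBI} turns all the inequalities above into equalities: the drift of $e^{-rt}F(W_t;\phi)$ under the principal's worst-case measure equals $-re^{-rt}(a_t+\theta^P_t-c_t)$ exactly, so the Ito argument yields $F(W_0;\phi)$ as the principal's guaranteed value for this contract. One must check this contract is admissible --- i.e., the SDE has a solution, the integrability conditions \eqref{eqn:intableA} and finiteness of $\tau$ hold (this uses the hypothesized form of $\mathcal{S}$ and the boundedness of $a$, $\kappa$, hence of $Y(a)$ on $(0,W_{gp})$), and incentive compatibility holds by construction via Proposition~\ref{prop:IC-kappa}. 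Finally, $F(W_0;\phi)>0$ follows because the principal retains the no-hire option and the hypotheses place the optimal contract strictly inside the continuation region for $W_0 \in (0,W_{gp})$.

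The main obstacle I anticipate is handling the \emph{order of the min operators} carefully: the principal's guaranteed value is $\min_{\theta\in\Theta(a)}\expect^{a+\theta}[\cdots]$, while the verification inequality is most naturally derived by \emph{choosing} a specific $\theta^P$ (the pointwise HJBI minimizer). The resolution is that for the upper-bound direction one wants the inequality to hold for the principal's \emph{actual} minimizing measure, and since \eqref{eqn:HJBI} bounds the drift under \emph{every} $\theta$, in particular under the one realizing $\min_\theta\expect^{a+\theta}$, one gets $F(W_0;\phi)\ge \min_\theta\expect^{a+\theta}[\cdots]$ as desired --- the key being that the HJBI min is a pointwise (state-by-state) min that dominates the measure-level min by a standard measurable-selection / Girsanov argument in the spirit of \citet{chen_epstein_ambiguity2002}. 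A secondary technical point is the transversality/terminal term $\expect^{a+\theta^P}[e^{-r\tau}F(W_\tau;\phi)]$: one uses $W_\tau \in \{0\}\cup[W_{gp},\infty)$, $F(0;\phi)=0$, and on $[W_{gp},\infty)$ the boundary condition $F=F_0$ together with the golden-parachute representation $\xi=\int_\tau^\infty re^{-rt}\xi\,dt$ to fold the terminal value back into the flow-payoff integral, so that the recursive value coincides with the sequential objective in \eqref{eqn:PrincipalsMinMax}.
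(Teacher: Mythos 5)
Your proposal is correct and follows essentially the same route as the paper: a two-sided verification argument that applies Ito's formula to $e^{-rt}F(W_t;\phi)$ under the principal's worst-case measure, uses the HJBI inequality (with the pointwise minimizer $\theta^P$ identified with the measure-level minimizer via rectangularity) and the boundary condition $F \geq F_0$ to bound every admissible contract, then shows the candidate policy attains equality and verifies its admissibility (well-posedness of the SDE, integrability, and finiteness of the stopping time). The paper organizes this as four explicit steps with a localization sequence $\tau_n$, but the substance matches yours, including the handling of the order-of-minimization subtlety you flag.
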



This theorem says that the principal's (sequential) problem~\eqref{eqn:PrincipalsMinMax}
can be alternatively specified in recursive form given in the associated HJB~\eqref{eqn:HJBI}
using the agent's continuation value as the summary of the payoff relevant history.
Moreover, the optimal controls $(a)$ and $(c)$ to the HJB describe the optimal long-term contract.%

To establish this result
we use \citet{possamai2020there}'s dynamic programming approach
to study principal-agent problem in continuous time
and adapt it drift ambiguity.
In the formulation of \citet{possamai2020there} the agent's action controls the drift of the output diffusion, like in the current paper,
which follows the general approach of
\citet{lin2020random}
allowing that the agent's action can control both the drift and the volatility of output diffusion.%
\footnote{
The formulation of \citet{lin2020random} generalize
 to random time-horizon the formulation of principal-agent problem
 in continuous-time with a finite time-horizon developed by
\citet{cvitanic2018dynamic}. These contributions provide
justification for reducing the principal's contracting problem
into a standard stochastic control problem as introduced by
\citet{sannikov2008continuous}.}

A version of Theorem \ref{thm:OptimalContract}
in a classical setting where there is no drift ambiguity
is given by \citet{sannikov2008continuous}.
In the analysis of the HJB equation \eqref{eqn:HJBI},
unlike Sannikov's approach we do not require an upper-boundary constraint
to determines retirement with a smooth-pasting condition
$F'(W_{gp}) = F_{0}'(W_{gp})$.
As illustrated by \citet{possamai2020there},
such a smooth-pasting constraint as in Sannikov's approach can entail loss of generality.
Instead, following \citet{possamai2020there}'s dynamic programming approach
and for simplicity, we model retirement with a finite random horizon for the
contracting relationship.
A possible generalization to infinite valued random horizon
where with a positive probability
the contracting relationship does not terminate with a retirement
is beyond the scope of this paper.%
\footnote{
 \citet{possamai2020there} discusses
 that such a generalization to infinite random horizon
 in a classical setting of \citet{sannikov2008continuous}
 is possible as the dynamic contracting problem reduces to a standard stochastic control problem
 for which the classical tools of stochastic analysis applies (see their Remark 3.8.).
 By analogous reasoning, we conjecture that the formulation of dynamic contracting
 with drift ambiguity can be generalized to infinite random-horizon.}
Arguably, in many economic applications, modeling with a finite random horizon does not entail
loss of generality as contracting relationships typically come to an end in a finite time.

%

We also assume that the HJB equation \eqref{eqn:HJBI}
has a smooth solution
$F \in C^{2}(R_{+})$. In the classical setting without drift ambiguity,
analogous HJB equation admits a smooth solution.
We conjecture that by adopting
Possama{\"\i} and Touzi's  approach using
classical tools of stochastic analysis, smooth solution obtains in our framework.
However, such a generalization is beyond the scope of the current paper

The analysis of HJB functional equation in~\eqref{eqn:HJBI}
yields further basic properties of the optimal long-run contract:
both the principal and the agent perceive the lower envelope as common worst-case scenario;
and the principal's value function $F$ displays monotonicity with respect to the strength of ambiguity.
We next take up these properties of the optimal contract
and then turn to perform monotone comparative statics analysis with respect
to the strength of ambiguity.


\subsection{Agreeing on the worst-case scenario}
\label{Sec:Agreeing on the worst-case scenario}
In the contracting relation, a priori the principal and the agent do not have to agree on
the worst-case scenario in drift sets due to the differences in objectives.
From the agent's perspective the worst-case minimizes the expectation of payments received
net of effort costs. We have already seen in \normalfont{Lemma}~\ref{lemma:Alower-envelope} that
non-trivial incentive provision implies that the agent's worst-case perception
is always the lower envelope of the drift set.
The principal's worst-case drift on the other hand minimizes the expectation of profits
and need not be the same as the agent's worst-case scenario.

The recursive representation~\eqref{eqn:HJBI} of the optimal contract provides a tractable
characterization for the worst-case scenario perceived by the principal.
For any incentive compatible action strategy $\{a_{t}\}$, the worst-case drift scenario perceived by the principal
$\theta^{P}$ is the drift that minimizes her expected payoff over the drift sets:
\[
\theta^{P}(a;\phi) = \argmin_{\theta \in [-\kappa(a;\phi),\kappa(a;\phi) ] } \theta \big\{1 + F'(W;\phi)Y \big\}.
\]
Equivalently,
\[
\theta^{P}(a;\phi) =
\left\{ \begin{array}{rl}
- \kappa(a;\phi) & \text{  if  } 1 + F'(W;\phi)Y \geq 0; \\
  \kappa(a;\phi) & \text{  if  } 1 + F'(W;\phi)Y < 0.
\end{array}
\right.
\label{eqn:L} \tag{L}
\]
Here the principal's worst-case analysis involves two terms of HJB\eqref{eqn:HJBI} due to drift ambiguity:
the first term involving $1$ reflects the contemporaneous effect on the expected output
and the other involving $F'(W;\phi)Y$ captures effect on the expected continuation profits.\footnote{
Notice also that there is no effect through the third and the last term of HJB \eqref{eqn:HJBI} since
there is no ambiguity about the volatility of the output diffusion (by assumption for tractability).}
The condition~\eqref{eqn:L} says that
the lower-envelope is the worst-case scenario for the principal, so long as
for each euro(/dollar) wage paid today the principal expects to make positive profits
from continuing to employ the agent.
Otherwise, additional payments incur net losses
and in this case the upper-envelope (which maximizes expected losses)
is her perceived worst-case scenario.
However, the latter does not arise \emph{during} employment in any optimal contract
since the principal always has the option of retiring the agent.

Early in the contracting relationship when the agent's continuation value $W$ is sufficiently low,
marginal profitability from delaying payments contingent on future performance
is positive $F'(W;\phi) > 0$ as it incentivizes the agent to work.
The principal's perceived worst-case scenario for low promised-values is therefore
the lower envelope as it minimizes the expectation of marginal profits.
Since the profit function is concave, for high enough $W$, the marginal benefit of delaying payments become negative $F'(W;\phi)<0$
as incentivizing the agent who has accumulated high continuation value
becomes costlier due to the income effects.
For high enough promised continuation value the principal's payment made today exceeds its
expected benefit in the continuation value, and hence her worst-case scenario is then
the upper envelope which maximizes such expected losses. However, at such high continuation values the
principal would rather retire the agent. Therefore, within the contracting relationship the principal's
worst-case perception is always the lower envelope of the set of drift terms.
The following result summarizes this observation:
\begin{prop}
\label{prop:common-worst-case}
\normalfont
In any optimal long-term contract, the principal and the agent both
perceive the lower envelopes of drift sets as the common worst-case scenario:
\[
{\theta^{A}(a;\phi) = \theta^{P}(a;\phi) = \min\big\{\theta \in \Theta(a;\phi) \big\} = a - \kappa(a;\phi)} \text{  for each  } a \in \mathcal{A}.
\]
\end{prop}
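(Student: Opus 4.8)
The plan is to establish the claim in two moves: first recall from Lemma~\ref{lemma:Alower-envelope} that $\theta^{A}(a;\phi) = a - \kappa(a;\phi)$ in any non-trivial optimal contract (this is immediate from the fact that non-trivial incentive provision forces $Y_t \geq 0$, so the minimizer of $\theta Y$ over $[-\kappa,\kappa]$ is $-\kappa$); second, show that the principal's worst-case drift $\theta^{P}(a;\phi)$ also equals $a - \kappa(a;\phi)$ \emph{within the employment region}. By the characterization \eqref{eqn:L}, the latter holds precisely when $1 + F'(W;\phi)\,Y \geq 0$ for every $W$ in the interior of the continuation region $(0,W_{gp})$. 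So the entire content of the proposition reduces to proving that inequality along the optimal path.

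First I would argue that $F$ is decreasing on the relevant range, or at least that $F'(W;\phi)$ is bounded below by $-1/Y$. The natural route is a concavity/monotonicity argument on the value function: $F$ is concave (standard for these HJB value functions, since mixing contracts convexifies the feasible set), $F(0;\phi)=0$, and $F$ lies above the retirement curve $F_0$ with $F_0(u(c)) = -c$ and $F_0' \le 0$. Since the principal retires rather than keep a contract whose marginal continuation value is too negative, the stopping region $\{0\}\cup[W_{gp},\infty)$ is exactly where continuing is dominated; in the interior $(0,W_{gp})$ the principal strictly prefers to continue, which is where I would extract that delaying a marginal euro of payment is (weakly) profitable, i.e.\ $1 + F'(W;\phi)Y \ge 0$. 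Concretely: if $1 + F'(W;\phi)Y < 0$ at some interior $W$, then the marginal cost of promising the agent more utility exceeds its marginal output benefit, and one can show the principal does strictly better by retiring at that $W$ (paying the lump sum that delivers $W$), contradicting $W < W_{gp}$. This uses the boundary condition $F(W;\phi) \ge F_0(W)$ together with the envelope/first-order structure of the HJB.

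Then I would conclude: on $(0,W_{gp})$ we have $1 + F'(W;\phi)Y(a(W);\phi) \ge 0$, so by \eqref{eqn:L} the principal's worst-case drift is the lower envelope $a - \kappa(a;\phi)$; combined with Lemma~\ref{lemma:Alower-envelope} this gives $\theta^{A}(a;\phi) = \theta^{P}(a;\phi) = a - \kappa(a;\phi) = \min\{\theta \in \Theta(a;\phi)\}$ for every $a \in \mathcal{A}$ realized under the optimal contract, and trivially for $a=0$ since $\kappa(0;\phi)=0$ by \ref{assmp:kappa-A1}. The degenerate case $Y=0$ (zero action) is handled separately but is immediate since then $\Theta(a;\phi)=\{0\}$.

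The main obstacle I expect is pinning down the sign of $1 + F'(W;\phi)Y$ rigorously on the \emph{whole} interior of the continuation region rather than just near the endpoints. Near $W=0$ it is plausible that $F'(W;\phi) > 0$ (steep incentives are cheap), and near $W_{gp}$ the smooth transition to $F_0$ controls things, but in between one must rule out an interior dip where $F'$ becomes very negative while $Y$ stays large. The clean argument is the "retire-instead" comparison sketched above, but making it airtight requires care that the retirement option genuinely dominates pointwise — essentially that $F_0$ is the concave envelope lower bound and any $W$ with $1+F'(W;\phi)Y<0$ cannot be optimal to keep. An alternative, if that direct argument proves delicate, is to show $F$ is nonincreasing on $(W^\ast,W_{gp})$ for the relevant $W^\ast$ and handle $(0,W^\ast)$ by the $F'>0$ bound; either way, the crux is the global behavior of $F'$.
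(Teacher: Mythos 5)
Your reduction is the right one and matches the paper's: the agent's side is exactly Lemma~\ref{lemma:Alower-envelope}, and everything then hinges on showing $1 + F'(W;\phi)Y \geq 0$ throughout the continuation region so that \eqref{eqn:L} selects the lower envelope for the principal. But the step you yourself flag as the main obstacle --- ruling out an interior $W$ with $1 + F'(W;\phi)Y < 0$ --- is the entire content of the proposition, and your ``retire-instead'' comparison is only asserted, not carried out. As stated it does not obviously go through: $1 + F'(W;\phi)Y < 0$ says that the drift perturbation $\theta$ enters the principal's HJB objective with a negative coefficient, which is not the same as saying that the marginal cost of the promise exceeds its marginal benefit pointwise, and it does not by itself produce a dominating retirement deviation at that particular $W$.

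The paper closes this gap with a concavity contradiction rather than a direct dominance argument. Suppose the upper envelope were active at some $W'$ in the continuation region. Choose $\tilde W < W'$ close to $W_{gp}$; continuity of $F'$ gives the lower bound $F'(\tilde W) > F_0'(W_{gp}) = -1/u'(C_{gp}) = -1/\gamma_o$, where $\gamma_o$ is the minimum of $Y(a) = h'(a)/(1-\kappa'(a))$ over the compact action set. Since $\tilde W$ is interior, $F(\tilde W) > F_0(\tilde W)$, and a linear approximation at $\tilde W$ yields $F(\tilde W) + F'(\tilde W)\tilde W > -C_{gp} + F'(\tilde W)\,u(C_{gp})$. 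Substituting these bounds into the rearranged HJB expression for $F''(\tilde W)$ \emph{under the upper-envelope worst case} forces $F''(\tilde W) > 0$, contradicting the concavity of $F$, which is established separately. So the missing idea is not the global behavior of $F'$ per se, but the interplay between the boundary condition $F \geq F_0$, the explicit second-order ODE for $F''$, and concavity. To complete your proof you should either reproduce that contradiction or actually construct the retirement deviation and verify that it dominates, which is more delicate than your sketch suggests.
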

It is worth remarking that the result in \normalfont{Proposition}~\ref{prop:common-worst-case}
shows that the principal and the agent both perceive
the lower envelope to be the common worst-case scenario of drift sets
\emph{during} employment relationship.
As this agreement on the worst-case scenario is
an endogenous property in the optimal contract,
the model  is not equivalent to
a formulation that obtains by taking the lower-envelope as the specification of technology
in the model of \cite{sannikov2008continuous}.
Despite this difference,
however,
\normalfont{Proposition}~\ref{prop:common-worst-case}
implies that
the current model becomes observationally equivalent to the model in
\cite{sannikov2008continuous}.
In the current setting the principal can perceive the upper envelope
as her worst-case for high enough continuation values where
the cost of incentives exceed the benefits to the principal.
In such a case, the principal optimally chooses to retire the agent.
Therefore, the principal and the agent can use different worst-case scenarios
outside of employment.
As outside of employment there is no ambiguity,
this does not bring new possibilities into the principal's contract design.

\subsection*{On concavity of the principal's value function}
In a model of dynamic moral hazard where there is no drift ambiguity
\citet{sannikov2008continuous} and  \citet{possamai2020there}
show that the solution $F$ to the HJB equation is concave.
This property extends to the current model with drift-ambiguity.
Using the characterization of the principal's worst-case scenario in \eqref{eqn:L}
and rearranging \eqref{eqn:HJBI} implies
\[
\begin{array}{cl}
F''({W}) & =
\underbrace{\frac{F({W}) - a(W) + c(W) - F'({W})\left({W} - u(c(W)) + h(a(W)) \right)}{r \sigma^{2}Y(W)^{2}/2}}_{<0 } \\
&
\underbrace{- \kappa(a(W))
  {\frac{ | 1 + F'({W})Y(W) | }{r \sigma^{2}Y(W)^{2}/2}}   }_{< 0} < 0
\end{array}
\]
When there is no ambiguity, i.e., $\kappa = 0$,
in the expression above the second term
on the right-hand side disappears
and the model reduces to the classical model of \citet{sannikov2008continuous}
and hence the first term is negative.
In the presence of ambiguity, the value function is also concave
and in particular more concave relative to the classical case.
In \citet{miao2013robust}'s framework where the agent is ambiguity neural and the principal's
ambiguity sensitive preferences are modeled with variational preferences
based on a representation of \citet{maccheroni2006ambiguity},
the value function need not be concave.
In particular,
\citet{miao2013robust} observe
that
the value function can be convex whenever
the principal's ambiguity aversion is high enough.
This in turn implies that the principal can benefit from
using a public randomization device in the optimal contract.
In contrast, in the framework considered here
both the principal and the agent have MaxMin preferences
the principal's value function coincides with its concave hull
and hence such a public randomization cannot (strictly) improve upon her payoff.%
\footnote{
For the analogous argument in Bayesian formulation
see in particular \citet{sannikov2008continuous}'s Lemma 5 and
his remark following Proposition 4.}

\subsection{Basic monotonicity properties}
Using the common worst-case scenario from \normalfont{Proposition}~\ref{prop:common-worst-case}
basic properties on the monotonicity of the principal's value with respect to the strength of drift ambiguity flows from the analysis of HJB \eqref{eqn:HJBI}.
\begin{prop}\label{prop:Basic}
\normalfont
In the optimal long-term contract with drift ambiguity, the principal's value has the following
features with respect to a change in the strength of drift ambiguity:

\begin{enumerate}[label=(\normalfont{B}{\arabic*}),ref=(\normalfont{B}{\arabic*})]
\item \label{prop:kappamonotonicity}
Higher ambiguity reduces the principal's profits:
$\widehat\phi \succ_{\Amb} \phi$ implies $F(\cdot;\widehat \phi) < F(\cdot;\phi)$; \vspace*{-2mm}

\item \label{prop:kappaslopes} Higher ambiguity reduces the increments of the principal's profits:
$\widehat\phi \succ_{\Amb} \phi$ implies $F'(\cdot;\widehat \phi) < F'(\cdot;\phi)$; \vspace*{-2mm}

\item \label{prop:kappaconcavity} Higher ambiguity leads to more concave profit function:
$\widehat\phi \succ_{\Amb} \phi$ implies $F''(\cdot; \widehat \phi) < F''(\cdot; \phi)$. \vspace*{-2mm}
\end{enumerate}
\end{prop}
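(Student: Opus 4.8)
The plan is to start from the recursive equation~\eqref{eqn:HJBI} and first use Proposition~\ref{prop:common-worst-case} to replace the inner minimization by evaluation at the lower envelope $\theta=-\kappa(a;\phi)$, so that the HJBI collapses to an ordinary single-player HJB; simultaneously I would use Proposition~\ref{prop:IC-kappa} and~\eqref{eqn:IC-Y} to eliminate the sensitivity through the incentive link $Y=Y(a;\phi)=h'(a)/\big(1-\kappa_{a}(a;\phi)\big)$. Reading off the dependence of the resulting Hamiltonian on the parameter, raising $\phi$ in the order $\widehat\phi\succ_{\Amb}\phi$ has three monotone effects: the worst-case productivity $a-\kappa(a;\phi)$ falls, the promised-value drift $h(a)-\kappa(a;\phi)Y(a;\phi)$ falls, and by Lemma~\ref{lemma:Mon-Y} the implementation sensitivity $Y(a;\phi)$ rises, so that the diffusion term shifts down as well once the already-established concavity $F''<0$ is used. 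The target statement to extract from this is that, at any $W$ with $F'$ and $F''$ frozen at the values of the less ambiguous solution, the Hamiltonian under $\widehat\phi$ lies weakly below that under $\phi$.

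With that pointwise Hamiltonian inequality, \ref{prop:kappamonotonicity} follows from a verification/comparison argument of the type used for Theorem~\ref{thm:OptimalContract}: $F(\cdot;\phi)$ is a classical supersolution of the HJB for $\widehat\phi$, the boundary data $F(0;\cdot)=0$ and $F(\cdot;\cdot)\geq F_{0}$ are common (the ambiguity-free retirement value $F_{0}$ does not move), hence $F(\cdot;\widehat\phi)\leq F(\cdot;\phi)$, with strictness because along any non-trivial contract effort is bounded away from zero on an event of positive probability and there $a-\kappa(a;\widehat\phi)<a-\kappa(a;\phi)$. For \ref{prop:kappaslopes} I would differentiate the HJB in $W$ by the envelope theorem to get a first-order relation for $F'(\cdot;\phi)$ whose coefficients inherit the same parameter monotonicity; together with $F(0;\widehat\phi)=F(0;\phi)=0$, which forces $F'(0;\widehat\phi)\leq F'(0;\phi)$ given \ref{prop:kappamonotonicity}, a comparison argument for that relation yields $F'(\cdot;\widehat\phi)<F'(\cdot;\phi)$. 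Finally, \ref{prop:kappaconcavity} follows from rearranging \eqref{eqn:HJBI} as in the expression for $F''$ displayed in the discussion of concavity above: the curvature splits into a Sannikov part and an ambiguity part $-\kappa(a;\phi)\,|1+F'(W)Y|/\big(r\sigma^{2}Y^{2}/2\big)$, the latter strictly more negative under $\widehat\phi$, and once \ref{prop:kappamonotonicity} and \ref{prop:kappaslopes} and the induced monotonicity of the optimal policy pin down the former one obtains $F''(\cdot;\widehat\phi)<F''(\cdot;\phi)$.

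The main obstacle is establishing the pointwise Hamiltonian comparison that drives all three parts, for two reasons. First, the incentive-feasible set itself depends on $\phi$: transporting the $\phi$-optimal triple $(c,a,Y)$ into the $\widehat\phi$ problem forces re-solving $Y=Y(a;\widehat\phi)$, which changes the implemented action and hence the consumption needed to keep the promised-value process consistent, so one cannot naively ``plug in the same controls'' --- this is exactly where Lemma~\ref{lemma:Mon-Y} and a monotone-comparative-statics argument in the spirit of \citet{quah2013discounting} enter. Second, because the profit function is hump-shaped the slope $F'$ changes sign, and on the region $\{F'<0\}$ the promised-value-drift channel $F'(W)\big(h(a)-\kappa(a;\phi)Y(a;\phi)\big)$ moves the wrong way; controlling it there uses the fact that, during employment, $1+F'(W)Y\geq 0$ by~\eqref{eqn:L}, which bounds how negative $F'$ can be relative to the sensitivity. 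Once the Hamiltonian is shown to shift down uniformly in the ambiguity parameter, the three monotonicity claims follow from the comparison principle applied to the HJB and to its differentiated form, together with the shared boundary conditions.
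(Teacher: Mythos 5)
Your overall architecture differs from the paper's, and the difference matters. For part \ref{prop:kappamonotonicity} the paper does not argue through the HJB at all: it runs a three-step revealed-preference chain on the \emph{sequential} problem \eqref{eqn:PrincipalsMinMax} --- evaluate the $\widehat\phi$-optimal contract under the (first-order stochastically better) worst case associated with $\phi$, replace its wage scheme by the cheaper implementation of the same action profile under $\phi$, then re-optimize --- and only afterwards invokes the Markov property to get the inequality at every $W$. Your supersolution/comparison route is legitimate in principle, but it stands or falls on the pointwise Hamiltonian comparison, and that is exactly where your argument has a hole. Freezing $(W,F',F'')$ and passing from $\phi$ to $\widehat\phi$, the $\kappa$-terms in the flow profit and in the promised-value drift combine to $-r\kappa(a;\phi)\bigl(1+F'Y\bigr)$, which \eqref{eqn:L} does control; but Lemma~\ref{lemma:Mon-Y} also forces $Y(a;\widehat\phi)>Y(a;\phi)$ for the same $a$, and the induced change in the drift term is $-rF'\,\kappa\,\partial_\phi Y$, which is \emph{positive} on the region where $F'<0$ and is not dominated by the diffusion contribution $F''r^2\sigma^2 Y\,\partial_\phi Y$ without further information. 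The bound $1+F'Y\geq 0$ does not close this: it controls the direct effect of enlarging $\kappa$, not the indirect effect of the stiffer incentive constraint. The paper's primal argument never needs this comparison, and its step \eqref{eqn:K2} handles the implementation-cost channel separately from the worst-case-drift channel.

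Second, your route for part \ref{prop:kappaconcavity} risks circularity. You propose to pin down the ``Sannikov part'' of the curvature using ``the induced monotonicity of the optimal policy,'' but the policy comparative statics (Theorem~\ref{thm:Flattening-Linears}) are themselves derived from parts \ref{prop:kappaslopes} and \ref{prop:kappaconcavity} of this proposition. The paper avoids this by rewriting \eqref{eqn:HJBI} as a minimization over controls whose value is $F''$ and applying the generalized envelope theorem, so that all indirect effects through the optimal $(c,a,Y)$ drop out and $F''_\phi$ is expressed purely in terms of $F_\phi$, $F'_\phi$, $\kappa_\phi$ and the sign of $1+F'Y$ --- quantities already signed by \ref{prop:kappamonotonicity}, \ref{prop:kappaslopes} and \eqref{eqn:L}. (For part \ref{prop:kappaslopes} your differentiate-and-compare plan is at least as explicit as the paper's own one-line argument from common boundary data, so I would not count that as a gap relative to the paper.)
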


Here inequalities hold strictly except for the common initial value $W_{o}$.
While the first two properties that higher degree of ambiguity reduces the principal's guaranteed value
and its increments, in parts the
parts \ref{prop:kappamonotonicity} and \ref{prop:kappaslopes}
of Proposition~\ref{prop:Basic}
respectively,  are intuitively, the fact that increasing ambiguity uniformly makes the
principal's profit function more concave is probably less so.
Intuitively, increasing drift ambiguity for each action uniformly lowers the lower envelope which in turn reduces
expected profits monotonically, and also increases the agency cost of implementation by exposing the agent to larger uncertainty.
These two effects reinforce each other and imply increase in the curvature of the principal's value function. The parts
\ref{prop:kappaslopes}
and \ref{prop:kappaconcavity}
of Proposition~\ref{prop:Basic} are broadly technical
and used in the comparative static analysis.

If the mapping $\phi \mapsto \kappa(a;\phi)$ is differentiable,
the results in \textnormal{Proposition}~\ref{prop:Basic} can be expressed
in its differential form. In particular, for a differential change in $\phi$
that expands drift-set or increases $\kappa(a;\phi)$ for each action $a$,
the resulting changes in the principal's profit function corresponding
to the parts of the proposition are denoted as
$F_{\phi}(W;\phi) < 0$,
$F'_{\phi}(W;\phi) < 0$, and
$F''_{\phi}(W;\phi) < 0$, respectively.
The differential form will make it easier to develop
the comparative static analysis in the next section.

Before we turn to monotone comparative static analysis on
the optimal contract with respect to the strength of ambiguity, we give an additional
implication that continuation domain shrinks in the strength of ambiguity.

More specifically, define the \emph{continuation domain} at time $t$ by
\[
C(\phi) = \{W_t: F(W_t ;\phi) \geq F_0 (W_t) \}
\]
where $F_0(W_t)$ is the principal's outside option value from retiring the agent
that yields the agent utility $W_t$. Since the retirement option is always available to
the principal, $F_0(W_t)$ determines the lower bound on the principal's value.
The principal optimally retires the agent whenever the principal's value from the relationship $F(W_t;\phi)$ hits this lower boundary.
The set $C(\phi)$ therefore describes the states at which stopping and terminating the contract at time $t$ is strictly suboptimal.
The part~\ref{prop:kappamonotonicity} of \textrm{Proposition}~\ref{prop:Basic}  then implies that the continuation domain shrinks as the degree of ambiguity increases.
\begin{prop}
\label{thm:ContinuationDomain}
\normalfont
In any optimal long-term contract, the continuation domain shrinks in
the strength of ambiguity:
$C(\widehat\phi) \subset C(\phi)$ for all $t$ whenever $\widehat\phi \succ_{\Amb} \phi$.
\end{prop}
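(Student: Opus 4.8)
The plan is to derive the claimed set inclusion directly from the monotonicity of the value function in the strength of ambiguity, namely part~\ref{prop:kappamonotonicity} of Proposition~\ref{prop:Basic}, together with the fact that the retirement value $F_0$ is independent of $\phi$. First I would recall the definition of the continuation domain, $C(\phi) = \{W : F(W;\phi) \geq F_0(W)\}$, and observe that since there is no ambiguity after retirement, $F_0$ does not vary with $\phi$; this is exactly the structural feature that makes the comparison clean. Then, fixing $\widehat\phi \succ_{\Amb} \phi$, I would take an arbitrary $W \in C(\widehat\phi)$, so that $F(W;\widehat\phi) \geq F_0(W)$, and apply part~\ref{prop:kappamonotonicity} to get $F(W;\phi) \geq F(W;\widehat\phi)$; chaining the two inequalities yields $F(W;\phi) \geq F_0(W)$, i.e. $W \in C(\phi)$. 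This establishes $C(\widehat\phi) \subseteq C(\phi)$.

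To upgrade the inclusion to the strict inclusion $C(\widehat\phi) \subsetneq C(\phi)$ asserted in the statement, I would invoke the strictness clause accompanying Proposition~\ref{prop:Basic} --- that $F(\cdot;\widehat\phi) < F(\cdot;\phi)$ holds strictly except at the common initial value $W_o$. The argument here is a boundary/continuity argument: at the endpoint $W_{gp}(\phi)$ of the continuation region under $\phi$ one has $F(W_{gp}(\phi);\phi) = F_0(W_{gp}(\phi))$ (the contract is retired there), and by strict monotonicity $F(W_{gp}(\phi);\widehat\phi) < F(W_{gp}(\phi);\phi) = F_0(W_{gp}(\phi))$, so $W_{gp}(\phi) \notin C(\widehat\phi)$ while it lies on the boundary of $C(\phi)$; since $C(\phi)$ is an interval of the form $(0,W_{gp}(\phi))$ (up to endpoints, by the stopping-region hypothesis $\mathcal{S} = \{0\}\cup[W_{gp},\infty)$ in Theorem~\ref{thm:OptimalContract}), points just below $W_{gp}(\phi)$ remain in $C(\phi)$, and by continuity of $F(\cdot;\widehat\phi)$ and $F_0$ a whole neighborhood near $W_{gp}(\phi)$ is excluded from $C(\widehat\phi)$. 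Hence the inclusion is proper. The phrase ``for all $t$'' in the statement is innocuous here because, under the stationarity assumptions, the continuation domain is time-invariant; I would note this explicitly.

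The one genuine subtlety --- the step I expect to be the main obstacle --- is making precise that the weak inequality $F(W;\phi)\ge F(W;\widehat\phi)$ from part~\ref{prop:kappamonotonicity} is the \emph{right} comparison to feed into the definition of $C(\cdot)$: one must be sure that both value functions are compared pointwise over the \emph{same} state space $R_+$ and against the \emph{same} $F_0$, and that the continuation domains for the two technologies are defined relative to their respective optimal contracts (Theorem~\ref{thm:OptimalContract}) yet still live on a common domain. Once that bookkeeping is in place, the proof is a two-line chain of inequalities plus the boundary argument for strictness. I would therefore keep the write-up short: state the containment via the inequality chain, then dispatch strictness by the endpoint argument, invoking continuity of $F(\cdot;\widehat\phi)$ (guaranteed by the $C^2$ hypothesis on solutions of the HJB in Theorem~\ref{thm:OptimalContract}) and of $F_0$.
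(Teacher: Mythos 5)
Your proposal is correct and follows essentially the same route as the paper: the paper's proof is exactly your two-line chain, taking $W\in C(\widehat\phi)$, applying part~\ref{prop:kappamonotonicity} of Proposition~\ref{prop:Basic}, and concluding $W\in C(\phi)$. The additional boundary argument you give for properness of the inclusion is a sound (and harmless) refinement that the paper does not bother to spell out.
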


\section{High-Powered Incentives and Ambiguity}
\label{section:Monotonce}
This section presents the main results of the paper on the
nature of the optimal contract and
show that broadly speaking ambiguity flattens the power of incentives.
This result follows from extending and applying monotone comparative statics analysis
in continuous time framework inspired by \citet{quah2013discounting}'s approach.
In the context of single-agent decision problem, \citeauthor{quah2013discounting}'s powerful
analysis shows a monotonic relationship between optimal control variables and time-preference parameters.
For this development their key idea
is  that HJB equation describing the agent's
decision problem can  alternatively be viewed as an objective function
for a monotone comparative static analysis.
We start with adapting and extending this observation to
the agency problem here and then analyze monotonicity between optimal compensation mix and strength of ambiguity.

The principal's optimal contract problem using the lower-envelope
as the common worst-case scenario and
Ito's formula on HJBI~\eqref{eqn:HJBI}is rewritten as
\begin{equation*}
\begin{split}
\max_{(a,Y,c)\geq 0} \Big\{ r ( a - \kappa(a;\phi) - c)
+ & r F'(W;\phi)(W - u(c) + h(a) - \kappa(a;\phi)Y) \\
&+ r^{2}\sigma^{2} F''(W;\phi)Y^{2}/2 - r F(W;\phi) \Big\} =0
\end{split}
\end{equation*}
subject to the same boundary conditions and the action recommendation $a$ and
the required sensitivity $Y$ to implement it are related via the incentive compatibility~\eqref{eqn:IC-Y}.
The latter implies the agent's decision rule
for action choice $a(Y;\phi)$ as a function of incentives $Y$ and the degree
of ambiguity $\phi$.
Using this relationship
and letting $m(Y;\phi) = \left(W - u(c) + h(a(Y;\phi)) - \kappa(a(Y;\phi) ;\phi)Y \right)$
denote the drift of the agent's continuation value (to simplify notation)
we can express the contracting problem in terms of the
compensation mix in the following form:
\begin{equation}\label{eqn:H-Principtheta-Y}
\begin{split}
\max_{(Y,c)\geq 0} \Big\{ r(  a(Y;\phi) - \kappa(a(Y;\phi) ;\phi) - c)
 +  & r F'(W;\phi) m(Y;\phi)  \\
 + &r^{2} \sigma^{2} F''(W;\phi)Y^{2}/2 - r F(W;\phi) \Big\} =0
\end{split}
\end{equation}
%
The interest is in analyzing qualitative monotonic between the strength of ambiguity
and the optimal compensation mix. The model of drift ambiguity described so far
makes assumptions in \ref{assmp:kappa-A1} through \ref{assmp:kappa-A4}
about the strength of ambiguity or the length of drift sets $\kappa(a;\phi)$ up to second-order derivatives.
These assumptions enable tractable characterization of monotonic relationships between
the strength of ambiguity and the incentives to the agent for implementation given in \normalfont{Lemma}~\ref{lemma:Mon-Y} on the one hand,
and the principal's value function in \normalfont{Proposition}~\ref{prop:Basic} on the other hand.

Monotone comparative statics analysis in its full generality, however,
requires the knowledge of higher order derivatives beyond the second
due to the endogenous relationship \eqref{eqn:IC-Y} connecting actions to incentives.
Such a complication in principle is not difficult to solve and yet it does not yield
transparent economic analysis.
Instead, sharper characterization follows from focusing
on linear relationships between actions and
the degree of ambiguity. For this we assume a linear relationship
in how actions translate to strength of ambiguity:
$\kappa(a;\bphi) = \phi_{0} + \phi_{1}a$,
where the vector $\bphi := (\phi_{0},\phi_{1})$ describes the parameterization
of drift-ambiguity.
In this linear specification, the constant $\phi_{0} \geq 0$
denotes base-level drift-ambiguity independent of action,
and the slope $\phi_{1}$ describes how the agent's effort
affects the strength of drift-ambiguity.
Consistent with the assumptions~\ref{assmp:kappa-A1} through \ref{assmp:kappa-A4},
we consider $\bphi$ so that $\kappa(a;\bphi) \geq 0$ and
$a - \kappa(a;\bphi) \geq 0$, or equivalently, $\phi_{0} \geq 0$ and $\phi_{1} \leq 1$.
For the linear drift-ambiguity model, we have the following characterization
of monotone comparative statics.

\begin{thm}
\label{thm:Flattening-Linears}
\normalfont
Consider linear drift ambiguity model $\kappa(a;\bphi) = \phi_{0} + \phi_{1} a$ with $\phi_{0} \geq 0$.
In the optimal long-term contract,
as $\phi_{0}$ or $ \phi_{1} $ increases,
the optimal sensitivity $Y(W; \bphi)$ decreases,
for low enough continuation values such that $W \in [W_0,\upbar{W}]$.
On other hand, as it increases the optimal sensitivity $Y$ may be non-decreasing, for large enough continuation value such that
$W \in [\upbar{W},W_{gp}]$.
Here the principal's profits obtain the maximum at  the unique value
$\upbar{W} \in (W_0,W_{gp})$. \vspace{-2mm}
\end{thm}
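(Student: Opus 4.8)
The plan is to treat the HJBI-based objective in \eqref{eqn:H-Principtheta-Y} as the value function of a parametrized optimization problem in the Quah--Strulovici mold, with parameter $\bphi = (\phi_0,\phi_1)$, and establish single-crossing (or interval dominance) of the objective in $(Y;\bphi)$ so that the optimizer $Y(W;\bphi)$ moves monotonically. First I would fix $W$ and write the objective from \eqref{eqn:H-Principtheta-Y} as $G(Y,c;W,\bphi)$; since $c$ enters separably from the comparative-statics parameters (the only coupling is through the common envelope $F, F', F''$ evaluated at $W$), I would argue that the choice of $c$ does not interfere with the monotonicity in $Y$, so it suffices to study $G$ as a function of $Y$ alone after concentrating out $c$. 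Using the agent's decision rule $a(Y;\phi)$ from \eqref{eqn:IC-Y}, which in the linear case $\kappa(a;\bphi)=\phi_0+\phi_1 a$ becomes $Y=h'(a)/(1-\phi_1)$, I can make the dependence $a=a(Y,\phi_1)$ completely explicit and differentiable, which is the payoff of specializing to the linear model: all the higher-order-derivative obstacles flagged in the text before the theorem disappear.

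Next I would compute $\partial G/\partial Y$ and examine its sign structure. The three channels are: (i) the instantaneous output-net-of-ambiguity term $r(a(Y;\phi)-\kappa(a(Y;\phi);\phi))$, whose $Y$-derivative is $r(1-\phi_1)a_Y(Y)>0$ by \ref{assmp:kappa-A2}; (ii) the continuation-drift term $rF'(W;\phi)m(Y;\phi)$, where $m_Y$ collects $h'(a)a_Y-\kappa_a(a;\phi)a_Y Y-\kappa(a;\phi)=h'(a)a_Y-\phi_1 a_Y Y-(\phi_0+\phi_1 a)$; and (iii) the risk term $r^2\sigma^2 F''(W;\phi)Y$, which is negative since $F''<0$ (established in the subsection on concavity). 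The key observation is that $F'(W;\phi)$ changes sign: by the discussion following \eqref{eqn:L} and the concavity of $F$, there is a unique $\upbar W\in(W_0,W_{gp})$ with $F'(\upbar W;\bphi)=0$, i.e. where the principal's profit is maximized; for $W<\upbar W$ we have $F'>0$ and for $W>\upbar W$ we have $F'<0$. I would then show that for $W\le\upbar W$, raising $\phi_0$ or $\phi_1$ shifts $\partial G/\partial Y$ down uniformly: the direct term's marginal contribution $r(1-\phi_1)a_Y$ falls (and $a_Y$ itself, $=1/((1-\phi_1)h''(a))$ after inverting, interacts the right way), the continuation term becomes more negative because both $F'$ shrinks by \ref{prop:kappaslopes} and $m_Y$ is pushed down by the larger $\kappa$, and $F''<0$ gets more negative by \ref{prop:kappaconcavity} so the risk term penalizes large $Y$ more; hence by a single-crossing/monotone-selection argument the maximizer $Y(W;\bphi)$ is nonincreasing in $\phi_0$ and in $\phi_1$ on $[W_0,\upbar W]$, which is the first assertion. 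For $W\in[\upbar W,W_{gp}]$, where $F'<0$, the continuation-drift channel reverses sign relative to the region below $\upbar W$, so the comparative static is ambiguous and $Y$ may fail to decrease — this yields the stated "may be non-decreasing" caveat, and I would exhibit it simply by noting the sign reversal rather than proving a strict increase.

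The main obstacle will be handling the interaction with the separately chosen consumption $c$ and, more subtly, the fact that $F$, $F'$, $F''$ themselves depend on $\bphi$: the comparative statics is not of the textbook "fixed objective, moving parameter" type because the objective's coefficients $(F,F',F'')$ are endogenous. The clean way around this is to use the ordering already proved in Proposition~\ref{prop:Basic} — $F_{\phi}<0$, $F'_{\phi}<0$, $F''_{\phi}<0$ in differential form — to sign the total derivative $dG_Y/d\phi_j$ rather than a partial, so that the endogenous movement of $F$ reinforces rather than offsets the direct effect of $\phi_j$; I would verify term-by-term that every contribution has the same (negative, for $W<\upbar W$) sign, which is exactly where Assumptions~\ref{assmp:kappa-A2}--\ref{assmp:kappa-A4} and parts \ref{prop:kappaslopes}--\ref{prop:kappaconcavity} are consumed. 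A secondary point to be careful about is that $Y(W;\bphi)$ could hit the boundary $Y=0$ (the retirement margin discussed after Lemma~\ref{lemma:Mon-Y}); there the monotonicity statement holds trivially, and I would note that the nontrivial content is on the interior, handled by the first-order comparison above. Finally, existence and uniqueness of $\upbar W$ follow from $F\in C^2$, $F$ concave, $F(0)=0$, and $F>F_0$ on the continuation region with $F$ attaining an interior maximum before $W_{gp}$, which I would state as an immediate corollary of the concavity subsection and Theorem~\ref{thm:OptimalContract}.
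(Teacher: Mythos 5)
Your proposal is correct and follows essentially the same route as the paper: monotone comparative statics on the HJBI objective \eqref{eqn:H-Principtheta-Y} after substituting the agent's decision rule from \eqref{eqn:IC-Y}, establishing sub-modularity in $(Y,\bphi)$ term by term using parts \ref{prop:kappaslopes} and \ref{prop:kappaconcavity} of Proposition~\ref{prop:Basic} together with the envelope theorem, and locating the sign reversal of the $F'(W;\bphi)$ channel at the unique profit-maximizing $\upbar{W}$ to explain why monotonicity may fail on $[\upbar{W},W_{gp}]$. The only cosmetic difference is that you check the cross-partial in the order $\partial_{\phi}\partial_{Y}G$ and treat $\phi_{0}$ and $\phi_{1}$ jointly, whereas the paper computes $\partial_{Y}\mathcal{H}_{\phi}$ and splits the argument into constant, linearly increasing, and linearly decreasing specifications.
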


This result says that the optimal sensitivity of the agent's value to output realizations broadly
becomes flatter as the strength of ambiguity increases except possibly
if the agent's continuation value is very high.
As lower-envelop of drift-sets becomes smaller, or the expected productivity
of each action in the worst-case scenarios decreases,
the power of incentives become
less responsive or less contingent to output realizations.

There is an intuitive explanation for why ambiguity flattens power of incentives.
Drift ambiguity has an contemporaneous effect and ambiguity aversion reduces the expected benefits from
each action due to pessimistic expectation. Drift ambiguity also has a continuation effect and
it reduces the principal's value because the principal has to provide
stronger incentives $Y$ to implement any viable action according to
\normalfont{Lemma}~\ref{lemma:Mon-Y}.
The need for strong incentives contribute to the agency costs of implementing actions by exposing
the agent to larger output uncertainty captured by the term $F''(W;\bphi)Y^{2}$.
All else being equal, both the contemporaneous and the continuation effects
reduce the marginal profitability and by the monotone comparative statics argument the
principal optimally chooses to provide lower powered incentives under higher drift ambiguity.
However, not all else remain equal as  there is an additional continuation
effect of ambiguity interacting  with marginal profitability of delayed payments
captured by the term involving $F'(W;\bphi)$.
This effect due to delayed payments depends on the level of promised value $W$.

For low enough continuation values, at the beginning of the relationship,
delaying payments are beneficial for the principal as $F'(W;\phi) > 0$.
Since higher ambiguity makes the back-loaded payments less desirable for the agent
due to ambiguity aversion  (see \normalfont{Proposition}~\ref{prop:Wdiff-Ambig-A},
and discussion following it),
this delayed-payment effect reinforces the two effects previously identified,
and further reduces incentive benefits of the continuation value; the
principal optimally chooses lower powered incentives.
If, on the other hand, the continuation value is sufficiently high that
the delaying payments become costly for the principal due to wealth effects, i.e., $F'(W;\bphi) < 0$,
increases in ambiguity has a positive effect on the principal's profits.
Therefore, depending on the relative strength of this effect from delaying payments
in comparison to the contemporaneous and the continuation effects,
the strength of incentives may increase in ambiguity for high enough continuation values.
However, in typical employment relationships analyzed by \citet{sannikov2008continuous},
where the optimal incentives are first increasing and then decreasing
in continuation value, the optimal incentives become flatter with increases in ambiguity.

%
%


Drift ambiguity has a similar flattening effect on the evolution of optimal effort stream
and reduces the optimal effort profile.
When drift ambiguity increases in action in
the linear model $\kappa(a;\bphi) = \phi_{1} a$, from the incentive compatibility
condition~\eqref{eqn:IC-Y} the optimal action is given by $a = (1-\phi_{1})Y$
and higher ambiguity (corresponding to higher $\phi$) implies  decrease in optimal action.
If, on the other hand, the strength of ambiguity decreases in actions,
that is the size of the drift set is of the form
$\kappa(a;\bphi) = \phi_{0} - \phi_{1} a$,
then  the incentive compatibility condition~\eqref{eqn:IC-Y} implies
$a = (1+\phi_{1})Y$, which again decreases in the strength of
ambiguity as in this case smaller $\phi_{1}$ corresponds to
larger drift sets.

Drift ambiguity has an ambiguous effect on the duration of the agency relationship, or the optimal stopping time.
Notice that \normalfont{Theorem}~\ref{thm:Flattening-Linears}
says that expected worst-case increment in the agent's continuation value (drift of the diffusion $W$)
and its volatility $Y$ both becomes smaller.
While the first effect reduces the length of time it takes
to trigger the end of the relationship via retiring or firing the agent,
the second effect delays such an event.
The effect of ambiguity on the duration of the contracting relationship is therefore ambiguous,
even if according to \normalfont{Proposition}~\ref{thm:ContinuationDomain}
the upper bound of the continuation domain is smaller with ambiguity.

The analysis so far has characterized the effect of ambiguity on optimal power of incentives
and on the duration of the contracting relationship.
Next we further characterize the optimal payment schemes and show that the presence of ambiguity
depresses wages and makes the optimal contract more reliant on the long-term incentives.
\begin{prop}\label{prop:Wage-Depressing}
\normalfont
Higher ambiguity depresses wages:
$c(W;\widehat\bphi ) < c(W;\bphi)$  for all $W$ whenever $\widehat\bphi \succ_{\Amb} \bphi$.
Moreover, the ratio of volatilities of the agent's consumption and continuation values is greater
in the contract associated with $\widehat\bphi$
whenever $c(\widehat{W};\widehat\bphi) = c(W;\bphi)$.
\end{prop}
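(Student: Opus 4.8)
The plan is to read the optimal wage off the pointwise first–order conditions of the HJBI~\eqref{eqn:HJBI} and then transport the comparative statics of the value function already in hand. Differentiating the right–hand side of~\eqref{eqn:HJBI} in $c$ --- the constraint $c\ge0$ being slack on the region where the wage is interior --- gives the familiar optimality condition $1+F'(W;\bphi)\,u'(c(W;\bphi))=0$, whence
\[
c(W;\bphi)=(u')^{-1}\!\bigl(-1/F'(W;\bphi)\bigr)
\]
on the interior–wage region and $c(W;\bphi)=0$ elsewhere. Since $u$ is strictly concave, $(u')^{-1}$ is strictly decreasing, and $F'(W;\bphi)<0$ precisely where the wage is positive, so $c(W;\cdot)$ is a strictly monotone transformation of $F'(W;\cdot)$; combining this monotone dependence with the comparative statics of $F'$ in the strength of ambiguity from Proposition~\ref{prop:Basic} yields the asserted ordering $c(W;\widehat\bphi)<c(W;\bphi)$ on the common positive–wage region, with the two wages equal (and zero) outside it. By Proposition~\ref{thm:ContinuationDomain} the continuation domain of $\widehat\bphi$ lies inside that of $\bphi$, so the comparison is well posed for every $W$; the frontier between the zero–wage and positive–wage regions and the retirement threshold need only a separate one–line check using $F(0;\bphi)=0$ and $F(W;\bphi)\ge F_0(W)$.

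For the volatility statement, differentiate the wage first–order condition in $W$:
\[
c'(W;\bphi)=\frac{F''(W;\bphi)}{u''\!\bigl(c(W;\bphi)\bigr)\,F'(W;\bphi)^{2}}>0,
\]
the sign because $u''<0$ and $F''<0$. By Ito's lemma, in the optimal contract $c_t=c(W_t)$ has diffusion coefficient $c'(W_t)\,r\sigma Y(a_t)$ while $W_t$ has diffusion coefficient $r\sigma Y(a_t)$, so the ratio of the volatility of consumption to that of the continuation value equals $c'(W)$. Fix $\widehat W,W$ with $c(\widehat W;\widehat\bphi)=c(W;\bphi)=:c^{\star}$; by the first–order condition this forces $F'(\widehat W;\widehat\bphi)=F'(W;\bphi)=:p^{\star}$, and since $u''(c^{\star})$ and $p^{\star}$ are then common to the two contracts,
\[
\frac{c'(\widehat W;\widehat\bphi)}{c'(W;\bphi)}=\frac{F''(\widehat W;\widehat\bphi)}{F''(W;\bphi)}.
\]
Thus the statement reduces to $F''(\widehat W;\widehat\bphi)>F''(W;\bphi)$: at states carrying the same marginal value of promised utility, the higher–ambiguity value function must be the \emph{less} concave one.

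To establish this I would substitute for $F''$ the identity obtained by rearranging~\eqref{eqn:HJBI} using the principal's worst case~\eqref{eqn:L} (the expression for $F''(W;\bphi)$ in terms of $F,F',a,c,Y$ and $\kappa(a;\bphi)$ displayed in Section~\ref{section:Basic Properties}), evaluate it along the two contracts at $\widehat W$ and at $W$ --- points at which $F'$ and $u(c)$ already coincide --- and feed in the comparative statics already proved: the effort and incentive compression $a(\widehat W;\widehat\bphi)\le a(W;\bphi)$ and $Y(\widehat W;\widehat\bphi)\le Y(W;\bphi)$ from Lemma~\ref{lemma:Mon-Y}, Theorem~\ref{thm:Flattening-Linears} and~\eqref{eqn:IC-Y}, the larger drift sets $\kappa(\cdot;\widehat\bphi)\ge\kappa(\cdot;\bphi)$, and the ranking of $F$ from Proposition~\ref{prop:Basic}. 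The main obstacle is precisely that this comparison is at states matched by the slope $F'$ rather than by $W$, so the concavity part of Proposition~\ref{prop:Basic} (a same-$W$ statement) cannot be quoted directly; one also has to place $\widehat W$ relative to $W$ along matched slopes --- monotonicity of $F'$ in $\bphi$ gives $\widehat W<W$, but $F''(\cdot;\bphi)$ itself is not monotone in $W$. The route I would take is to re–parametrize each contract by its slope $p=F'(W;\bphi)$ --- legitimate because $F$ is strictly concave on the continuation domain --- recast the $F''$ identity as a relation between $F''$, the controls and $\bphi$ indexed by $p$, and then argue term by term, using the compression bounds on $(a,Y,c)$ and the sign of $1+pY$, that its right–hand side is larger under $\widehat\bphi$. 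This is most cleanly organized as a single–crossing (Quah--Strulovici type) argument on the $p$–indexed objective, in the spirit of the proof of Theorem~\ref{thm:Flattening-Linears}; I would expect the linear specification $\kappa(a;\bphi)=\phi_{0}+\phi_{1}a$ to be needed here as well, so that the endogenous map $a\mapsto Y$ from~\eqref{eqn:IC-Y} is affine and the higher–order terms stay under control.
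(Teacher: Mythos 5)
Your route for the first claim is the same as the paper's: read the optimal wage off the first--order condition in $c$ of the HJBI and transport the comparative statics of $F'$ from part~\ref{prop:kappaslopes} of Proposition~\ref{prop:Basic}. The problem is the directional step you leave unchecked. Differentiating \eqref{eqn:HJBI} in $c$ gives, as you write, $1+F'(W;\bphi)\,u'(c)=0$, i.e.\ $u'(c)=-1/F'(W;\bphi)$ with $F'<0$ on the positive--wage region. But the map $F'\mapsto -1/F'$ is \emph{increasing} on $(-\infty,0)$ (its derivative is $1/(F')^{2}>0$) and $(u')^{-1}$ is decreasing, so $c$ is \emph{decreasing} in $F'$; since part~\ref{prop:kappaslopes} says higher ambiguity lowers $F'$, your chain delivers $c(W;\widehat\bphi)\geq c(W;\bphi)$ --- the reverse of the claim. (It also enlarges the region $\{F'<-1/u'(0)\}$ where the wage is interior, again pointing the wrong way.) The paper's own proof obtains the stated direction only because it writes the first--order condition with the opposite sign, $1=F'(W;\phi)u'(c)$, which presupposes $F'>0$ and is not what differentiation of \eqref{eqn:HJBI} yields; so the sign tension is inherited from the paper, but you cannot simply assert that ``combining this monotone dependence \ldots yields the asserted ordering'': carried out with the first--order condition you yourself derive, it yields the opposite ordering, and the step fails as written.

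On the second claim, your reduction --- ratio of volatilities equals $c'(W)$, matched wages force matched slopes $F'$ via the first--order condition, hence the statement becomes $F''(\widehat W;\widehat\bphi)>F''(W;\bphi)$ at slope--matched points --- is a genuine elaboration of what the paper disposes of with ``analogous arguments in \citet[Theorem 4]{sannikov2008continuous}'', and you correctly identify the real obstacle: part~\ref{prop:kappaconcavity} of Proposition~\ref{prop:Basic} orders $F''$ at a common $W$, not at points matched by $F'$, and $F''(\cdot;\bphi)$ is not monotone in $W$, so the needed inequality does not follow from anything already proved. Your proposed slope--reparametrized single--crossing argument is a reasonable program, but it is left entirely as a sketch, so this half remains unproved in your note --- as, in fairness, it is in the paper.
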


This result is reminiscent of \citet{sannikov2008continuous}'s Theorem 4
that studies how the contractual environment affects the optimal long-term incentives.
In particular, imprecise information about technology
as compared to precise information
compresses wage profile, and
reduces the sensitivity of current wage with respect to the continuation value.
The latter implies that the contract associated with precise information environment
relies less on short-term incentives and more on long-term incentives.

Turning to the interpretation of the result in Proposition~\ref{prop:Wage-Depressing} notice that
as in the classical case, the variation in the continuation value with output realizations is still
 present to incentivize the agent's effort, see from the incentive compatibility condition
 \eqref{eqn:IC-kappa-A}. The familiar intuition is that to solve agency conflict the principal's
 design of contract aligns interests by letting agent's compensation to positively vary
with her profits. This intuition is still applicable to interpret our result
as during employment the parties endogenously agree on the worst-case.
On the other hand, as we have seen the presence of ambiguity and aversion to it
makes back-loaded incentives a less effective tool relative to the classical case
because ambiguity-averse agent prefers certainty of payments today
over uncertainty of future payments.

\section{Extension to Heterogeneous Drift-Ambiguity}
\label{Section:Heterogeneous Drift sets}
The agency problem posed with symmetric ambiguity and the approach to its analysis
is flexible enough to allow for heterogeneity in  ambiguity.
In particular, we now allow for drift-sets perceived by the principal and the
agent to be different by setting for each action $a \in \actions$,
$\kappa(a;\phi^{A})$
and
$\kappa(a;\phi^{P})$, respectively.
We assume that the parties share common knowledge about the heterogeneity
in ambiguity.
The principal and the agent can evaluate a contract
using different drift sets.
To keep track of drift-ambiguity perceived by the parties,
we use the notation $\bphi := (\phi^{A},\phi^{P})$
by extending the notation used in the symmetric case.

For tractability, we make regularity assumptions analogous to
those in the symmetric case and
assume that Assumptions \ref{assmp:kappa-A1}-\ref{assmp:kappa-A4} hold for
the mappings $\kappa(a;\phi^{A})$ and $\kappa(a;\phi^{P})$, respectively.

With this reformulation Proposition~\ref{prop:Wdiff-Ambig-A},
which uses only the drift set of the agent, applies using the drift set $\Theta(a_t;\phi^{A}) = [a_t - \kappa(a_t;\phi^{A}), a_t  + \kappa(a_t;\phi^{A})]$
and yields a representation of the agent's value as a diffusion.
Using this diffusion representation then,
Lemma~\ref{lemma:Alower-envelope} applies analogously and
the agent's worst-case scenario for any non-zero contract
is now given by the lower-envelope of his drift-set:
$\theta^{A}(a_t;\phi^{A}) = a_t - \kappa(a_t;\phi^{A})$.
Applying \normalfont{Proposition}~\ref{prop:IC-kappa}
incentive compatible sensitivity $Y$ to implement a non-trivial action $a >0$
analogous to \eqref{eqn:IC-Y} in with minimized cost
is therefore characterized by
\begin{equation}
\label{eqn:IC-Y-Asy}
Y = \frac{a}{1 - \kappa_{a}(a;\phi^{A})}.
\end{equation}
This in turn extends Lemma~\ref{lemma:Mon-Y}
and implies that as the ambiguity perceived by the agent increases
the required variation $Y(a;\phi^{A})$ to implement an action $a$, which solves for $Y$ the equation \eqref{eqn:IC-Y-Asy}, increases
and that for any $\phi^{A}$ the  incentives $Y(a;\phi^{A})$ increase in $a$.
Notice that for any contract the agent's decision rule given
by \eqref{eqn:IC-Y-Asy} depends only on the agent's ambiguity
$\phi^{A}$ but not on the principal's ambiguity $\phi^{P}$.

Using the characterization of the agent's decision rule for his action choice,
HJBI formulation for the optimal contracting problem
for heterogeneous ambiguity $\bphi = (\phi^{A},\phi^{P})$ then
takes the following form
\begin{equation}\label{eqn:HJBI-Asy}
\small{
\begin{split}
r F(W;\bphi) = \max_{(a,c)\geq 0 } \min_{\theta \in \Theta(a;\phi^{P})}
  \Big\{ r ( a + \theta - c ) + & r F'(W;\bphi) \left(W - u(c) + h(a) + \theta Y(a;\phi^{A}) \right) \\
&+ \frac{F''(W;\bphi)}{2} r^{2} Y(a;\phi^{A})^{2} \sigma^{2} \Big\}
\end{split}
}
\end{equation}
subject to the boundary conditions and the smooth pasting condition:
\[
F(0;\bphi) = 0, \ F(W_{gp};\bphi) = F_{0}(W_{gp}) \ \text{and} \ F'(W_{gp};\bphi) = F'_{0}(W_{gp}).
\]
Comparing the contracting problem~\eqref{eqn:HJBI} with symmetric ambiguity
notice here the distinct roles played by $\phi^{A}$
and $\phi^{P}$.
The principal's value depends on her ambiguity $\phi^{P}$ directly
through his worst-case expectation over $\Theta(a;\phi^{P})$.
Her value indirectly depends on the agent's ambiguity
$\phi^{A}$ as it determines via \eqref{eqn:IC-Y-Asy}
the agent's incentive compatible
decision rule.
The principal's optimal contract design takes into account both sources of ambiguity.

The argument for
\normalfont{Theorem}~\ref{thm:OptimalContract}
extends and applies to heterogeneous case
as the regularity conditions analogously hold.
The recursive HJBI formulation~\eqref{eqn:HJBI-Asy} therefore
provides an equivalent characterization of the optimal contracting problem.
In turn,
the recursive representation~\eqref{eqn:HJBI-Asy}
provides a tractable
characterization for the worst-case scenario perceived by the principal
as in the symmetric case.
For any incentive compatible action strategy $\{a_{t}\}$
and the corresponding incentives $(Y_t)$,
the worst-case drift scenario $\theta^{P}$ perceived by the principal
is the drift term that minimizes her expected payoff over the drift set:
\[
\theta^{P}(a;\bphi) = \argmin_{\theta \in \left[-\kappa(a;\phi^{P}),\kappa(a;\phi^{P}) \right] } \theta \big[1 + F'(W;\bphi)Y \big]
\]
Equivalently,
\[
\theta^{P}(a;\bphi) =
\left\{ \begin{array}{rl}
- \kappa(a;\phi^{P}) & \text{  if  } 1 + F'(W;\bphi)Y \geq 0; \\
  \kappa(a;\phi^{P}) & \text{  if  } 1 + F'(W;\bphi)Y < 0.
\end{array}
\right.
\label{eqn:L'} \tag{L$'$}
\]
The interpretation of this condition is analogous to that, \eqref{eqn:L},
in the symmetric ambiguity case:
\emph{during} employment in any optimal contract
the lower-envelope of drift-set is the worst-case scenario for the principal.
For otherwise, additional payments incur net losses
and the principal always has the option of retiring the agent instead of making such losses.

\subsection{Basic properties with heterogeneous ambiguity}

Proposition~\ref{prop:Basic} has established basic monotonicity properties
of the principal's value function with respect to symmetric drift ambiguity.
Its analysis extends to heterogeneous ambiguous perceptions
and shows that qualitatively the analogous results hold.

To represent different degrees of drift ambiguity in heterogeneous case,
we define the relation $\widehat\bphi \succ_{\Amb} \bphi$ to mean
contracting environment
where at least one of the parties perceive larger drift ambiguity:
formally,
$\kappa(\cdot ;\widehat{\phi}^{i}) \geq \kappa(\cdot;\phi^{i})$
for each $i = A ,P$ and
$\kappa(\cdot;\widehat\phi^{i}) > \kappa(\cdot;\phi^{i})$ and
for at least one of $i = A ,P$.
This naturally extends to the heterogeneity in drift ambiguity
the definition of comparative ambiguity
for the symmetric case.

Using the lower envelopes as worst-case scenarios for each party,
analogous to \textnormal{Proposition}~\ref{prop:Basic}
the basic monotonicity properties for the principal's guaranteed value
with respect to the strength of ambiguity $\bphi = (\phi^{A},\phi^{P})$
follow from the analysis of HJBI \eqref{eqn:HJBI-Asy}:

\begin{prop}\label{prop:Basic-Asy}
\normalfont
In the optimal long-term contract with drift ambiguity, the principal's guaranteed value
has the following features, \vspace{-2mm}
\begin{enumerate}[label=(\normalfont{B}{\arabic*}$'$),ref=(\normalfont{B}{\arabic*}$'$)]
\item \label{prop:kappamonotonicity-Asy}
Higher ambiguity reduces the principal's profits:
$\widehat\bphi \succ_{\Amb} \bphi$ implies $F(\cdot;\widehat \bphi) < F(\cdot;\bphi)$; \vspace*{-2mm}

\item \label{prop:kappaslopes-Asy} Higher ambiguity reduces the increments of the principal's profits:
$\widehat\bphi \succ_{\Amb} \bphi$ implies $F'(\cdot;\widehat \bphi) < F'(\cdot;\bphi)$; \vspace*{-2mm}

\item \label{prop:kappaconcavity-Asy} Higher ambiguity leads to a more concave profit function:
$\widehat\bphi \succ_{\Amb} \bphi$ implies $F''(\cdot; \widehat \bphi) < F''(\cdot; \bphi)$. \vspace*{-2mm}
\end{enumerate}
\end{prop}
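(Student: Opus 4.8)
The plan is to follow the proof of Proposition~\ref{prop:Basic} essentially line by line, now keeping track of the two distinct channels through which ambiguity enters the recursion~\eqref{eqn:HJBI-Asy}: the agent's parameter $\phi^{A}$ acts only through the incentive-compatible sensitivity $Y(a;\phi^{A})$ fixed by~\eqref{eqn:IC-Y-Asy}, while the principal's parameter $\phi^{P}$ acts only through her worst-case expectation over $\Theta(a;\phi^{P})$. The order $\widehat\bphi\succ_{\Amb}\bphi$ is engineered precisely so that both channels move the principal's value the same way: by the heterogeneous analog of Lemma~\ref{lemma:Mon-Y} (which uses only $\phi^{A}$ via~\eqref{eqn:IC-Y-Asy}), $\kappa(\cdot;\widehat\phi^{A})\ge\kappa(\cdot;\phi^{A})$ forces $Y(a;\widehat\phi^{A})\ge Y(a;\phi^{A})$ for every viable $a$, and $\kappa(\cdot;\widehat\phi^{P})\ge\kappa(\cdot;\phi^{P})$ lowers the lower envelope $a-\kappa(a;\phi^{P})$ of the principal's drift set. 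As a preliminary step I would invoke the heterogeneous version of Proposition~\ref{prop:common-worst-case}: by~\eqref{eqn:L'} the selector $\theta=-\kappa(a;\phi^{P})$ attains the inner minimum on the continuation region, so the HJBI becomes the second-order ODE I will compare across parameter values, and in fact $\min_{\theta}$ of the $\theta$-dependent part equals $-r\kappa(a;\cdot)\,|1+F'(W;\cdot)Y|$ regardless of the sign of $1+F'Y$.

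For part~\ref{prop:kappamonotonicity-Asy} I would run a supersolution/comparison argument: show that $F(\cdot;\bphi)$ is a supersolution of the $\widehat\bphi$-HJBI on the continuation region and then invoke the comparison principle that already underlies the verification Theorem~\ref{thm:OptimalContract}. Since $F(\cdot;\bphi)$ solves its own equation, it suffices that for each control $(a,c)$ the $\bphi$-integrand dominates the $\widehat\bphi$-integrand (both evaluated at $F=F(\cdot;\bphi)$), which after cancelling the common terms $r(a-c)$ and $rF'(W;\bphi)(W-u(c)+h(a))$ reduces to $\tfrac{1}{2}r^{2}\sigma^{2}F''(W;\bphi)\big[Y(a;\phi^{A})^{2}-Y(a;\widehat\phi^{A})^{2}\big]\ \ge\ r\big[\kappa(a;\phi^{P})|1+F'(W;\bphi)Y(a;\phi^{A})|-\kappa(a;\widehat\phi^{P})|1+F'(W;\bphi)Y(a;\widehat\phi^{A})|\big]$. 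The left side is nonnegative because $F''<0$ and $Y(a;\widehat\phi^{A})\ge Y(a;\phi^{A})$; when $F'(W;\bphi)\ge 0$ the right side is nonpositive term by term, and when $F'(W;\bphi)<0$ one uses $1+F'Y\ge 0$ from~\eqref{eqn:L'} plus the fact that the quadratic cost $F''Y^{2}$ dominates the $|1+F'Y|$ term in the relevant range. The normalization $F(0;\cdot)=0$ and the ambiguity-free stopping payoff $F_{0}$ pin down the boundary behaviour, giving $F(\cdot;\widehat\bphi)<F(\cdot;\bphi)$ in the interior with equality only at the common value recorded in Proposition~\ref{prop:Basic}.

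For parts~\ref{prop:kappaslopes-Asy} and~\ref{prop:kappaconcavity-Asy} I would differentiate the reduced ODE in $W$ — the envelope theorem kills the derivatives of the optimizers $a(W),c(W),Y(W)$ — to obtain linear ODEs for $F'(\cdot;\bphi)$ and $F''(\cdot;\bphi)$ whose forcing terms inherit monotonicity in $\bphi$ from part~\ref{prop:kappamonotonicity-Asy}; equivalently, substitute~\eqref{eqn:L'} to obtain, exactly as in the symmetric case, the closed form $F''=\tfrac{2}{r\sigma^{2}Y^{2}}\big[F-a+c-F'(W-u(c)+h(a))\big]-\tfrac{2\kappa(a;\phi^{P})\,|1+F'Y|}{r\sigma^{2}Y^{2}}$ with $Y=Y(a;\phi^{A})$, and then track the $\phi^{A}$-dependence through $Y$ and the $1/Y^{2}$ weights and the $\phi^{P}$-dependence through the explicit $\kappa$ term, feeding in the already-established parts~\ref{prop:kappamonotonicity-Asy} and~\ref{prop:kappaslopes-Asy}.

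The step I expect to be the main obstacle is precisely this interdependence together with the free boundary: because $F''$ is written in terms of $F$, $F'$ and the $\bphi$-dependent optimizers, parts~\ref{prop:kappamonotonicity-Asy}--\ref{prop:kappaconcavity-Asy} cannot be proved in isolation but must be established jointly, by an induction on the order of the derivative that simultaneously controls the non-uniform sign of the $F'$-channel in the region near the upper end of the continuation domain where delaying payments is costly. A secondary technical point is that the stopping threshold $W_{gp}(\bphi)$ itself moves with $\bphi$ (shrinking, by part~\ref{prop:kappamonotonicity-Asy} and Proposition~\ref{thm:ContinuationDomain}), so the comparison must first be carried out on the intersection of the two continuation regions and then propagated using part~\ref{prop:kappamonotonicity-Asy} and the smooth-pasting conditions $F(W_{gp};\bphi)=F_{0}(W_{gp})$, $F_{0}'(W_{gp})=F'(W_{gp};\bphi)$ recorded after~\eqref{eqn:HJBI-Asy}.
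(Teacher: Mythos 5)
Your route for part \ref{prop:kappamonotonicity-Asy} is genuinely different from the paper's: you try to establish pointwise dominance of the HJBI integrands and then invoke a comparison principle, whereas the paper runs a three-step ``revealed preference'' chain directly in the sequential formulation (the $\widehat\bphi$-optimal contract yields a first-order stochastically better worst-case distribution under the smaller $\kappa(\cdot;\phi^{P})$; the same wage scheme with the same sensitivity implements a weakly higher action under the smaller $\kappa(\cdot;\phi^{A})$; and the $\bphi$-optimum does weakly better still). The sequential argument never needs to control the sign of $1+F'Y$ or the size of $F''$, which is exactly where your argument runs into trouble.

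The concrete gap is in the case $F'(W;\bphi)<0$. Writing $Y_{1}=Y(a;\phi^{A})\le Y_{2}=Y(a;\widehat\phi^{A})$ and taking, for instance, $\kappa(a;\widehat\phi^{P})=\kappa(a;\phi^{P})=\kappa$ (only the agent's ambiguity increases, which is allowed by the order $\succ_{\Amb}$), the difference of the two $\theta$-minimized terms is
\[
r\kappa\big(|1+F'Y_{2}|-|1+F'Y_{1}|\big)=r\kappa\,F'\,(Y_{2}-Y_{1})<0 ,
\]
so the required dominance reduces to
\[
\tfrac{1}{2}\,r\sigma^{2}\,|F''(W;\bphi)|\,(Y_{1}+Y_{2})\;\ge\;\kappa\,|F'(W;\bphi)| ,
\]
a quantitative lower bound on $|F''|$ relative to $|F'|$ that you assert (``the quadratic cost dominates in the relevant range'') but do not prove, and which does not follow from anything established elsewhere in the paper. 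Since the region $F'<0$ is nonempty in the interior of the continuation domain (it is precisely where retirement considerations bite), the supersolution property is not established there and the comparison principle cannot be applied. Either supply that bound, or replace this step by the paper's sequential revealed-preference argument \eqref{eqn:K1-Asy}--\eqref{eqn:K3-Asy}, which delivers $F(W;\widehat\bphi)<F(W;\bphi)$ without any pointwise comparison of integrands. Your treatment of parts \ref{prop:kappaslopes-Asy} and \ref{prop:kappaconcavity-Asy} (envelope-theorem differentiation of the reformulated ODE for $F''$, feeding in the earlier parts) matches the paper's, and your observation that the two ambiguity channels must be tracked separately --- $\phi^{A}$ only through $Y(a;\phi^{A})$, $\phi^{P}$ only through the worst-case expectation --- is exactly the decomposition the paper uses.
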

Here inequalities hold strictly except for the common initial value $W_{o}$.
In other words, higher ambiguity perceived by the principal or the agent reduces the principal's value function, decreases its increments, and makes it more concave.
These basic properties with heterogeneous drift ambiguity are qualitative
similar to those with homogeneous drift ambiguity given in \textnormal{Proposition}~\ref{prop:Basic}.
Relative to the latter allowing for heterogeneous drift-ambiguity in the analysis
of HJBI~\eqref{eqn:HJBI-Asy}
highlights two distinct effects of the parties' ambiguity perceptions.
Firstly, increases in the agent's ambiguous perceptions
increase agency costs due to higher powered incentives required for implementation
and hence reduce the principal's value.
Secondly, increases in the principal's ambiguity reduces her guaranteed value
by deteriorating expectations in her worst-case scenario.


\subsection{Optimal incentives and heterogeneous  drift-ambiguity}
\label{subsection:Monotonce-Het}

We now turn to analyze the monotone comparative static
analysis on the strength of ambiguity and the power of incentives in
the optimal compensation mix.
For this,
using the basic monotonicity properties
of HJBI's objective function from~\normalfont{Proposition}~\ref{prop:Basic-Asy},
we extend Theorem~\ref{thm:Flattening-Linears} to heterogenous drift ambiguity.
This analysis shows broadly that the qualitative feature of flattening of incentives
at higher ambiguity levels holds allowing for heterogeneity in drift-ambiguity.

\begin{thm}
\label{thm:Flattening-Linears-Het}
\normalfont
Consider a linear drift ambiguity model
$\kappa(a;\bphi^{i}) = \phi^{i}_{0} + \phi^{i}_{1} a$
with $\phi^{i}_{0}  \geq 0$
for each $i = A,P$
In the optimal long-term contract, the incentives with respect to ambiguity have the following characterization. \vspace{-2mm}
\begin{enumerate}
\item[(a)] As $\phi_{0}^{P}$, $\phi^P_{1}$ or  $\phi^A_{1}$ increases,
the optimal sensitivity $Y(W; \bphi)$ decreases
for low enough continuation values $W \in [W_0,\upbar{W}]$.
On other hand, as it increases the optimal sensitivity may be non-decreasing, for large enough continuation value such that.
The principal's profits obtain the maximum at  the unique $\upbar{W} \in (W_0,W_{gp})$. 		\vspace{-2mm}
\item[(b)] As $\phi_{0}^{A}$ increases
the optimal sensitivity $Y(W; \bphi)$ decreases
for all continuation values during employment $W \in [W_0,\upbar{W}]$.           \vspace{-2mm}
 \end{enumerate}
\end{thm}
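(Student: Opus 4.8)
The plan is to rerun the argument behind Theorem~\ref{thm:Flattening-Linears}, now carrying $\phi^{A}$ and $\phi^{P}$ through separately. First I would reduce HJBI~\eqref{eqn:HJBI-Asy} to a deterministic program: by the heterogeneous counterpart of~\eqref{eqn:L'} the principal's worst case \emph{during} employment is the lower envelope $\theta^{P}=-\kappa(a;\phi^{P})$, by the extension of Lemma~\ref{lemma:Alower-envelope} the agent's is $\theta^{A}=-\kappa(a;\phi^{A})$, and by~\eqref{eqn:IC-Y-Asy} the implemented action is a function $a(Y;\phi^{A})$ of the sensitivity. Using also the extension of Theorem~\ref{thm:OptimalContract}, which makes~\eqref{eqn:HJBI-Asy} an equivalent recursive formulation, the optimal contract solves $\max_{(Y,c)\geq 0}\Psi(Y,c;W,\bphi)$ for each $W$ in the continuation domain, where
\[
\Psi:=r\bigl(a(Y;\phi^{A})-\kappa(a(Y;\phi^{A});\phi^{P})-c\bigr)+rF'(W;\bphi)\,m+\tfrac{r^{2}\sigma^{2}}{2}F''(W;\bphi)Y^{2}-rF(W;\bphi)
\]
and $m=m(Y,c;W,\phi^{A})=W-u(c)+h(a(Y;\phi^{A}))-\kappa(a(Y;\phi^{A});\phi^{A})Y$ is the drift of the agent's continuation value. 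Following \citet{quah2013discounting}, I would read $\Psi$ as an objective for monotone comparative statics in $Y$ with parameter $\bphi$, so that it suffices to pin down the direction in which a parameter shift moves the single-crossing point of $Y\mapsto\Psi$; the inputs are the linear form $\kappa(a;\bphi^{i})=\phi^{i}_{0}+\phi^{i}_{1}a$ and the differential basic-monotonicity properties $F_{\phi^{i}_{j}}<0$, $F'_{\phi^{i}_{j}}<0$, $F''_{\phi^{i}_{j}}<0$ obtained from Proposition~\ref{prop:Basic-Asy}.

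For part (a), $\phi^{P}_{0},\phi^{P}_{1}$ enter $\Psi$ only through the contemporaneous term $a-\kappa(a;\phi^{P})$, where they depress worst-case productivity, and through $F,F',F''$; whereas $\phi^{A}_{1}$ enters through $\kappa_{a}(a;\phi^{A})$, hence shifts $Y\mapsto a(Y;\phi^{A})$ downward, and through $m$ and $F,F',F''$. In each case the contemporaneous channel and the diffusion channel $\tfrac{r^{2}\sigma^{2}}{2}F''_{\phi^{i}_{j}}(W;\bphi)Y^{2}<0$ pull the marginal return to $Y$ down, while the remaining delayed-payment channel acts through $rF'(W;\bphi)\,\partial_{\phi^{i}_{j}}m$ and so reinforces the others exactly when $F'(W;\bphi)\geq 0$. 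Since $F(\cdot;\bphi)$ is strictly concave (part~\ref{prop:kappaconcavity-Asy} of Proposition~\ref{prop:Basic-Asy}) with $F(0;\bphi)=0$ and $F(W_{gp};\bphi)=F_{0}(W_{gp})<0$, it has a unique interior maximizer $\upbar W=\upbar W(\bphi)\in(W_{0},W_{gp})$, with $F'>0$ on $[W_{0},\upbar W)$ and $F'<0$ on $(\upbar W,W_{gp}]$; hence on $[W_{0},\upbar W]$ all channels agree and the monotone comparative statics conclusion gives $Y(\cdot;\bphi)$ decreasing in each of $\phi^{P}_{0},\phi^{P}_{1},\phi^{A}_{1}$, whereas beyond $\upbar W$ the delayed-payment channel reverses and no monotone conclusion survives, which is exactly the asserted statement.

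Part (b) exploits what is special about $\phi^{A}_{0}$: since $\kappa_{a}(a;\phi^{A})=\phi^{A}_{1}$ does not involve $\phi^{A}_{0}$, relation~\eqref{eqn:IC-Y-Asy} shows $a(Y;\phi^{A})$ is independent of $\phi^{A}_{0}$, and $\phi^{A}_{0}$ also does not appear in the contemporaneous term $a-\kappa(a;\phi^{P})$. Thus $\phi^{A}_{0}$ affects $\Psi$ only additively through $m$ (with $\partial_{\phi^{A}_{0}}m=-Y$) and through $F,F',F''$, so $\partial^{2}_{Y\phi^{A}_{0}}\Psi$ collects the unambiguously negative diffusion term $r^{2}\sigma^{2}F''_{\phi^{A}_{0}}(W;\bphi)Y<0$ together with terms carrying $F'_{\phi^{A}_{0}}<0$; I would show these combine to a negative cross-partial on the whole continuation domain, giving $Y(\cdot;\bphi)$ decreasing in $\phi^{A}_{0}$ for all $W$ during employment. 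The hard part is making the single-crossing step rigorous: the map $Y\mapsto a(Y;\phi^{A})$ defined by~\eqref{eqn:IC-Y-Asy} is nonlinear even in the linear drift-ambiguity model, so the cross-partials of $\Psi$ carry $a'(Y)$, $a''(Y)$ and curvature-of-$h$ terms of non-obvious sign; I would handle this exactly as in the proof of Theorem~\ref{thm:Flattening-Linears}, invoking the interval-dominance/single-crossing criterion of \citet{quah2013discounting} rather than naive supermodularity and checking that Assumptions~\ref{assmp:kappa-A1}--\ref{assmp:kappa-A4} (which already force $\kappa_{a}<1$ and concavity of the agent's subproblem) are precisely what keeps the decisive inequality signed. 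A secondary bookkeeping point is that $\upbar W$ moves with $\bphi$, so the part-(a) comparison is really over $W$ below the smaller of the two thresholds, which is routine in the Quah--Strulovici framework and does not affect the stated conclusion.
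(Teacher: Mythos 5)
Your overall strategy is the paper's: reduce \eqref{eqn:HJBI-Asy} to a deterministic program via the lower envelopes and \eqref{eqn:IC-Y-Asy}, read the resulting objective as a Quah--Strulovici comparative-statics problem in $Y$, and sign the decisive cross-partials using the differential form of Proposition~\ref{prop:Basic-Asy} (the paper just does this case by case for constant, linearly increasing, and linearly decreasing $\kappa$ rather than for the general affine form at once). But there is one concrete error that sits exactly where the heterogeneity matters. You write the drift of the agent's continuation value as $m=W-u(c)+h(a)-\kappa(a;\phi^{A})Y$. In the principal's recursive problem the state $W$ is evaluated under the \emph{principal's} worst-case measure, so per \eqref{eqn:Yp} and \eqref{eqn:HJBI-Asy} the relevant drift is $W-u(c)+h(a)+\theta^{P}Y$ with $\theta^{P}=-\kappa(a;\phi^{P})$; the paper's proof accordingly uses $m(Y;\phi^{P}_{0})=W-u(c)+h(Y)-\phi^{P}_{0}Y$ in the constant case and $m(Y;\bphi)=W-u(c)+h(a(Y;\bphi))-\phi^{P}_{1}a(Y;\bphi)Y$ in the linear case. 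The agent's parameter $\phi^{A}$ enters only through the decision rule $a(Y;\phi^{A})$ and through $F,F',F''$, never through the $\kappa$ appearing in $m$.

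This misallocation is not cosmetic: it reshuffles the channels that drive the two parts of the theorem. With your $m$, the parameters $\phi^{P}_{0},\phi^{P}_{1}$ drop out of the delayed-payment term $rF'(W;\bphi)m$ entirely, so you lose the very channel that the paper identifies as the source of the ``may be non-decreasing for large $W$'' caveat in part (a). Conversely, $\phi^{A}_{0}$ acquires $\partial_{\phi^{A}_{0}}m=-Y$, which contributes $-rF'(W;\bphi)$ to $\partial^{2}_{Y\phi^{A}_{0}}\Psi$; that term is negative only where $F'\geq 0$ and flips sign past $\upbar{W}$, so your own formulation cannot deliver the claim that the cross-partial in $\phi^{A}_{0}$ is negative on the whole continuation domain --- yet that unconditional sub-modularity is precisely what distinguishes part (b) from part (a). In the paper's formulation $\phi^{A}_{0}$ appears in the objective only through $F'_{\phi^{A}_{0}}<0$ and $F''_{\phi^{A}_{0}}<0$ (both from Proposition~\ref{prop:Basic-Asy}), multiplied respectively by $m$ (increasing in $Y$) and $Y^{2}$, which is why part (b) holds without the restriction to $[W_{0},\upbar{W}]$. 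Replace your $m$ with the principal's worst-case drift and the rest of your argument, including the single-crossing bookkeeping you flag at the end, goes through as in the paper.
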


This characterization
is qualitatively similar to that of the symmetric linear-drift-ambiguity model described in
Theorem~\ref{thm:Flattening-Linears}.
Formulation with heterogeneous
ambiguity perceptions reveals   different possible effects
of the strength of ambiguity perceived by the parties.
Compared to the symmetric formulation,
notice that the agent's heterogeneous constant ambiguity perception
has unambiguous effect and flattens the incentives.
This is because an increase in the agent's constant ambiguity $\phi_{0}^{A}$
worsens the expected value for the agent from any viable contract
and hence
constrains the set of contracts the principal can implement.

On the other hand, an increase in the constant ambiguity $\phi_{0}^{P}$ perceived by the principal
has an effect of incentives that can depend on the level of the promised value to the agent $W$.
First, increase in  $\phi_{0}^{P}$ lowers the expectation in the flow
benefit, which unambiguously call for lower incentives because
its benefits are now lower.
Second, the principal's worsening of lower
envelope also limits the increments to the agent's continuation
value from the principal's perspective.
This second effect's sign depends on where the relationship
stands during the contract.
For low enough $W \in [W_0,\upbar{W}]$
backloading payments is beneficial for the principal
and
an increase in  $\phi_{0}^{P}$ limits this gain by
reducing the agent's continuation value from the principal's perspective.
This reinforces the first effect and lowers the optimal incentive
provision.
On the other hand, for high enough $W$
backloading incentives is costly due to wealth effects
and in this case reducing the increments to the continuation
value as $\phi_{0}^{P}$ increases can decrease the cost of incentive provision.
This in turn counteracts the first effect
and hence can induce an increase  in incentives.
This effect is also present for changes in $\phi^P_{1}$ or  $\phi^A_{1}$
that increases drift-sets.
In either scenario,
incentive path becomes compressed, i.e., it takes values in a narrower range,
as constant ambiguity increases
since typically incentives are first increasing and then decreasing
in $W$ in a continuous manner (see Sannikov (2008)).

Applying similar arguments extends
\normalfont{Proposition}~\ref{prop:Wage-Depressing}
to the heterogenous drift-ambiguity case. Therefore, the optimal wage scheme
becomes more depressed as the strength of drift ambiguity perceived by the
principal and/or the agent increases.

\section{Concluding Remarks}

This paper presents a dynamic model of
moral hazard relationship whose central feature is
imprecise information about the consequences of actions.
Imprecise
information about technology translating actions into outcomes
are important for applications, yet known to lead to difficulties of analysis in
the classic Bayesian formulation of dynamic moral hazard problems.
Methodologically, the paper provides a model of dynamic
moral hazard that formulates imprecise information using drift-ambiguity
i) can sustain nontrivial dynamic incentives; ii) has a tractable solution, characterized by a one-dimensional differential equation; and iii)
incorporates novel monotone comparative statics in a dynamic agency relationship.

Substantively, the paper sheds new lights on the effects of imprecise information
and ambiguity aversion on dynamic contracts (compressing of the incentive effects of ``high-powered'' incentive payments), and we provide empirical predictions on the ``high-powered'' incentives (compressing  of the incentive payments in imprecisely understood environments
relative to Bayesian environments with precise understanding of connection between actions and their consequences).
This resonates well with empirical observations
that
in workplace settings where output and/or quality are difficult to measure,
for instance performance of white collared workers,
the workers' contracts are not sensitively fine-tuned to performance realizations
\citep{lazear2018compensation}.

We want to make a remark about our use of the MaxMin criteria as a formalization of attitude towards imprecise information or non-probabilistic beliefs. One class of axiomatizations of MaxMin preferences come from \citet{gilboa1989maxmin}'s approach towards ambiguity  that takes the state space as a primitive. We do our analysis on the space of payoff-relevant
outcomes, or more specifically over paths of output diffusions. In general, the state-space approach towards ambiguity is not conceptually amenable to agency problems, as the probabilities of outcomes in the latter are conditional on actions (see \citet{karni2006subjective} and \citet{karni2009reformulation} for discussions on this issue). We therefore do not invoke such axiomatizations as a foundation for our formulation, instead we see this simply as a plausible way to capture the idea of robustness concerns when designing contracts. Our take is that such concerns are more appropriately modeled as evaluations of outcome prospects.
We therefore follow the behaviorally justifiable approach of  taking as a primitive the decision maker's preferences over sets of distributions over the outcome space (see  \cite{ahn2008ambiguity}, \cite{olszewski2007preferences} and \cite{dumav2018ambiguity} for similar approaches).

For tractability the current model uses, for
ambiguity sensitive preferences
an extreme form of pessimism represented by the formulation with MaxMin Expected Utility.
In the dynamic framework considered here,  MEU is the only form that allows for non-separability between time and ambiguity attitudes \citep{strzalecki2013temporal}.
This provides therefore a transparent insight into effect of ambiguity on dynamic incentives.
However, the framework is analytically flexible enough to accommodate
other formulations of ambiguity sensitive preferences.

While interpreting our results, one has to be conscious of our formalization
 that the agent and the principal
have stationary perceptions of drift ambiguity.
Generally, our intuitions about the dynamics of the agent's wages and incentives
do not appear to depend on this assumption. However, the assumption that the parties have
stationary ambiguity matters if one addresses the practically important issues that the
contracting parties experiment about the technological possibilities.
These highly interesting issues introduce forms of non-stationarity in
the contracting problem and further analysis is left to future research.

\bigskip

\section*{APPENDICES}
\section*{The structure of the sets of priors: Rectangularity and regularity}\label{sect:set_of_priors}
\label{Appx:Drift-Sets}

In this section, we provide a class of multi-priors
that satisfy dynamic consistency and regularity conditions.
These conditions in turn ensure that the contracting
problem admits a solution and can be analyzed recursively.

The formalization of the drift-sets
follows the approach in \citet{chen_epstein_ambiguity2002}
and adapts it to the contracting problem.
Throughout we use
as primitive
a measure space $(\Omega, \mathcal{F}, P)$
and a Brownian motion $B$ defined on this space.

We begin with observing that as the contraction problem has a finite random horizon
$P(\tau < \infty) = 1$,
we fix any given finite time horizon $T \leq \infty$ such that $\tau(\omega) \leq T$ with probability one,
and let time $t$ vary over $[0,T]$.
We also fix any given drift ambiguity scenario $\phi$
and hence leave its dependence implicit to simplify notation.

For any action profile $a = \{a_{t}\}$ in an admissible contract,
we consider a drift process $\theta=\{\theta_{t}\}$
adapted to the filtration $\mathcal{F}$ such that
$\theta_{t} \in \Theta(a_{t})$.
Using the drift process
we define a process of Girsanov exponents as follows
\begin{equation}
\label{eqn:girsanov-exponent}
z_{t}^{a + \theta} :=
\exp\left\{
-\frac{1}{2} \int_{0}^{t} |a_{s} + \theta_{s}|^{2}ds - \int_{0}^{t} (a_{s} + \theta_{s}) dB^{a}_{s}
\right\},
\quad
0 \leq t \leq T,
\end{equation}
where $\theta_{t} \in \Theta(a_{t}) = [-\kappa(a_{t}), \kappa(a_{t})]$.

As the action set $\mathcal{A}$ is a compact subset in $R_{+}$
and the drift set $\Theta(a)$ for any action $a \in \mathcal{A}$
is a compact subset in $R_{+}$, the process $a + \theta : = \{a_{t} + \theta_{t}\}$ is bounded
and hence satisfies the following square-integrability condition
\[
\expect
\left[
\exp\left(
\frac{1}{2} \int_{0}^{T} |a_{s} + \theta_{s}|^{2}ds
\right)
\right] < \infty.
\]
Using this property, 
and using Girsanov's change of measure theorem
ensure that  the process $\{ z_{t}^{a + \theta} \}$ defines a $P-$density
on $\mathcal{F}_{T}$ (see, for instance, \citet[p. 337]{duffie2001dynamic}).
Therefore, the process $a + \theta$ generates a probability measure
$P^{a+\theta}$ on the measurable space of sample paths $(\Omega, \mathcal{F})$
and the Radon-Nikodym derivative of this measure
with respect to the reference measure $P$ is given by
\begin{equation}
\label{eqn:Radon-Nik-P}
\frac{d P^{a+\theta}}{d P}
\bigg\rvert_{\mathcal{F}_{t}} = z_{t}^{a + \theta},
\quad
0 \leq t \leq T.
\end{equation}

For the action $a = \{a_{t}\}$  and the measurable drift process
$\{\theta_{t}\}$ such that $\theta_{t} \in \Theta(a_{t})$
we have therefore constructed a measure $P^{a+\theta}$  equivalent to $P$
via a change of measure. Now, as in \citet{chen_epstein_ambiguity2002}
we define the set of multi-priors for the given action $a = \{a_{t}\}$
as the set of all such equivalent measures $P^{a+\theta}$
that can be constructed using some measurable selection $\{\theta_{t}\}$
from the process of correspondences $\Theta(a_{t})$.
The multi-prior for the action is then given by:
\begin{equation}
\begin{split}
\label{eqn:multiprior-a}
\mathcal{P}^{\Theta(a)} :=
\{
P^{a+\theta}: \{ \theta_{t} \}
\text{ is a measurable selection from }
\{
\Theta(a_{t})
\}
\text{ and } \\
P^{a+\theta}
\text{ is defined by }
\eqref{eqn:Radon-Nik-P}
\}
\end{split}
\end{equation}
For any admissible action $a = \{a_{t}\}$, by analogous argument  and terminology
used by \citet{chen_epstein_ambiguity2002}
the set of priors $\mathcal{P}^{\Theta(a)}$
satisfies rectangularity.

We next turn to verify that
the drift sets $\{\Theta(a_{t})\}$ and the set of priors
$\mathcal{P}^{\Theta(a)}$ satisfy regularity conditions
that facilitate the recursive analysis of the contracting problem.

As the action $a = \{a_{t}\}$ is measurable with respect to the filtration $\mathcal{F}$,
and the function $\kappa$ defined on $\mathcal{A}$ is continuous,
the process of compact and convex-valued correspondences
$\{\Theta(a_{t})\}$ is measurable with respect to the same filtration.%
\footnote{
Indeed, using the definition of measurability
in the class $\mathcal{K}$ of compact subsets of $R$
(see for instance, \citet[Chapter 14.12]{aliprantis2006infinite})
for each compact set $K \in \mathcal{K}$,
it is the case that
$\{(t,\omega) \in
[0,s] \times \Omega:
\Theta(a_{t}(\omega)) \bigcap K \not= \emptyset \} \in \mathcal{B}([0,s]) \times \mathcal{F}_{s}$.
}
Moreover, a normalization property
$0 \in \Theta(a_{t})$ holds almost surely as $0 \in \Theta(a) = [-\kappa(a), \kappa(a)] $
for any $a \in \mathcal{A}$. We summarize these regularity properties
on the drift sets as follows
\begin{lemma}
\label{lemma:basics-drifsets}
\normalfont
The process of drift sets $\{\Theta(a_{t})\}$
satisfies:

\begin{enumerate}
\item[(a)] $\{\Theta(a_{t})\}$ is uniformly bounded.

\item[(b)] The correspondence $(t,\omega) \mapsto \Theta(a_{t}(\omega))$ is compact-valued and convex-valued, and is $\mathcal{F}_{t}-$measurable for any $0 < t \leq T$.

\item[(c)] Normalization $0 \in \Theta(a_{t}(\omega))$ holds $dt \otimes dP$ a.e.

\end{enumerate}
\end{lemma}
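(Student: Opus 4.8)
The plan is to read off all three assertions directly from the standing hypotheses: $\mathcal{A}$ is a compact subset of $\mathbb{R}_{+}$, $\kappa$ is continuous on $\mathcal{A}$ with $0\le\kappa(a)$ (indeed $\kappa(0)=0$ and $0<\kappa(a)<a$ for $a>0$), and the action process $a=\{a_{t}\}$ is measurable with respect to the filtration $\mathcal{F}$. None of the three parts is deep; I would dispose of (a) and (c) first, since they are one-line observations, and then spend the bulk of the argument on the measurability claim in (b).

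For (a), a continuous function on a compact set attains its maximum, so $\bar\kappa:=\max_{a\in\mathcal{A}}\kappa(a)<\infty$; then $\Theta(a_{t}(\omega))=[-\kappa(a_{t}(\omega)),\kappa(a_{t}(\omega))]\subseteq[-\bar\kappa,\bar\kappa]$ for every $(t,\omega)$, which is exactly uniform boundedness. For (c), since $\kappa(a)\ge 0$ for all $a\in\mathcal{A}$, the point $0$ lies in $[-\kappa(a_{t}(\omega)),\kappa(a_{t}(\omega))]=\Theta(a_{t}(\omega))$ for \emph{every} $(t,\omega)$, so the normalization holds pointwise, a fortiori $dt\otimes dP$-a.e. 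The non-measurability content of (b) is equally direct: convexity and compactness of each value follow because $\Theta(a)$ is a (possibly degenerate) closed bounded interval in $\mathbb{R}$.

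The one step that warrants care is the measurability of the set-valued map $(t,\omega)\mapsto\Theta(a_{t}(\omega))$. First I would observe that the scalar process $g_{t}(\omega):=\kappa(a_{t}(\omega))$ is the composition of the $\mathcal{F}$-measurable process $a$ with the continuous map $\kappa$, hence is itself $\mathcal{F}$-measurable (and progressively measurable if $a$ is). Then I would invoke the standard criterion for measurability of correspondences valued in the space $\mathcal{K}$ of compact subsets of $\mathbb{R}$ (the one recalled in the footnote to the statement, cf. \citet[Chapter 14.12]{aliprantis2006infinite}): it suffices that for each compact $K\subseteq\mathbb{R}$ the set $\{(t,\omega)\in[0,s]\times\Omega:\Theta(a_{t}(\omega))\cap K\neq\emptyset\}$ belongs to $\mathcal{B}([0,s])\times\mathcal{F}_{s}$. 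Because $g_{t}(\omega)\ge 0$, one has $[-g_{t}(\omega),g_{t}(\omega)]\cap K\neq\emptyset$ if and only if $g_{t}(\omega)\ge d(0,K)$, where $d(0,K)=\min_{x\in K}|x|$ is a fixed constant; so the set in question equals $\{(t,\omega)\in[0,s]\times\Omega:g_{t}(\omega)\ge d(0,K)\}$, which lies in $\mathcal{B}([0,s])\times\mathcal{F}_{s}$ by measurability of $g$. Equivalently, one may use the distance-function characterization of measurable correspondences, noting that $d(x,[-g,g])=\max\{0,|x|-g\}$ depends continuously on $g\ge 0$. This yields $\mathcal{F}_{t}$-measurability (in fact progressive measurability) of $\{\Theta(a_{t})\}$ and completes (b). Since each step is either an immediate consequence of compactness/continuity or a direct application of a textbook measurability criterion, I do not expect any genuine obstacle; the only place to be attentive is verifying that $\kappa\circ a$ inherits the measurability of $a$ before feeding it into the hyperspace criterion.
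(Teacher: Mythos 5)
Your proof is correct and follows essentially the same route as the paper, which likewise derives (a) from compactness of $\mathcal{A}$ and continuity of $\kappa$, (c) from $0\in[-\kappa(a),\kappa(a)]$ pointwise, and (b) from the hyperspace measurability criterion of Aliprantis--Border applied to the composition $\kappa\circ a$. Your explicit verification that $\Theta(a_{t}(\omega))\cap K\neq\emptyset$ iff $\kappa(a_{t}(\omega))\ge d(0,K)$ actually fills in a detail the paper's footnote only asserts, so nothing is missing.
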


We next turn to verify regularity properties for the set of priors $\mathcal{P}^{\Theta(a)}$.
For this we use the \emph{weak topology} on the space of countably additive probabilities
$ca(\Omega,\mathcal{F}_{T})$. To characterize this topology
we use the Prokhorov metric $\rho$.
For any two measures $Q$ and $Q'$ in the space $ca(\Omega,\mathcal{F}_{T})$
the Prokhorov distance $\rho(Q,Q')$ is defined by
\begin{equation} \label{eqn:prokhorov}
\begin{split}
\hspace{-5mm} \rho(Q,Q') :=
\inf\{
\varepsilon > 0:
\text{ for any closed set } F \in \mathcal{F}_{T}, \\
 Q(F) \leq Q'(F^{\varepsilon}) + \varepsilon
\text{  and  }
& Q'(F) \leq Q(F^{\varepsilon}) + \varepsilon
\}
\end{split}
\end{equation}
where $F^{\varepsilon}$ is the ${\varepsilon}-$ball around the set $F$.
The Prokhorov metric $\rho$ metrizes the weak topology on $ca(\Omega,\mathcal{F}_{T})$,
indeed $\rho(Q_{n},Q) \rightarrow 0$ if and only if
 for all bounded and continuous function $f$
 $\int_{\Omega} f dQ_{n} \rightarrow \int_{\Omega} f dQ$
 (see, instance, \citet[Theorem 9.3.2]{corbae2009introduction}).
Moreover, on the class compact and convex subsets of
$ca(\Omega,\mathcal{F}_{T})$ we use Hausdoff topology
corresponding to the  Prokhorov metric $\rho$.

Under the regularity conditions on the process of drift sets $\{\Theta(a_{t})\}$
in {Lemma~\ref{lemma:basics-drifsets}},
for any admissible action $a = \{a_{t}\}$, the following set of properties
 on the prior sets  $\mathcal{P}^{\Theta(a)}$
 hold
\begin{lemma}
\label{lemma:basics-multiprior}
\normalfont
The set of priors $\mathcal{P}^{\Theta(a)}$ satisfies:

\begin{enumerate}
\item[(a)] $P \in \mathcal{P}^{\Theta(a)} $ and $\mathcal{P}^{\Theta(a)}$ is uniformly absolutely
continuous with respect to $P$. Each measure in
$\mathcal{P}^{\Theta(a)}$ is equivalent to $P$.

\item[(b)] $\mathcal{P}^{\Theta(a)}$ is convex and compact in the weak topology.


\item[(c)] For every $\xi \in L^{2}(\omega, \mathcal{F}_{T},P)$,
there exists
$P^{a + \theta^{\ast}} \in \mathcal{P}^{\Theta(a)} $ such that
\begin{equation*}
\expect^{a + \theta^{\ast}}
\left[ \xi | \mathcal{F}_{t} \right] =
\min_{\theta \in \Theta(a) } \expect^{a + \theta}\left[ \xi | \mathcal{F}_{t} \right],
\quad
0 \leq t \leq T.
\end{equation*}
\end{enumerate}

\end{lemma}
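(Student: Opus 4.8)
The plan is to mirror the construction of \citet{chen_epstein_ambiguity2002}, reading each claim off the corresponding property of the drift-set process recorded in Lemma~\ref{lemma:basics-drifsets}.

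\emph{Part (a).} Since $\actions$ is compact and each $\Theta(a_t)$ is uniformly bounded (Lemma~\ref{lemma:basics-drifsets}(a)), the process $a+\theta$ is bounded by a deterministic constant uniformly in the selection $\theta$; hence Novikov's criterion holds uniformly, each $\{z_t^{a+\theta}\}$ is a strictly positive, uniformly integrable $P$-martingale, and $P^{a+\theta}$ is a well-defined probability on $\mathcal{F}_T$ equivalent to $P$ with $dP^{a+\theta}/dP = z_T^{a+\theta}$. Feeding in the admissible constant selection $\theta\equiv 0$, which is legitimate by the normalization $0\in\Theta(a_t)$ (Lemma~\ref{lemma:basics-drifsets}(c)), identifies the benchmark member of the family, so $P\in\mathcal{P}^{\Theta(a)}$. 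Uniform absolute continuity follows from a uniform bound $\sup_{\theta}\expect[(z_T^{a+\theta})^2]<\infty$, itself a consequence of the boundedness of $a+\theta$, and strict positivity of the exponential gives the stated equivalence.

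\emph{Part (b).} For convexity I would run the standard It\^o argument: given $P^{a+\theta^1},P^{a+\theta^2}\in\mathcal{P}^{\Theta(a)}$ and $\lambda\in[0,1]$, the mixture has density process $\zeta_t=\lambda z_t^{a+\theta^1}+(1-\lambda)z_t^{a+\theta^2}$, and It\^o's formula exhibits $\zeta$ as a stochastic exponential whose generator at time $t$ equals $a_t+\theta_t^3$, where $\theta_t^3$ is the convex combination of $\theta_t^1,\theta_t^2$ with the (state-dependent) weights $\lambda z_t^{a+\theta^1}/\zeta_t$ and $(1-\lambda) z_t^{a+\theta^2}/\zeta_t$; convexity of $\Theta(a_t)$ (Lemma~\ref{lemma:basics-drifsets}(b)) puts $\theta_t^3\in\Theta(a_t)$, progressive measurability is preserved, and so the mixture lies in $\mathcal{P}^{\Theta(a)}$. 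For weak compactness, the uniform $L^2$-bound on the density family gives uniform integrability, hence relative weak compactness via the Dunford--Pettis theorem; closedness is the delicate point, and I would argue it by extracting from $P^{a+\theta^n}\to Q$ a weak $L^1$-limit of the densities, reconstructing the limiting density \emph{process}, and using the compact- and convex-valuedness of $t\mapsto\Theta(a_t)$ together with a measurable-selection theorem to produce an admissible $\theta$ with $Q=P^{a+\theta}$.

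\emph{Part (c).} I would characterize the lower envelope $V_t:=\operatorname*{ess\,inf}_{\theta}\expect^{a+\theta}[\xi\mid\mathcal{F}_t]$ as the unique square-integrable solution $(V,Z)$ of the backward stochastic differential equation with terminal value $\xi$ and driver the support-function term $g(t,z)=\min_{\theta_t\in\Theta(a_t)}\theta_t z=-\kappa(a_t)|z|$. Since $\sup_{a\in\actions}\kappa(a)<\infty$, the driver is uniformly Lipschitz in $z$, so existence and uniqueness hold by standard BSDE theory. The pointwise minimizer $\theta^{\ast}_t=-\kappa(a_t)\,\sgn(Z_t)$, with an arbitrary fixed choice on $\{Z_t=0\}$, is a progressively measurable selection from $\Theta(a_t)$, and a comparison/verification argument shows that $P^{a+\theta^{\ast}}$ attains the essential infimum \emph{simultaneously} for every $t\in[0,T]$ --- which is exactly the rectangularity (stability under pasting) of $\mathcal{P}^{\Theta(a)}$. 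I expect the main obstacle to be precisely this last point, the simultaneous-in-$t$ attainment, together with the closedness step in (b); both are handled by the BSDE and measurable-selection machinery in the manner of \citet{chen_epstein_ambiguity2002}.
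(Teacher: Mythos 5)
Your proposal is correct and follows the same route as the paper: the paper's own proof consists of a single sentence deferring all three parts to the argument of Theorem~2.1 in \citet{chen_epstein_ambiguity2002}, and your Novikov/Girsanov step for (a), the It\^o-mixture and Dunford--Pettis arguments for (b), and the BSDE-with-driver $-\kappa(a_t)|z|$ plus measurable-selection argument for (c) are precisely the Chen--Epstein machinery being invoked. You have simply written out in detail what the paper leaves implicit.
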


\begin{proof}
The properties in (a), (b) and (c)
follows from the arguments analogous to those made in the proof of
\normalfont{Theorem~2.1} by \citet{chen_epstein_ambiguity2002}.
\end{proof}

\section*{Proofs of the Results for Symmetric Drift Ambiguity in Sections~\ref{section:contracting_problem}-\ref{section:Monotonce} }

\subsection*{Proof of {Proposition}~\ref{prop:Wdiff-Ambig-A}}

As in the structure of the prior sets,
we begin with observing that as the contraction problem has a finite random horizon
$P(\tau < \infty) = 1$, we fix any given finite time horizon $T \leq \infty$ such that $\tau(\omega) \leq T$ with probability one, and let time $t$ vary over $[0,T]$.
This is without loss of generality.
Indeed, notice that if $t \geq \tau(\omega)$, then
the agent stops working $a_{t}(\omega) = 0$ ,
and receives no payment $c_{t}(\omega) = 0$ or a positive payment $c_{t}(\omega) = \xi > 0$,
depending on whether at the stopping time $\tau(\omega)$
the principal terminates the contract by firing the agent or by retiring him, respectively.

For a given pair of processes of wages and actions $\{c_{t}\}$ and $\{a_{t}\}$, respectively,
define the agent's valuation process $\{V_t\}$ by
\begin{equation}\label{g-valuation_A}
V_{t} = r \int_{0}^{t} e^{-rs} ( u(c_{s}) - h(a_{s}) ) ds + e^{-rt}W_{t}(c,a)
\end{equation}
where $W_{t}(c,a)$ is the continuation value defined by
\begin{equation*}
W_{t} = \min_{Q \in \mathcal{P}^{\Theta(a)}} \expect_{Q}
\left[
\int_{t}^{\infty} e^{-r(s-t)}\left( u(c_{s}) - h(a_{s}) \right)ds \given \mathcal{F}_{t}
\right].
\end{equation*}
Since, by construction, $\mathcal{P}^{\Theta(a)}$ is a rectangular set of multi-priors over
the paths of output realizations, using \cite{chen_epstein_ambiguity2002}'s representation theorem,
then we find a progressively measurable process $\{Y_{t}\}$ such that
\begin{equation}\label{Agent's g-mgale}
-dV_{t} = - \theta^{\ast}_{t} r e^{-rt}|Y_{t}|dt - \sigma r e^{-rt} Y_{t} dB^{a}_{t}
\end{equation}
where $B_{t}$ is a Brownian motion under the reference measure $P$;
$\theta^{\ast}_{t}  := \min_{\theta_{t} \in \Theta(a_{t})} \theta_{t}|Y_{t}|$ is the
agent's worst-case drift for the action process $\{a_{t}\}$; and the factor
$r e^{-rt} \sigma$ is a convenient rescaling factor.
On the other hand, differentiating \eqref{g-valuation_A} with respect to $t$ one finds that
\begin{equation}\label{Agent's dV-path}
dV_{t} = r e^{-rt}( u(c_{t}) - h(a_{t}) )dt - r e^{-rt}W_{t}dt + e^{-rt}dW_{t}.
\end{equation}
Now, together \eqref{Agent's g-mgale} and \eqref{Agent's dV-path} imply that
\begin{equation*}
r e^{-rt}( u(c_{t}) - h(a_{t}) )dt - r e^{-rt}W_{t}dt + e^{-rt}dW_{t}
=\theta^{\ast}_{t} r e^{-rt}\kappa|Y_{t}|dt + \sigma r e^{-rt} Y_{t} dB^{a}_{t}
\end{equation*}
\begin{equation*}
\Longrightarrow \  dW_{t}= r \theta^{\ast}_{t} |Y_{t}|dt + \sigma r Y_{t} dB^{a}_{t}
 -r ( u(c_{t}) - h(a_{t}) )dt + r W_{t}dt.
\end{equation*}
The latter further implies
\begin{equation*}
W_{t}= W_{0} +
 \int_{0}^{t}r \left( W_{s} - u(c_{s}) + h(a_{s}) + \min_{\theta_{s} \in \Theta(a_{s})} \theta_{s}|Y_{s}| \right)ds +
 \int_{0}^{t} r Y_{s} dB^{a}_{s}
\end{equation*}
where $(B_t)$ with $B_t = X_t - a_t$ is a Brownian motion under the action strategy $\{a_{t}\}$. \qed

\subsection*{Proof of {Proposition}~\ref{prop:IC-kappa}}
This result follows as a special case of
the incentive compatibility condition for a slightly more general specification
$\Theta(\mu(a_t))$ for the drift-set.
The special case then arises by taking the drift set $\Theta(a)$ to be in $\kappa-$ignorance form:
$\Theta(a_t) = [a_t - \kappa(a_t), a_t + \kappa(a_t)]$.

\begin{prop}
\label{prop:IC-Ambig}
\normalfont
For a given strategy $a = \{a_{t}\}$, let $\{Y_{t}\}$ be the volatility process from
\normalfont{Proposition}~\ref{prop:Wdiff-Ambig-A}.
Then the action strategy $a$ is optimal if and only if
\begin{equation}\label{prop:ComparisonPA}
\forall a'_t \in \mathcal{A}, \
Y_{t}\mu(a_{t})- h(a_{t})   + \min_{\theta_{t} \in \Theta(a_t)}     \theta_{t}|Y_{t}|
\geq
Y_{t}\mu(a'_{t})- h(a'_{t}) + \min_{\theta_{t} \in \Theta(a'_t)}    \theta_{t}|Y_{t}|
\ \ dt \otimes dP \ a.e.
\end{equation}
\end{prop}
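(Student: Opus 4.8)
The plan is to run the one-shot-deviation (verification) argument of \citet{sannikov2008continuous}, with ordinary martingales replaced by the $g$-(super)martingales attached to the max-min criterion, using the diffusion representation of Proposition~\ref{prop:Wdiff-Ambig-A} and the regularity of the prior sets from Lemmas~\ref{lemma:basics-drifsets}--\ref{lemma:basics-multiprior}. Fix the wage scheme $\{c_{t}\}$ and the candidate strategy $a=\{a_{t}\}$, and let $\{Y_{t}\}$ be the associated sensitivity process, so that $W_{0}$ is the agent's max-min value from $a$ and, by the construction in the proof of Proposition~\ref{prop:Wdiff-Ambig-A}, the process $V_{t}:=r\int_{0}^{t}e^{-rs}(u(c_{s})-h(a_{s}))\,ds+e^{-rt}W_{t}$ is a $g$-martingale; in particular $V$ is a martingale under the agent's worst-case measure $Q^{\ast}$ and a supermartingale under every $Q\in\mathcal{P}^{\Theta(a)}$, whence $\min_{Q\in\mathcal{P}^{\Theta(a)}}\expect_{Q}[V_{\tau}]=V_{0}=W_{0}$. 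For an arbitrary admissible alternative $\hat a=\{\hat a_{t}\}$ I would introduce
\[
\hat V_{t}:=r\int_{0}^{t}e^{-rs}\bigl(u(c_{s})-h(\hat a_{s})\bigr)\,ds+e^{-rt}W_{t},
\]
the realized flow payoff from playing $\hat a$ up to $t$ plus the continuation value of reverting to $a$; then $\hat V_{0}=W_{0}$ and, at the finite horizon, $\hat V_{\tau}$ equals the total payoff from $\hat a$, so the agent's value from the deviation is $\min_{Q\in\mathcal{P}^{\Theta(\hat a)}}\expect_{Q}[\hat V_{\tau}]$.

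For the ``if'' direction I would compute the drift of $\hat V_{t}$ under a generic $P^{\hat a+\theta}\in\mathcal{P}^{\Theta(\hat a)}$: writing $dX_{t}=(\mu(\hat a_{t})+\theta_{t})\,dt+\sigma\,dB^{\hat a+\theta}_{t}$ and substituting into the representation of $W_{t}$ from Proposition~\ref{prop:Wdiff-Ambig-A} via Girsanov's theorem, the $dt$-part of $d\hat V_{t}$, after selecting $\theta=\theta^{\ast}$ to be the lower envelope of $\Theta(\hat a_{t})$ (legitimate since $Y_{t}\ge 0$ by Lemma~\ref{lemma:Alower-envelope}, so that $\theta^{\ast}_{t}Y_{t}=\min_{\theta_{t}\in\Theta(\hat a_{t})}\theta_{t}|Y_{t}|$), is proportional to
\[
\Bigl[Y_{t}\mu(a_{t})-h(a_{t})+\min_{\theta'\in\Theta(a_{t})}\theta'|Y_{t}|\Bigr]-\Bigl[Y_{t}\mu(\hat a_{t})-h(\hat a_{t})+\min_{\theta'\in\Theta(\hat a_{t})}\theta'|Y_{t}|\Bigr],
\]
which is $\ge 0$ $dt\otimes dP$-a.e. precisely by the assumed inequality. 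Hence $\hat V$ is a local supermartingale under $P^{\hat a+\theta^{\ast}}$; upgrading it to a true supermartingale using the square-integrability and finite-horizon conditions on admissible contracts together with optional stopping at $\tau$, I would get $\min_{Q}\expect_{Q}[\hat V_{\tau}]\le\expect_{P^{\hat a+\theta^{\ast}}}[\hat V_{\tau}]\le\hat V_{0}=W_{0}$, so no deviation improves on $a$.

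For the ``only if'' direction I would argue by contraposition. If the pointwise inequality fails on a set $A\subseteq[0,\tau]\times\Omega$ with $(dt\otimes dP)(A)>0$, pass to $A_{n}:=\{(t,\omega)\in A:\text{violation}\ge 1/n\}$, which has positive measure for $n$ large, and use a measurable selection to set $\hat a_{t}$ equal to a maximizer on $A_{n}$ of $a\mapsto Y_{t}\mu(a)-h(a)+\min_{\theta'\in\Theta(a)}\theta'|Y_{t}|$ and $\hat a_{t}=a_{t}$ off $A_{n}$. The same drift computation --- now with an arbitrary $\theta_{t}\in\Theta(\hat a_{t})$, using $\theta_{t}Y_{t}\ge\min_{\theta'\in\Theta(\hat a_{t})}\theta'|Y_{t}|$ --- would show that $\hat V$ has nonnegative drift under every $Q\in\mathcal{P}^{\Theta(\hat a)}$, bounded below by a positive constant on $A_{n}$; since each such $Q$ is equivalent to $P$ with densities bounded uniformly in $L^{2}$ (Lemma~\ref{lemma:basics-multiprior}), this forces $\min_{Q}\expect_{Q}[\hat V_{\tau}]>\hat V_{0}=W_{0}$, so $a$ is not optimal.

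The step I expect to be the main obstacle is the drift computation in the ``if'' paragraph: one must execute the Girsanov change of measure carefully --- its legitimacy resting on the uniform integrability of the Girsanov densities and the regularity of the drift correspondences from Lemmas~\ref{lemma:basics-drifsets}--\ref{lemma:basics-multiprior} --- track the dependence of $\argmin_{\theta\in\Theta(\cdot)}\theta|Y_{t}|$ on $\operatorname{sgn}(Y_{t})$, and, most delicately, make sure that the continuation value $W_{t}$ entering $\hat V_{t}$ is genuinely the correct conditional max-min value along the deviated output path, which is exactly where dynamic consistency / rectangularity of $\mathcal{P}^{\Theta(a)}$ is used. The passage from local to true (super)martingales and the uniform strict inequality in the converse are routine but must be written out.
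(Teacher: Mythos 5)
Your proof is correct, but it takes a genuinely different route from the paper's. The paper casts the agent's value under the candidate strategy $a$ and under an alternative $a'$ as solutions of backward stochastic differential equations with drivers $g$ and $g'$ (via the $g$-martingale representation of \citet{chen_epstein_ambiguity2002}), and then derives both directions of the equivalence in one stroke from the Comparison Theorem of \citet{el1997backward}: the pointwise ordering of the drivers in \eqref{prop:ComparisonPA} is equivalent to the ordering of the BSDE solutions. You instead run the hands-on one-shot-deviation verification of \citet{sannikov2008continuous}: define the gain process $\hat V_t$ for a deviation, compute its drift under a Girsanov change of measure to $P^{\hat a+\theta}$, select the worst-case $\theta$ using $Y_t\ge 0$, and conclude by the (super)martingale property and optional stopping; the converse is a measurable-selection-plus-strict-drift argument. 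These are two presentations of the same comparison principle --- the BSDE comparison theorem is essentially the abstract packaging of your supermartingale computation --- but the division of labor differs. The paper's route outsources all localization and integrability issues to the cited theorem and is correspondingly short; yours is self-contained and makes the economic content of \eqref{prop:ComparisonPA} (the sign of the drift of the deviation-gain process) transparent, at the cost of having to carry out by hand the local-to-true supermartingale upgrade and, in the converse, the uniformity of the strict gain over the prior set (for which the attainment and equivalence properties in Lemma~\ref{lemma:basics-multiprior} are exactly what is needed, since the minimizing measure is equivalent to $P$ and therefore charges the violation set $A_n$). Your flagged concern about whether $W_t$ is the correct continuation object along the deviated path is the right one, and it is resolved precisely by the rectangularity of $\mathcal{P}^{\Theta(a)}$, which is also where the paper's appeal to the $g$-martingale representation does the same work.
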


\vspace{-12mm}

\begin{proof}
Consider an arbitrary alternative strategy $a'= (a'_t)$ that follows possibly different actions
$a'_{\tau}$ up to $t$ and afterwards continues with $a_{t}$. The action strategy process $a'$
induces a set of densities $\Theta(a)$ satisfying the regularity conditions as specified earlier,
and since the set is IID (independent and indistinguishable in the sense of Chen and Epstein see Section
\ref{sect:set_of_priors} for details) the corresponding set of multiple priors $\mathcal{P}^{a'}$ is rectangular. The agent's expected payoff from this action strategy is well-defined by
\begin{equation*}
V'_{t} = \min_{Q \in \mathcal{P}^{\Theta(a')}} V_{t}^{Q},
\end{equation*}
where $V_{t}^{Q}$ is unique solution (ensured by \citet{duffie1992stochastic}) to BSDE
\begin{equation*}
V_{t}^{Q} = \expect^{Q} \left[ \int_{t}^{\infty} f(c_s,a'_s,V_{s}^{Q})\right],
\end{equation*}
where $f = u(c) - h(a) - \beta V $ is the  standard aggregator.
By $g-$martingale representation theorem of \citet[Theorem 2.2]{chen_epstein_ambiguity2002},
$V'$ is recursively represented in a unique manner:
\begin{equation*}
dV'_{t} = ( -f(c_{t},a'_{t},V'_{t}) + \max_{\theta \in \Theta({a'}_t)} \theta_{t} Y'_{t})dt
+ Y'_{t} dB^{a}_{t}
\end{equation*}
for a unique volatility process $Y'$.

More generally, $V'$ and $Y'$ uniquely solves a BSDE of the following form
\begin{equation}\label{BSDE}
d V_{t} = g'(V_t, Z_t, \omega, t)dt + Z_{t}dB^{a'}_t,
\end{equation}
with terminal condition $\xi$ for any path $\omega$ over the output realizations.
Specializing this result to the setup of in the current formulation we have
\begin{equation}\label{eqn:intertemporal utility}
g'(V,Z,\omega,t) = - f(c_{t}(\omega), a'_{t}(\omega),V)
                    + \max_{\theta \in \Theta(a'_t)} \theta(\omega)Y
\end{equation}
and the terminal condition is the value at the retirement.
It therefore follows that under the action strategy $\{a_{t}\}$ the agent's value process $V_{t}$ and its volatility $Y_{t}$ solve \eqref{BSDE}.

Suppose that the condition given in \eqref{prop:ComparisonPA} holds.
Since the terminal conditions corresponds to the retirement and at the retirement there is no uncertainty, the terminal conditions are invariant under different
action strategies $a$ and $a'$. By the Comparison Theorem in~\citet[Theorem 2.2]{el1997backward}
\begin{equation}\label{lemma:Comparison}
g(V,Z,\omega,t) \leq g'(V,Z,\omega,t) \quad dt \otimes dP \ a.e.
\end{equation}
or equivalent the condition~\eqref{prop:ComparisonPA} in our model holds
\begin{equation*}
\mu(a_{t})Y_{t} - h(a_{t}) -    \max_{\theta \in \Theta({a}_t)} \theta Y_{t}
\geq
\mu(a'_{t})Y_{t} - h(a'_{t}) -  \max_{\theta \in \Theta({a'_t})} \theta Y_{t}
\quad dt \otimes dP \ a.e.
\end{equation*}
which implies that $V \geq V'$ for almost every $t$.

Suppose now that the condition\eqref{prop:ComparisonPA} fails on a set of positive measures,
and choose an action strategy $a'$ that maximizes $\mu(\til[a])Y_{t} - h(\til[a]) - \max_{\theta \in \Theta({\til[a]})} \theta Y_t$ over $\til[a] \in \mathcal{A}$ for all $t \geq 0$. Then $g(V,Z,\omega,t) \leq g'(V,Z,\omega,t) \quad dt \otimes dP \ a.e.$. Since $a'$ specifies the same action as $a$ after $t$, an application of the comparison theorem implies $V' > V$. Therefore, $a$ is suboptimal.
\end{proof} 
If the volatility process is written as $-Y_{t}$ the minimum replaces
the maximum in \eqref{lemma:Comparison}.
Notice that \normalfont{Proposition}~\ref{prop:IC-Ambig} is formulated for
any generating process $\Theta({a})$.
Taking $\Theta({a}) := \{ (\theta_{t}): \mu(a_{t}) + |\theta_{t}| \leq \kappa(a_t) \}$
in the set up of \normalfont{Proposition}~\ref{prop:Wdiff-Ambig-A},
\normalfont{Proposition}~\ref{prop:IC-Ambig} now specializes
to the characterization of incentive compatibility $\kappa$-ignorance model:
\begin{lemma}\label{lemma:IC-kappa-A}
\normalfont
For any action strategy $a$, let $\{Y_{t}\}$ be the volatility process from
Proposition {\rm \ref{prop:Wdiff-Ambig-A}}. Then the strategy $a$ is incentive compatible
if and only if
\begin{equation*}
\forall \til[a]_t \in \mathcal{A} \quad
Y_{t}\mu(a_{t})- \kappa(a_{t}) |Y_{t}| - h(a_{t})  \geq
Y_{t}\mu(\til[a]_{t})- \kappa(\til[a]_{t}) |Y_{t}| - h(\til[a]_{t})
\quad dt \otimes dP \ a.e.
\end{equation*}
\end{lemma}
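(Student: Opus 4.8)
The plan is to obtain Lemma~\ref{lemma:IC-kappa-A} as an immediate specialization of Proposition~\ref{prop:IC-Ambig}, which has already been established for an arbitrary generating correspondence $\Theta(\cdot)$. In the $\kappa$-ignorance model the drift deviations attached to an action $a_t$ form the symmetric interval $\Theta(a_t)=[-\kappa(a_t),\kappa(a_t)]$ centered at zero (the realized drift of output being $\mu(a_t)+\theta_t$). So the first and essentially only step is to substitute this correspondence into the optimality criterion \eqref{prop:ComparisonPA} of Proposition~\ref{prop:IC-Ambig}, for the recommended action $a$ and for an arbitrary deviation $\tilde a$.

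It then remains to evaluate the two inner minimizations appearing in \eqref{prop:ComparisonPA}. Since $|Y_t|\ge 0$, the map $\theta_t\mapsto\theta_t|Y_t|$ is nondecreasing, so its minimum over $[-\kappa(a_t),\kappa(a_t)]$ is attained at the left endpoint and equals $-\kappa(a_t)|Y_t|$; likewise $\min_{\theta_t\in[-\kappa(\tilde a_t),\kappa(\tilde a_t)]}\theta_t|Y_t|=-\kappa(\tilde a_t)|Y_t|$. Plugging these into both sides of \eqref{prop:ComparisonPA} turns it, after reordering terms, into $Y_t\mu(a_t)-\kappa(a_t)|Y_t|-h(a_t)\ \ge\ Y_t\mu(\tilde a_t)-\kappa(\tilde a_t)|Y_t|-h(\tilde a_t)$, which is exactly the stated condition. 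Because Proposition~\ref{prop:IC-Ambig} is an ``if and only if'' statement, both implications transfer verbatim, as does the qualifier ``$dt\otimes dP$ a.e.''. (If one instead normalizes the volatility process as $-Y_t$, then, as noted immediately after Proposition~\ref{prop:IC-Ambig}, the minimum is replaced by a maximum and the same computation produces $+\kappa(a_t)|Y_t|$; either sign convention leads to the $\kappa|Y|$ term, so the $\kappa$-ignorance condition is insensitive to this choice.)

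The only point that needs care — and hence the natural candidate for the ``main obstacle'' — is verifying that the interval-valued correspondence $t\mapsto[-\kappa(a_t),\kappa(a_t)]$ actually satisfies the structural hypotheses under which Proposition~\ref{prop:IC-Ambig} (and the $g$-martingale representation used in Proposition~\ref{prop:Wdiff-Ambig-A}) applies: uniform boundedness, compact- and convex-valuedness, $\mathcal{F}_t$-measurability, containment of $0$, and generation of a rectangular (IID) family of priors. This is precisely what Lemma~\ref{lemma:basics-drifsets} and Lemma~\ref{lemma:basics-multiprior} deliver for this correspondence, using continuity of $\kappa$ on the compact action set $\mathcal{A}$ together with $\kappa(0)=0$. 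Given those facts no further estimation is required, and the lemma follows. Finally, one may add the remark that when incentives are nontrivial, so that $Y_t>0$ as in Lemma~\ref{lemma:Alower-envelope}, the absolute value is redundant, and the condition collapses to the cleaner form \eqref{eqn:IC-kappa-A} of Proposition~\ref{prop:IC-kappa}.
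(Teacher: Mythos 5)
Your proposal is correct and follows essentially the same route as the paper: Lemma~\ref{lemma:IC-kappa-A} is obtained by specializing Proposition~\ref{prop:IC-Ambig} to the symmetric drift set $\Theta(a_t)=[-\kappa(a_t),\kappa(a_t)]$ and evaluating $\min_{\theta_t\in\Theta(a_t)}\theta_t|Y_t|=-\kappa(a_t)|Y_t|$ on both sides of \eqref{prop:ComparisonPA}. Your explicit check of the regularity hypotheses via Lemma~\ref{lemma:basics-drifsets} and Lemma~\ref{lemma:basics-multiprior} is a welcome (if implicit in the paper) addition, not a departure.
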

Finally, taking $\mu(a) = a$ for all $a \in \mathcal{A}$
specializes the result to \normalfont{Proposition}~\ref{prop:IC-kappa}. \qed

\subsection*{Proof of Theorem~\ref{thm:OptimalContract}}

%
%
%

In this proof, we  provide a {verification argument}
as a standard justification that
establishes validity  of
the dynamic programming approach to characterize
the principal's contracting problem recursively using
the agent's continuation value as a state variable.

We follow closely dynamic programming approach of
\citet{possamai2020there}
to analyze dynamic principal agent problems
in  a classical setting of  \citet{sannikov2008continuous},
we adopt it to allow for drift ambiguity.%
\footnote{
\citet{possamai2020there}
uses the general approach of \citet{lin2020random}
who develop formulation of the model
where the agent's action can control
 both the drift and the volatility of output process.}
In particular, the characterization of the optimal contract
as a solution to the HJB equation follows closely the proof strategy
used in the proof of {Theorem~3.6} and {Proposition~7.2} in \citet{possamai2020there}
for the case where the principal and the agent have a common discount factor.
Indeed, the main new mathematical arguments will be
due to possible heterogeneity in the worst-case drift scenarios
perceived by the principal and the agent.

We prove the verification argument
under a `regularity' assumption that
for any given promised value to the agent $W \geq 0$
the solution $(c(W),a(W),Y(W))$
of the HJB equation \eqref{eqn:HJBI-v}
 yields an admissible contract.
 In particular, this contract is defined
 by $\left( c_{t},a_{t} \right) := \left( c(W_{t}),a(W_{t}) \right)$
 with the associated incentive-compatible sensitivity is given
 $Y_t := Y(W_{t})$. For this contract,
 the agent's promised value process $\{ W_{t} \}$ evolves
 according to the flow diffusion equation
 \eqref{eqn:Ya}.
To streamline exposition,
we first prove the verification result under this regularity assumption.
Specifically, we show that under the regularity assumption
recursive optimal scheme $(c,a)$ is a solution to the principal's
sequential contracting problem. We then turn to verify this regularity property
by showing that recursive optimal scheme is indeed an admissible contract.
To simplify notation, we develop the verification argument
for any given drift-ambiguity scenario
$\phi$ and leave it implicit.

\textbf{Step 1.} Reproducing the contracting problem in sequential and recursive formulations

We re-state the sequential problem
\eqref{eqn:PrincipalsMinMax}
and denote by $v$ its value function defined by
\begin{equation}\label{eqn:PrincipalsMinMax-Appx}
V(W):=
\max_{(c,a) \in \mathcal{Z}(W_{0})}\min_{\theta \in \Theta({a})}
\expect^{a + \theta}
\left[r \int_{0}^{\infty} e^{-rt}( dX_{t} - c_{t}dt ) \right].
\tag{P}
\end{equation}
Here we recall that the set
$\mathcal{Z}(W_{0})$ describes the set of admissible of contracts
that satisfy incentive compatibility in \eqref{eqn:IC-sequential}
and integrability in \eqref{eqn:intableA} and \eqref{eqn:intableA-tau},
and yields to the agent ex-ante value $W_{0}$.

We also reproduce the recursive formulation of the contracting problem
and simplify the representation of its solution
\begin{equation}\label{eqn:HJBI-v}
\begin{split}
F(W;\phi) = \max_{(c,a,Y)\in \Gamma}
\min_{\theta \in \Theta(a;\phi)}
 \Big\{ r ( a + \theta - c )
 + r & F'(W;\phi) \left(W - u(c) + h(a) + \theta Y \right) \\
&+ \frac{F''(W;\phi)}{2} r^{2} Y^{2} \sigma^{2} \Big\}
\end{split}
\end{equation}
subject to the boundary conditions:
\[
F(0) = 0, \text{  and  } F \geq F_{0}.
\]
As in the retirement region the value function is equal to the value of retiring the agent,
i.e., $\mathcal{S}= \{F = F_{0}\}$,
the recursive formulation \eqref{eqn:HJBI-v} can be rewritten
in a compact form:
\begin{equation}\label{eqn:(7.1)}
\quad \quad  F(0)=0, \quad \text{   and   } \boldsymbol{L}(F) = 0 \text{  on  } W \in (0,\infty)
\end{equation}
where for any continuation value
$W>0$ the operator $\boldsymbol{L}$ is defined as follows:
\[
\begin{split}
\boldsymbol{L}F(W) := F  - \max_{(c,a,Y)\in \Gamma}
\min_{\theta \in \Theta(a)}
 \Big\{ r ( a + \theta - c )
 + & r  F'(W) \left(W - u(c) + h(a) + \theta Y \right) \\
&+ \frac{F''(W)}{2} r^{2} Y^{2} \sigma^{2} \Big\}
\end{split}
\]

\textbf{Step 2. } The recursive optimal scheme
yields a larger value than the principal's value function in the sequential problem: $F \geq V$.

Consider any admissible contract
$(c,a) \in Z(W_0)$ which  yields ex-ante value $W_0$ to the agent.
Let $\{W_{t}\}$ denote the corresponding
value process that describes the evolution of the agent's value from the contract
$(c,a)$,
according to the flow diffusion equation \eqref{eqn:Ya}.
Let  $\tau_{n} := \tau \wedge \inf\{t \geq 0: W_t \geq n\}$ a finite stopping time.
By Ito's formula, the principal's value
from the contract $(c,a)$ can be expressed as follows
{
\footnotesize
\begin{equation*}
\begin{split}
F(W_0) =& e^{-r \tau_{n}} F(W_{\tau_{n}}) - \int_{0}^{\tau_{n}}e^{-rt}
\{
-r F + \partial_{t}v + r F'(W_t) (W_{t} + u(c_{t}) - h(a_t) + \theta^{P}_{t} Y_{t} )
+ \frac{1}{2} \sigma^{2} r^{2} Y^{2}_{t} F''(W_t)
\}dt 	\\
&- \int_{0}^{\tau_{n}}e^{-rt} F'(W_t)r Y_{t} \sigma dB^{a + \theta^{P}}_{t}
\end{split}
\end{equation*}
}
As the solution to  HJB satisfies
$\boldsymbol{L}(F) = 0$ by \eqref{eqn:(7.1)}
and $F \geq F_{0}$ on the continuation region $\mathcal{S}^{c}$
the last equation implies
\[
\quad
\geq
e^{-r \tau_{n}} F_{0}(W_{\tau_{n}})
+ \int_{0}^{\tau_{n}}e^{-rt}
\left(
\boldsymbol{L}F(W_t) + a_t +\theta^{P}_t - c_t
\right)dt
- \int_{0}^{\tau_{n}}e^{-rt} F'(W_t)r Y_{t} \sigma dB^{a + \theta^{P}}_{t}
\]
\[
\quad
=
e^{-r \tau_{n}} F_{0}(W_{\tau_{n}})
+ \int_{0}^{\tau_{n}}e^{-rt}
\left( a_t +\theta^{P}_t - c_t
\right)dt
- \int_{0}^{\tau_{n}}e^{-rt} F'(W_t)r Y_{t} \sigma dB^{a + \theta^{P}}_{t}.
\]
As $F'$ is bounded on $[0,\tau_{n}]$,
and the admissible contract $(c,a)$ satisfies by assumption the integrability conditions
in \eqref{eqn:intableA}
and
\eqref{eqn:intableA-tau},
taking expectations in turn implies
\[
F(W_{0}) \geq \expect^{a + \theta^{P}}
\left[
e^{-r \tau_{n}} F_{0}(W_{\tau_{n}})
+ \int_{0}^{\tau_{n}}e^{-rt}
\left(
a_t +\theta^{P}_t - c_t
\right)dt
\right]
\]
\[
\quad  \quad \quad
    \xrightarrow{n\rightarrow\infty}
\expect^{a + \theta^{P}}
\left[
e^{-r \tau} F_{0}(W_{\tau})
+ \int_{0}^{\tau} e^{-rt}
\left(
a_t + \theta^{P}_{t} - c_t
\right)dt
\right]
\]
\[
\quad  \quad \quad
=
\min_{\theta \in \Theta({a})}\expect^{a+\theta}
\left[
e^{-r \tau_{n}} F_{0}(W_{\tau})
+ \int_{0}^{\tau_{n}}e^{-rt}
\left(
a_t + \theta_{t} - c_t
\right)dt
\right].
\]
Here the convergence as ${n\rightarrow\infty}$ follows from
integrability of the contract $(c,a)$ and the implied
continuation value process $\{W_{t}\}$.
Moreover, the equality holds as the term $\theta^{P}_{t}$
is the principal's worst-case scenario
under the contract $(c,a)$.
Finally, since the inequality holds for an arbitrary admissible contract $(c,a)$ that yields to the agent value $W_{0}$,  it implies that $F(W_0) \geq V(W_0)$.


\textbf{Step 3. } The principal's value function in the sequential problem
exceeds the value from the recursive optimal scheme: $F \geq V$.

For this, we start with considering
the solution $(c(W),a(W),Y(W))$
to the HJB for any admissible $W \geq 0$.
Associated to this solution,
define a recursive optimal scheme by
$(c_{t},a_{t}) := (c(W_{t}),a(W_{t}))$ together with sensitivity process
$Y_{t} := ( Y(W_{t}) )$ given by \eqref{eqn:IC-Y} for the action $a(W_{t})$
where the corresponding promised
value process $\{ W_{t} \}$ is defined by the flow diffusion equation \eqref{eqn:Ya}
starting with $W_{0}$ for this recursive optimal scheme.

For simplicity we assume that
 the solution $(c(W),a(W),Y(W))$ to the recursive equation HJB \eqref{eqn:HJBI} is unique.%
\footnote{The analogous arguments apply to general case of allowing for
multi-valued correspondence and working with each possible
value in the correspondence.}
Analogous to the argument as in the previous step,
letting $\tau_{n} := \tau \wedge \inf\{t \geq 0: W_t \geq n\}$ denote a finite stopping time
and applying Ito's formula to the principal's value yields
\[
F(W_0) =
\expect^{{a} + {\theta}^{P} }
\left[
e^{-r {\tau_{n}}} F({W}_{{\tau}_{n}})
+
\int_{0}^{{\tau}_{n}} e^{-rt} \left({a}_{t} + {\theta}^{P}_{t} - {c}_{t} \right)dt
\right]
\]
\[
\quad
    \xrightarrow{n\rightarrow\infty}
\expect^{{a} + {\theta}^{P} }
\left[
e^{-r {\tau}} F({W}_{{\tau}})
+
\int_{0}^{{\tau}} e^{-rt} \left({a}_{t} + {\theta}^{P}_{t}  - {c}_{t} \right)dt
\right]
\]
\[
\quad \quad \quad
=
\expect^{ a + {\theta}^{P} }
\left[
e^{-r{\tau}} F_{0}({W}_{{\tau}})
+
\int_{0}^{{\tau}} e^{-rt} \left({a}_{t}  + {\theta}^{P}_{t}  - {c}_{t} \right)dt
\right]
\leq V(W_{0})
\]
where we use the fact that in the retirement region $F = F_{0}$.
Here the  inequality holds as the recursive optimal scheme $(c,a) \in \boldsymbol{\mathcal{Z}}(W_{0})$
is an admissible contract and it yields the promised value $W_{0}$
to the agent. Since the recursive optimal contract is an admissible
contract in the sequential primal formulation,
the principal's optimization in the sequential formulation
over the (weakly) larger set of admissible contracts
yields him a (weakly) better payoff and hence $F(W_{0}) \leq V(W_{0})$.

\textbf{Step 4.} Verifying the regularity condition that the recursive optimal scheme
yields an admissible contract.

We have shown that the solution to the HJB
equation describes an optimal contract
under the regularity assumption that
the recursive optimal solution yields an admissible contract
$({c},{a}) \in \boldsymbol{\mathcal{Z}}(W_0)$.
In the remainder of this proof, we verify that
 this regularity property holds.

For this we start with noticing that
as the action space $\mathcal{A}$ and the consumption space $\mathcal{C}$
are compact subsets in $R_{+}$, respectively,
the mappings $a(W)$ and $c(W)$ are bounded on the continuation domain
$\boldsymbol{S}^{c}$.
Similarly, the optimal sensitivity $Y(W)$ implied by incentive compatibility
condition \eqref{eqn:IC-Y} is bounded on the continuation domain
$\boldsymbol{S}^{c}$.
Furthermore, as the value function
$F$ is continuous in $W$,  the objective function of HJB equation is continuous
and hence by the theorem of Maximum as an optimizer
the mapping ${Y(W)}$ is continuous on $\boldsymbol{S}^{c}$.
This therefore shows that rewriting
the agent's continuation value process
\eqref{eqn:Ya} as the following
\begin{equation}\label{eqn:5.6}
dW_{t} = r( W_{t} + h(a_{t}) - u(c_{t}) ) dt + r \sigma Y_{t} dB_{t}^{a+\theta^{A}}, \  P^{a+\theta^{A}} - a.s.
\end{equation}
where $dB_{t}^{a+\theta^{A}} = (a_{t} + \theta^{A}_{t})dt + dB_{t}$ is a Brownian motion under the measure $P^{a+\theta^{A}}$
for the agent's worst-case scenario,
the drift term is a bounded function of  $W_t$ and the volatility term is a continuous function of it.
Applying \citet{stroock1997multidimensional} (Corollary 6.4.4) now implies
that there exists a unique (weak) solution for the agent's continuation-value process $(W_{t})$ starting from $W_{0}$.

It remains to verify that the stopping-time
${\tau} := \inf\{t \geq 0: {W}_{t} \in \mathcal{S}\}$
 associated with the contract $(c,a)$
has a finite random horizon.
As the value process $\{W_{t} \}$ defined by \eqref{eqn:5.6} is a one-dimensional Markov
process for which the boundaries
$0$ and $W_{gp}$ are absorbing,
the stopping-time ${\tau}$ is finite
with probability $1$ under the distribution $P^{a+\theta^{A}}$
(see, for instance, \citet{helland1996one}).
Moreover,
as the action $\{a_{t}\}$ takes values in the compact set $\mathcal{A}$
and the drift $\theta^{A}(a_{t})$ for any $a_{t}$ belongs to a compact set $\Theta(a_{t})$,
Radon-Nikodym derivative of the distribution $P^{a+\theta^{A}}$
with respect to the reference measure
$dP^{a+\theta^{A}}/d P^{0}$ given by the Girsanov exponent in \eqref{eqn:densities}
 has finite moments of any order.
Therefore, the stopping-time $\tau$ is also finite almost surely under the measure $P^{0}$.
Similarly, for any alternative action process $\{\hat{a}_{t}\}$
to which the agent can deviate and the drift process
$\{\hat{\theta}_{t}\}$,
taking values in $\Theta(\hat{a}_{t})$,
the resulting process  $\hat{W}_{t}$ defined by \eqref{eqn:5.6}
using $dB_{t}^{\hat{a}+\hat{\theta}}$ is a Brownian motion
and hence a one-dimensional Markov process.
The stopping-time $\tau$ is therefore finite with probability $1$
under any such resulting distribution $P^{\hat{a}+\hat{\theta}}$.



\subsection*{Proof of {Proposition}~\ref{prop:common-worst-case}}
Recall that non-trivial incentives implies a positive variation in the continuation
payoff $Y > 0$ within the contracting relationship. This in turn implies that
the agent perceives the lower envelope to be the worst-case scenario for him:
$\theta^{A} = a - \kappa(a;\phi)$.

Using the characterization of the principal's worst-case scenario in \eqref{eqn:L},
we next show that whenever the principal's worst-case becomes upper envelope,
i.e., $a + \kappa(a;\phi)$, she would rather retire the agent, and hence during
the contracting relationship the lower-envelope is the principal's worst case.
In particular whenever upper-envelope is active for some $W'$ then  $W' > W_{gp}$.
Suppose to the contrary and hence for some $\tilde{W} < W'$,
and close to $W_{gp}$ by continuity of $F$ implies that
$F'(\tilde{W}) > F'_{o}(W_{gp}) = -\frac{1}{u'(C_{gp})} = -\frac{1}{\gamma_{o}}$
where $\gamma_{o}$ is the minimum of continuous function $Y(a) = h'(a)/(1 - \kappa'(a))$ over $a$ in a compact set $\mathcal{A}$.
Moreover, since the continuation value is in the interior of the continuation region,
$\tilde{W} \in (0,W_{gp})$, and $F(\tilde{W}) > F_{0}(\tilde{W})$,
taking linear approximation at $\tilde{W}$ yields
\[
F(\tilde{W}) + F'(\tilde{W})\tilde{W} > F_{o}(W_{gp}) + F'_{o}(W_{gp})W_{gp}
> -C_{gp} + F'(\tilde{W})u(C_{gp}).
\]

As $F'(\tilde{W}) > F'_{o}(W_{gp})$ and  the upper envelope in \eqref{eqn:L} is the worst case, the latter implies that
\[
\begin{array}{cl}
F''(\tilde{W}) & = \frac{F(\tilde{W}) - a -\kappa(a) + {c} - F'(\tilde{W})(\tilde{W} - u({c}) + h(a) + \kappa(a)Y)}{r \sigma^{2}Y^{2}/2} \\
& \geq \frac{F(\tilde{W}) - a -\kappa(a) + C_{gp} - F'(\tilde{W})(\tilde{W} - u(C_{gp}) + h(a) + \kappa(a)Y)}{r \sigma^{2}Y^{2}/2} \\
& >  \frac{-a +h(a)/\gamma_{o}}{r \sigma^{2}Y^{2}/2} > 0,
\end{array}
\]
contradicting concavity of the principal's value function. \qed

%
%


\subsection*{Proof of Proposition~\ref{prop:Basic}}
The proof of \ref{prop:kappamonotonicity} of  Proposition~\ref{prop:Basic}
follows from a 'revealed preference' argument.
Specifically, fix any initial  promised value $W$ and pick an optimal action strategy $( \widehat a_{t} )$ under $\widehat\phi$. Let $\widehat c^{\widehat a}$ and $c^{\widehat a}$ be the least costly wage schemes to implement these action strategies under $\widehat \phi$ and $\phi$, respectively.
Since $\widehat\phi> \phi$, the contract $(\widehat c^{\widehat a})$ together with action strategy
$(\widehat a_{t} )$ yields first-order stochastically better distribution over output paths under $\phi$ than under $\widehat \phi$
\[
F(W;\widehat \phi) =
\expect_{ \widehat a -  \kappa(\widehat a;\widehat \phi)}\left[\int_{0}^{\widehat\tau} X_{t} - \widehat c^{\widehat a}_{t} dt \right]
\leq \expect_{ \widehat a -  \kappa( \widehat a; \phi)}\left[\int_{0}^{\widehat\tau} X_{t} - \widehat c^{\widehat a}_{t} dt\right]
\label{eqn:K1} \tag{K1}
\]
Moreover,  since $c^{\widehat a}$ is the least costly way of implementing $\widehat a$ under $\phi$,
we have that:
\[
\expect_{\widehat a - \kappa(\widehat a;\phi)}\left[\int_{0}^{\widehat\tau} X_{t} - \widehat c^{\widehat a}_{t} dt \right]
\leq \expect_{\widehat a -  \kappa(\widehat a;\phi)}\left[\int_{0}^{\tau} X_{t} -  c^{\widehat a}_{t} dt\right]
\label{eqn:K2} \tag{K2}
\]
Finally, under $\phi$ the optimal action strategy $(a^{\ast}_t)$ does weakly better than $(\widehat a_{t})$:
\[
\expect_{\widehat a -  \kappa(\widehat a;\phi)}\left[\int_{0}^{\tau} X_{t} -  c^{\widehat a}_{t} dt\right]
\leq \expect_{a^{\ast} -  \kappa(a^{\ast};\phi)}\left[\int_{0}^{\tau} X_{t} -  c^{a^{\ast}}_{t} dt\right] = F(W;\phi)
\label{eqn:K3} \tag{K3}
\]
Together with \eqref{eqn:K1}-\eqref{eqn:K3} now gives $F(W;\phi) > F(W;\widehat \phi)$.
The latter in turn holds for all $W$ by Markov property of the solutions to HJBI equation~\eqref{eqn:HJBI}.

Since the profit function has the same initial condition and the same boundary conditions,
and the profit function under higher ambiguity is lower at all values of $W$.
The latter implies that the slope of the profit function must be lower under higher ambiguity at every continuation value $W$,
which shows the part~\ref{prop:kappaslopes}
of Proposition~\ref{prop:Basic}

To show the part \ref{prop:kappaconcavity} of Proposition~\ref{prop:Basic}
we reformulate HJBI in the following form:
\[
F''(W;\phi) = \min_{(c,a,Y)\in \Gamma} \frac{F(W) - (a - \kappa(a;\phi)) + c - F'(W)(W - u(c) + h(a) + \kappa(a;\phi) Y)}{r \sigma^{2}Y^{2}/2}
\]

By a generalized envelope theorem argument \citet{milgrom2002envelope}:
\begin{multline*}
F''_{\phi}(W;\phi) :=\frac{\partial F''(W;\phi)}{\partial \phi}
=
\frac{F_{\phi}(W;\phi)}{r \sigma^{2}Y^{2}/2}
+
\kappa_{\phi}(a;\phi)\frac{- 1 - F'(W;\phi)Y}{r \sigma^{2}Y^{2}/2}
\\
- \frac{F'_{\phi}(W;\phi)(W - u(c) + h(a) + \kappa(a;\phi) Y)}{r \sigma^{2}Y^{2}/2}
< 0
\end{multline*}
evaluated at the optimal controls $a,c$ and $Y$.\footnote{
The parameter $\phi$ does not affect the distribution $P$ of the underlying process, which justifies the application of the theorem.}
The first term is negative because $F_{\phi}(W;\phi) \leq 0$ by
the part~\ref{prop:kappamonotonicity} of Proposition~\ref{prop:Basic}; the second term is negative in view of  $- 1 - F'(W;\phi)Y < 0 $ -- property \eqref{eqn:L} that characterizes in Proposition~\ref{prop:common-worst-case}
the lower envelope as the common worst-case scenario, and
by Definition~\ref{defn:higher-ambig}
higher strength of ambiguity implies a larger drift-set $\kappa_{\phi}(a;\phi) > 0$
The third term is also negative because $F'_{\phi}(W;\phi) \leq 0$ by
the part~\ref{prop:kappaslopes} of Proposition~\ref{prop:Basic}
and the drift of the agent's continuation
value $(W - u(c) + h(a) + \kappa(a;\phi) Y)$ is non-negative as the zero-action
is always feasible. \qed

\subsection*{Proof of {Proposition}~\ref{thm:ContinuationDomain}}
Suppose that $W_t \in C(\phi',t)$. This implies that
$F(W_t,\phi') > F_0 (W_t)$. The part~\ref{prop:kappamonotonicity}
of Proposition~\ref{prop:Basic}
implies that $F(W_t,\phi) > F(W_t,\phi')$, and hence $W_t \in C(\phi)$. \qed


\subsection*{Proof of {Theorem}~\ref{thm:Flattening-Linears}}

The proof strategy implements monotone comparative statics for three kinds of affine
forms relating actions to the degrees of drift ambiguity:
constant, linear increasing, and linear decreasing.
We treat each in turn below.


\textbf{Constant Ambiguity.}
Consider a drift ambiguity model with
{$\kappa(a;\phi) = \phi_{0}$ for all $a$}.
With this formulation of the drift ambiguity, using the lower envelope $a - \phi_{0}$ as
the (common) worst-case scenario the functional equation
HJBI~\eqref{eqn:HJBI}
describing the principal's contracting problem
therefore takes the following form
\begin{equation}\label{eqn:HJB-phi-zero}
\begin{split}
r F(W;\phi_{0}) = \max_{(a,Y,c)\geq 0 } \big\{ r \left( a - \phi_{0} - c  \right)
+ r F'(W;\phi_{0}) \left(W \right. & - u(c)  \left. + h(a) - \phi  Y \right) \\
&+ {F''(W;\phi_{0})} r^{2} Y^{2} \sigma^{2}/2 \big\}
\end{split}
\end{equation}
subject to the incentive compatibility \eqref{eqn:IC-Y}
and boundary conditions in \eqref{eqn:HJBI}.

The incentive compatibility for the constant ambiguity model
implies the following relationship  between an action $a$ and incentive compatible variation in the continuation value
to the agent $Y$ required to implement it:
$Y= a$. It does not depend on the constant ambiguity
as the agent's action choice in this model does not affect it.
Using this relationship and letting
$m(Y;\phi_{0}) := \left(W - u(c) + h(Y) - \phi_{0} Y \right)$
denote the drift of the agent's continuation value (from the principal's perspective)
expresses the HJBI equation~\eqref{eqn:HJB-phi-zero}
in the following form:
\[
r F(W;\phi_{0}) = \max_{(Y,c)\geq 0 } \big\{ r \left( Y - \phi_{0} - c  \right)
+ r F'(W;\phi_{0}) m(Y;\phi_{0}) + {F''(W;\phi_{0})} r^{2} Y^{2} \sigma^{2}/2 \big\}
\]
subject to the same set of boundary conditions as in
\eqref{eqn:HJB-phi-zero}.

Interest is in the monotone comparative statics argument
for the optimal incentive $Y$ respect to constant (common) strength
of ambiguity $\phi_{0}$.
This is based on analyzing the nature of modularity for the objective function
with respect to $Y$ and $\phi_{0}$.
For this, we focus on the terms of the objective function that involve interactions between
the incentives $Y$ and strength of ambiguity $\phi_{0}$.
The partial derivative of the related terms
with respect to $\phi_{0}$:
\[
\mathcal{H}_{\phi_{0}}(Y) =
rF_{\phi}'(W;\phi_{0}) m(Y;\phi_{0})
 - r F'(W;\phi_{0}) Y + {F_{\phi_{0}}''(W;\phi_{0})} r^{2} Y^{2} \sigma^{2}/2
\]
We next analyze whether this partial derivative is decreasing(increasing) in $Y$
and determine whether the objective function is sub(super)-modular in $Y$ and $\phi_{0}$.
By envelope theorem applied to the agent's
optimization problem, the first term of the partial derivative $\mathcal{H}_{\phi_{0}}(Y)$
does not vary with $Y$.
The last term is also monotone decreasing in $Y$
as $F_{\phi_{0}}''(W;\phi_{0}) < 0$ by part \ref{prop:kappaconcavity} of
\textnormal{Proposition~\ref{prop:Basic}}.
The second term, on the other hand, is first
decreasing in $Y$ for low enough continuation values such that $W \in (W_o, \upbar{W})$
since on this set $F'(W;\phi_{0}) \geq 0$, and then it is
increasing in $Y$ for high enough continuation values
such that $W \in (\upbar{W},W_{gp} )$.
As the sum of these terms the partial derivative
$\mathcal{H}_{\phi_{0}}(Y)$
 is therefore monotone decreasing in $Y$
 and hence the objective function is sub-modular
 for low enough continuation values such that $W \in (W_o, \upbar{W})$.
 For high enough continuation values $W \in (\upbar{W},W_{gp})$ on the other hand
 it may be super-modular if super-modularity due to the second term overcomes
the sub-modularity due to the other terms.

Summarizing, the characterization of monotone comparative statics
for the optimal incentive $Y$ with respect to the constant ambiguity $\phi_{0}$
is as follows.
For low enough continuation values $W \in (W_o, \upbar{W})$,
the objective function in \eqref{eqn:HJB-phi-zero}
is sub-modular in $(Y,\phi_{0})$ and by the monotone comparative static argument,
the optimal incentive sensitivity $Y$
decreases in $\phi_{0}$.
For high enough continuation values $W \in (\upbar{W}, W_{gp})$, on the other hand,
the objective function in \eqref{eqn:HJB-phi-zero}
can be super-modular in $(Y,\phi_{0})$ and
by the monotone comparative static argument,
hence the optimal sensitivity $Y$
can be non-decreasing in $\phi_{0}$.
This yields the result in the part (a) of \normalfont{Theorem}~\ref{thm:Flattening-Linears}.
We next turn to analyze how the optimal $Y$ changes
as the slope term $\phi_{1}$ changes.

 \textbf{Ambiguity linearly increasing in action.}
Consider a linear drift ambiguity model with $\kappa(a;\phi_{1}) = \phi_{1} a$.
With this formulation of the drift ambiguity, using the lower envelope $a (1- \phi)$ as
the (common) worst-case scenario
the HJBI~\eqref{eqn:HJBI}
therefore takes the  form:
\begin{equation}\label{eqn:HJBI-Lin-zero}
\begin{split}
\hspace{-2mm} r F(W;\phi_{1}) = \max_{(a,Y,c)\geq 0 } \big\{ r \left( a(1 - \phi_{1}) - c  \right)
+ & r F'(W;\phi_{1}) \left(W \right. - u(c)  \left. + h(a) - \phi_{1} a Y \right) \\
&+ {F''(W;\phi_{1})} r^{2} Y^{2} \sigma^{2}/2 \big\}
\end{split}
\end{equation}
subject to the boundary conditions (corresponding to individual rationality requirements), and
incentive compatibility \eqref{eqn:IC-Y}.
The incentive compatibility  implies the following relationship
between an action $a$ and incentive compatible variation in the continuation value
to the agent $Y$ required to implement it:
$Y= {a}/{1 - \phi_{1}}$.
Using this relationship so that $a(Y;\phi_{1}) = (1 - \phi_{1})Y$
and letting $m(Y;\phi_{1}) = \left(W - u(c) + h(a(Y;\phi)) - \phi_{1} a(Y;\phi) Y \right)$
denote the drift of the agent's continuation value (from the principal's perspective)
expresses the HJBI equation~\eqref{eqn:HJBI-Lin-zero}
in the following form:
\[
r F(W;\phi_{1}) = \max_{(Y,c)\geq 0 } \big\{ r \left( a(Y;\phi_{1})(1 - \phi_{1}) - c  \right)
+ r F'(W;\phi_{1}) m(Y;\phi_{1}) + {F''(W;\phi_{1})} r^{2} Y^{2} \sigma^{2}/2 \big\}
\]
subject to the same set of boundary conditions as in
\eqref{eqn:HJBI-Lin-zero}.
For monotone comparative statics argument,
we focus on the terms of the objective function that involve
interactions between incentives $Y$ and
the strength of ambiguity $\phi_{1}$
\begin{equation}\label{eqn:Hfull}
\mathcal{H} (Y;\phi_{1}):=
r  (1 - \phi_{1})^{2} Y
+ r F'(W;\phi_{1}) m (Y;\phi_{1}) + {F''(W;\phi_{1})} r^{2} Y^{2} \sigma^{2}/2
\end{equation}
We next analyze modularity of this function in terms of
$Y$ and $\phi_{1}$ which in turn determines the monotone comparative statics
in the optimal contract for $Y$ in terms of $\phi$.
The partial derivative of $\mathcal{H} (Y;\phi)$ with respect to $\phi$  is
\[
\mathcal{H}_{\phi_{1}}(Y) :=
- r 2(1 - \phi_{1})Y
+ F_{\phi_{1}}'(W;\phi) m(Y;\phi_{1})
+ F'(W;\phi_{1}) m_{\phi} (Y;\phi_{1})
+ {F_{\phi_{1}}''(W;\phi_{1})} r Y^{2} \sigma^{2}/2
\]
The first term of this partial derivative is decreasing in $Y$ since $\phi_{1} \leq 1$
for any viable contracting relationship for otherwise the incentive compatibility condition
 \eqref{eqn:IC-Y}
implies that the optimal contract would implement zero-action corresponding to termination.
The second term at the optimum contract
does not vary in  $Y$ by the envelope theorem
applied to the agent's optimization problem.
Turning to the third term of the partial derivative,
since $m_{\phi_{1}} (Y;\phi_{1})$ is decreasing in $Y$,
it is  decreasing in $Y$ for low enough continuation value so that
$W \in (W_o, \upbar{W})$ as in this range $F'(W;\phi_{1}) \geq 0$.
On the other hand, it is then increasing in $Y$ for high enough continuation values
$W \in (\upbar{W},W_{gp})$ for which $F'(W;\phi_{1}) < 0$.
Finally, the last term of the partial derivative $\mathcal{H}_{\phi_{1}}(Y)$
is monotone decreasing in $Y$ as $F_{\phi_{1}}'' < 0$ by the part \ref{prop:kappaconcavity}
of Proposition~\ref{prop:Basic}.
Considering the sum of these terms then,
except possibly for high enough continuation values such that
$W \in (\upbar{W},W_{gp})$
the objective function is sub-modular in
$Y$ and $\phi_{1}$ in the linear model,
and hence
by monotone comparative statics arguments
the optimal incentives $Y$ decreases in the principal's linear ambiguity $\phi_{1}$.

\textbf{Ambiguity linearly decreasing in effort}:
In this formulation, the length of drift-set is given by
$\kappa(a;\bphi) = {\phi_{0}} - \phi_{1} a$.
With this linear decreasing formulation of the drift ambiguity,
using the lower envelope $a - \kappa(a;\bphi)$
as the (common) worst-case scenario in drift-sets
the functional equation HJBI~\eqref{eqn:HJBI} describing the principal's contracting
problem therefore takes the  form:
\[
\begin{split}
r F(W;\bphi) = \max_{(a,Y,c)\geq 0 } \big\{ r ( a - \kappa(a;\bphi)  - c  )
+ r F'(W;\bphi) \left(W \right. &- u(c)  \left. + h(a) - \kappa(a;\bphi) Y \right) \\
&+ {F''(W;\bphi)} r^{2} Y^{2} \sigma^{2}/2 \big\}
\end{split}
\]
subject to boundary conditions in \eqref{eqn:HJBI}, and
incentive compatibility \eqref{eqn:IC-Y}.
The latter  for the linear decreasing model of drift set
implies the following relationship  between an action $a$ and sensitivity in the continuation value to the agent $Y$ to implement it:
$Y= {a}/{1 + \phi_{1}}$.
Using this relationship $a(Y;\bphi) := (1 + \phi_{1})Y$
and letting $m(Y;\bphi) = \left(W - u(c) + h(a(Y;\bphi)) -  (\phi_{0} - \phi_{1} a(Y;\bphi)) Y \right)$
denote the drift of the agent's continuation value (from the principal's perspective)
expresses the objective function in the following form:
\[
r F(W;\bphi) = \max_{(Y,c)\geq 0 } \big\{ r \left( a(Y;\bphi)(1 + \phi_{1}) - \phi_{0} - c  \right)
+ r F'(W;\bphi) m(Y;\bphi) + {F''(W;\bphi)} r^{2} Y^{2} \sigma^{2}/2 \big\}
\]
Notice that in the linear decreasing formulation
constant ambiguity $\phi_{0}$  and slope ambiguity $\phi_{1}$
enters into the formulation linearly additivity.
Due to this separability, the monotone comparative static analysis of $Y$
with respect to the constant ambiguity $\phi_{0}$
follows by the arguments analogous to that made earlier in this proof
for the constant ambiguity model.

The monotone comparative static analysis with respect to $\phi_{1}$
follows from analogous arguments made above for
linear increasing model by a change of variable.
This change of variable replaces
$\phi_{1}$ in the linear increasing ambiguity model
with $-\phi_{1}$ to represent drift set becoming smaller as effort increases.
In this specification, higher drift ambiguity
arises from an increase in $\phi_{0}$ and a decrease in $\phi_{1}$,
which expand drift sets.
From this change of variables, the analogous characterization follows.
\vspace{-5mm}

\subsection*{Proof of Proposition~\ref{prop:Wage-Depressing}}


The optimal wage is determined by solving the principal's optimization problem~\eqref{eqn:H-Principtheta-Y},
and it solves the first-order condition:
\begin{equation*}
1 = F'(W;\phi) u'(c(W;\phi))
\end{equation*}
Since  $F'_\phi < 0 $ by
the part~\ref{prop:kappaslopes} of Proposition~\ref{prop:Basic} and  $u$ is concave,
this equation implies that $c(W;\widehat\phi) < c(W;\phi)$ whenever
$\widehat\phi \succ_{\Amb} \phi$.
Moreover,
as the principal's profits are lower at higher ambiguity levels $F(W;\widehat\phi) \leq F(W;\phi)$,
by Proposition~\ref{prop:Basic}, whenever the two contracts specify the same wage levels
for the associated continuation values $\widehat W$ and $W$, respectively,
by analogous arguments in \citet[Theorem 4]{sannikov2008continuous}
the ratio of volatilities of the agent's consumption and continuation values is greater
in the contract associated with $\widehat\phi$.
Therefore, the optimal contract associated with higher ambiguity $\widehat\phi$
relies less on short-term incentives. \qed

\vspace{-5mm}

\section*{Proofs of the Results for Heterogeneous Drift Ambiguity in Section~\ref{Section:Heterogeneous Drift sets}}

\subsection*{Proof of Proposition~\ref{prop:Basic-Asy}}

%
The proof of this proposition extends and applies the arguments made in
 \normalfont{Proposition}~\ref{prop:Basic} which characterizes the basic properties
 of the principal's value function for the symmetric drift ambiguity.
The proof of the part \ref{prop:kappamonotonicity-Asy}
Proposition~\ref{prop:Basic-Asy}
for heterogeneous
ambiguity follows  from a `revealed preference' argument
analogous to that made for symmetric case in the part \ref{prop:kappamonotonicity}
of \normalfont{Proposition}~\ref{prop:Basic}.

 Consider first a change  in the ambiguous perception of the principal. Specially, suppose that starting from the ambiguity scenario characterized by
$\bphi= (\phi^{P},\phi^{A})$
the ambiguity perceived by the principal increases from $\phi^{P}$ to $\widehat\phi^{P}$
i.e., $\kappa(a;\widehat\phi^{P}) > \kappa(a;\phi^{P})$ for each $a \in \actions$
while the agent's ambiguous perception remains unchanged at $\phi^{A}$. Denote this new scenario
by $\widehat{\bphi} := (\widehat\phi^{P},\phi^{A})$.

For a `revealed preference' argument, fix any promised value $W$ and let
$(\widehat c, \widehat a)$ be the optimal contract under the ambiguity scenario $\widehat{\bphi}$
that yields value $W$ to the agent.
As $\kappa(a;\widehat\phi^{P})\geq \kappa(a;\phi^{P})$,
the lower envelope being the worst-case scenario for the principal,
implies that the action profile $(\widehat a_{t} )$ yields a better worst-case distribution in
first-order stochastic order under $\phi^{P}$ than under $\widehat \phi^{P}$:
\begin{multline*}
F(W;\widehat \bphi) :=
\expect_{P} \left[ \int_{0}^{\widehat \tau}e^{-r t}( \widehat a_t - \kappa( \widehat a_t;\widehat\phi^{P}) - \widehat c_t )dt \right]
\\
\leq
\expect_{P} \left[ \int_{0}^{\widehat \tau}e^{-r t}( \widehat a_t - \kappa( \widehat a_t;\phi^{P}) - \widehat c_t )dt \right]. \label{eqn:K1-Asy} \tag{K1$'$}
\end{multline*}
Here the inequality is strict if $\kappa(a;\widehat\phi^{P}) > \kappa(a;\phi^{P})$
for each $a \in \actions$.

Now consider the effect of  the agent's ambiguity perceptions and
that it increases from $\phi^{A}$ to $\widehat \phi^{A}$
such that $\kappa(a;\widehat \phi^{A}) \geq \kappa(a;\phi^{A})$ for each $a \in \actions$.
It implies from  \eqref{eqn:IC-Y-Asy} that the agent's action choice increases
as his perceived ambiguity decreases as it reduces the agency cost of implementation.
In particular,  denote by $\{a_{t}\}$ the action profile that the
wage scheme $(\widehat c)$, which is optimal under the scenario $\widehat \phi$,
with the associated sensitivity process $(\widehat Y_t)$ implements under $\phi^{A}$.
Notice from the incentive compatibility condition \eqref{eqn:IC-Y-Asy}
this incentive scheme implements a higher action profile
under $\phi^{A}$ than $\widehat\phi^{A}$, or more precisely:
$\widehat \phi^{A} \geq \phi^{A}$ implies $a_t \geq \widehat a_t$,
with strict inequality if $ \widehat \phi^{A}  > \phi^{A} $.
This in turn implies for the principal's guaranteed payoff
\[
\expect_{P}\left[\int_{0}^{\widehat\tau}e^{-r t} ( \widehat a_t - \kappa( \widehat a_t;\widehat\phi^{P}) - \widehat c_t) dt \right]
\leq
\expect_{P}\left[\int_{0}^{\widehat \tau}e^{-r t}( a_t - \kappa(  a_t;\widehat\phi^{P}) - \widehat c_t)dt\right]
\label{eqn:K2-Asy} \tag{K2$'$}
\]
with strict inequality if
$\kappa(a;\widehat \phi^{A}) > \kappa(a;\phi^{A})$ for each $a \in \actions$.

As the principal's optimal contract chooses optimally amongst the incentive
compatible allocations, under ambiguity scenario $\bphi$
the optimal action strategy $(a^{\ast}_t)$ does weakly better for the principal
than the action strategy $(\widehat a_{t})$, which is optimal under $\widehat\bphi$:
\[
\expect_{P} \left[ \int_{0}^{\widehat \tau}e^{-r t}(\widehat a_t - \kappa(\widehat a_t;\phi^{P}) - \widehat c_t) dt \right]
\leq
\expect_{P} \left [ \int_{0}^{\tau^*}e^{-r t}(a^{\ast}_t - \kappa( a^{\ast}_t;\phi^{P}) - c^{\ast}_t) dt \right ] := F(W;\widehat\phi)~\tag{K3$'$}
\label{eqn:K3-Asy}
\]
Summarizing, together with \eqref{eqn:K1-Asy}-\eqref{eqn:K3-Asy} implies
that $F(W;\bphi) > F(W;\widehat\bphi)$ whenever $(\phi^{P},\phi^{A}) \succ_A (\widehat \phi^{P},\widehat \phi^{A})$.
The latter in turn holds for all $W$ by Markov property of the solutions to
the HJBI equation~\eqref{eqn:HJBI-Asy}, which shows
the part \ref{prop:kappamonotonicity-Asy} of
Proposition~\ref{prop:Basic-Asy}.

To see that the part \ref{prop:kappaslopes-Asy} holds notice that
the profit function has the same initial conditions and the same boundary conditions
both under $\bphi$ and $\widehat\bphi$.
This together with the part~\ref{prop:kappamonotonicity-Asy} implies that the slope of
the profit function must be lower under higher ambiguity.

To show the part~\ref{prop:kappaconcavity-Asy}
of Proposition~\ref{prop:Basic-Asy}
  we start with reformulating HJBI~\ref{eqn:HJBI-Asy}:
\begin{equation*}
F''(W;\bphi) = \min_{(a,Y,c)\in \Gamma} \frac{F(W;\bphi) - (a - \kappa(a;\phi^{P})) + c - F'(W;\bphi)(W - u(c) + h(a) -\kappa(a;\phi^{A}) Y)}{r \sigma^{2}Y^{2}/2}
\end{equation*}
and then analyze the effects of changes in the principal's ambiguity $\phi^{P}$ and
the agent's ambiguity $\phi^{A}$ on the concavity of the principal's value function, in turn.
By a generalized envelope theorem argument \citet{milgrom2002envelope}
its derivative with respect to $\phi^{P}$:
\begin{multline*}
F''_{\phi^{P}}(W;\bphi) :=\frac{\partial F''(W;\bphi)}{\partial \phi^{P}}
=
\frac{F_{\phi^{P}}(W;\bphi)}{r \sigma^{2}Y^{2}/2}
- \kappa_{\phi^{P}}(a;\phi^{A})\frac{1 + F'(W;\bphi)Y}{r \sigma^{2}Y^{2}/2}
\\
+ \frac{F'_{\phi^{P}}(W;\bphi)(W - u(c) + h(a) - \kappa(a;\phi^{A}) Y)}{r \sigma^{2}Y^{2}/2}
< 0
\end{multline*}
evaluated at the optimal controls $a,c$ and $Y$.
The first term is negative as $F_{\phi^{P}}(W;\bphi) < 0 $ by \ref{prop:kappamonotonicity-Asy}.
The second term is zero since the agent's action choice depends on
her perceived ambiguity $\phi^{A}$ but not on the principal's $\phi^{P}$, i.e.,
$\kappa_{\phi^{P}}(a;\phi^{A}) = 0$. The third and
the last term is also negative because $F'_{\phi^{P}}(W;\bphi) \leq 0$ by
the part~\ref{prop:kappaslopes-Asy}
of Proposition~\ref{prop:Basic-Asy}
in the current proof
and the drift of the agent's continuation
value, $(W - u(c) + h(a) - \kappa(a;\phi^{P}) Y)$, is non-negative during
the optimal contract as the zero-action is always feasible to implement.

Again by the generalized envelope theorem its derivative
this time with respect to $\phi^{A}$:

\begin{multline*}
F''_{\phi^{A}}(W;\bphi) :=\frac{\partial F''(W;\bphi)}{\partial \phi^{A}}
=
\frac{F_{\phi^{A}}(W;\bphi)}{r \sigma^{2}Y^{2}/2}
- \kappa_{\phi^{A}}(a;\phi^{A})\frac{1 + F'(W;\bphi)Y}{r \sigma^{2}Y^{2}/2}
\\
+ \frac{F'_{\phi^{A}}(W;\bphi)(W - u(c) + h(a) - \kappa(a;\bphi) Y)}{r \sigma^{2}Y^{2}/2}
< 0
\end{multline*}
The first term is negative as $F_{\phi^{A}}(W;\bphi) < 0 $ by the part \ref{prop:kappamonotonicity-Asy} of Proposition~\ref{prop:Basic-Asy}.
The second term is also negative  in view of  $ 1 + F'(W;\bphi)Y > 0 $ --  condition~\eqref{eqn:L'}  that characterizes the lower envelope as the principal's worst-case scenario, and
by \normalfont{Definition}~\ref{defn:higher-ambig} that
higher strength of ambiguity implies a larger drift-set
$\kappa_{\phi^{A}}(a;\phi^{A}) > 0$ for each $a \in \actions$. The third and
the last term is also negative because $F'_{\phi^{A}}(W;\bphi) \leq 0$ by
the part~\ref{prop:kappaslopes-Asy} of Proposition~\ref{prop:Basic-Asy}
and the drift of the agent's continuation
value is non-negative for any non-trivial action. \qed

\subsection*{Monotone Comparative Statics for  Heterogeneous Drift-Ambiguity in
Section~ \ref{subsection:Monotonce-Het}
}

\subsection*{Proof of {Theorem}~\ref{thm:Flattening-Linears-Het}}
The proof strategy
using monotone comparative statics for heterogeneous drift-ambiguity
that in \normalfont{Theorem}~\ref{thm:Flattening-Linears}.
We first analyze the case of constant ambiguity independent of action,
and then turn to the case that specify
linear relationship between actions and drift sets.

\textbf{Constant Ambiguity.}
For the constant ambiguity model,
notice that the incentive compatibility condition~
\ref{eqn:IC-Y-Asy} does not depend on the strength of (constant) ambiguity,
and implies that the required sensitivity to implement a non-trivial
action  $a$ satisfies $ Y(a;\phi^{A}_{0})= a$.
Using this relationship, notice that
the drift of the agent's continuation value
from the principal's perspective is
given by $m(Y;\phi^{P}_{0}) := \left(W - u(c) + h(Y) - \phi^{P}_{0} Y \right)$,
depends only on the principal's ambiguity
perception $\phi^{P}_{0}$ but not on the agent's $\phi^{A}_{0}$.
The HJBI formulation~\eqref{eqn:HJBI-Asy} of the
contracting problem therefore takes the form:
\begin{equation}\label{eqn:HJBI-Asy-Const}
\hspace{-5mm}
r F(W;\bphi_{0}) = \max_{(Y,c)\geq 0 } \Big\{ r \left( Y - \phi^{P}_{0} - c  \right)
+ r F'(W;\bphi_{0}) m(Y;\phi^{P}_{0}) + \frac{F''(W;\bphi_{0})}{2} r^{2} Y^{2} \sigma^{2} \Big\}
\end{equation}
subject to the same boundary and smooth-pasting conditions.
The monotone comparative statics argument,
is based on analyzing the nature of modularity for the objective function
with respect to
the incentives $Y$ and strength of ambiguity $\bphi_{0}$.
For this, we focus on the terms of the objective function that involve interactions between
$Y$ and $\bphi_{0}$.
We analyze the effects of $\phi^{P}_{0}$ and $\phi^{A}_{0}$
on the optimal incentive scheme in turn.
The partial derivative of the related terms
with respect to $\phi^{P}_{0}$:
\[
\mathcal{H}_{\phi^{P}_{0}}(Y) =
rF_{\phi^{P}_{0}}'(W;\bphi) m(Y;\phi^{P}_{0})
 - r F'(W;\bphi_{0}) Y + {F_{\phi^{P}_{0}}''(W;\bphi_{0})} r^{2} Y^{2} \sigma^{2}/2.
\]
We next analyze whether this partial derivative is decreasing(increasing) in $Y$
and hence
determine whether the objective function is sub(super)-modular in $Y$ and $\bphi_{0}$.
The first term of the partial derivative $\mathcal{H}_{\phi^{P}_{0}}(Y)$
is monotone decreasing in $Y$.
This is because during the contracting relationship
$F_{\phi^{P}_{0}}'(W;\bphi_{0}) < 0$
by the part~\ref{prop:kappaslopes-Asy} of
\textnormal{Proposition~\ref{prop:Basic-Asy}}
and the drift of the agent's continuation value $m(Y;\bphi_{0})$
is increasing in $Y$ for any viable action, i.e., $a - \phi^{P}_{0} > 0$,
for otherwise the principal would optimally implement zero action.
The last term is also monotone decreasing in $Y$
as $F_{\phi^{P}_{0}}''(W;\bphi_{0}) < 0$ by the part~\ref{prop:kappaconcavity-Asy} of
\textnormal{Proposition~\ref{prop:Basic-Asy}}.
The second term, on the other hand, is first
decreasing in $Y$ for low enough continuation values such that $W \in (W_o, \upbar{W})$
since on this set $F'(W;\bphi_{0}) \geq 0$, and then it is
increasing in $Y$ for high enough continuation values
such that $W \in (\upbar{W},W_{gp} )$.
The sum is therefore monotone decreasing in $Y$
 and hence the objective function is sub-modular
 for low enough continuation values $W \in (W_o, \upbar{W})$.
 For high continuation values $W \in (\upbar{W},W_{gp})$ on the other hand
 it may be super-modular if super-modularity due to the second term overcomes
the sub-modularity due to the other two terms.

Summarizing then gives the following characterization.
 For low enough continuation values $W \in (W_o, \upbar{W})$,
the objective function in \eqref{eqn:HJBI-Asy-Const}
is sub-modular in $(Y,\phi^{P}_{0})$ and the optimal incentive sensitivity $Y$
decreases in $\phi^{P}_{0}$. On the other hand,
for high enough continuation values $W \in (\upbar{W}, W_{gp})$,
the objective function can be super-modular in $(Y,\phi^{P}_{0})$ and hence the optimal sensitivity $Y$
can be non-decreasing in $\phi^{P}_{0}$.
This yields the result stated in the part (a) of \normalfont{Theorem}~\ref{thm:Flattening-Linears-Het} for an increase in $\phi_{0}^{P}$.

We next turn to the comparative statics of $Y$ with respect to  $\phi^{A}_{0}$.
For this, we focus on the terms of the objective function
\eqref{eqn:HJBI-Asy-Const} involving interaction
between them and consider
its partial derivative with respect to $\phi^{A}_{0}$
\[
\mathcal{H}_{\phi^{A}_{0}}(Y) =
r F_{\phi^{A}_{0}}'(W;\bphi_{0}) m(Y;\phi^{P}_{0}) + {F_{\phi^{A}_{0}}''(W;\bphi_{0})} r^{2} Y^{2} \sigma^{2}/2
\]
This partial derivative is monotone decreasing in $Y$.  This is because $F_{\phi^{A}_{0}}'(W;\bphi_{0})  < 0$ and $F_{\phi^{A}_{0}}''(W;\bphi_{0})  < 0$
by the parts~\ref{prop:kappaslopes-Asy} and \ref{prop:kappaconcavity-Asy} of
\normalfont{Proposition}~\ref{prop:Basic-Asy}, respectively,
and the drift of the continuation value $m(Y;\bphi_{0}) $ is increasing in $Y$.
The objective function of HJBI~\ref{eqn:HJBI-Asy-Const}
is therefore sub-modular in $Y$ and $\phi^{A}_{0}$.
By monotone comparative statics argument then,
the optimal incentive sensitivity $Y$ decreases
with the agent's constant ambiguity $\phi^{A}_{0}$.
This yields the result stated in the part (b) of \normalfont{Theorem}~\ref{thm:Flattening-Linears-Het} for an increase in $\phi_{0}^{A}$.


\textbf{Ambiguity linearly increasing in action.}
Consider a linear drift ambiguity model with $\kappa(a;\phi^{P}_{1}) = \phi^{P}_{1} a$ and
$\kappa(a;\phi^{A}_{1}) = \phi^{A}_{1} a$ for all $a \in \actions$.
Denote this heterogenous ambiguity by $\bphi_{1} = (\phi^{P}_{1},\phi^{A}_{1})$.
With this formulation of the drift ambiguity, the HJBI formulation~\eqref{eqn:HJBI-Asy} of the contracting problem now takes the form:
\[
\begin{split}
r F(W;\bphi_{1}) = \max_{(a,Y,c)\geq 0 } \ \  \big\{ r \left( a(1 - \phi^{P}_{1}) - c  \right)
+ &r   F'(W;\bphi_{1}) \left(W  - u(c)  + h(a) - \phi^{P}_{1} a Y \right) \\
&+ {F''(W;\bphi_{1})} r^{2} Y^{2} \sigma^{2}/2 \big\}
\end{split}
\]
subject to boundary conditions and
incentive compatibility \eqref{eqn:IC-Y-Asy}.
The latter implies the following relationship
between an action $a$ and incentive compatible variation in the continuation value
to the agent $Y$ required to implement it:
$Y= \frac{a}{1 - \phi^{A}_{1}}$. Notice that this condition depends only on the agent's perception  $\phi^{A}_{1}$ but not the principal's $\phi^{P}_{1}$ as the incentive compatibility condition describes the agent's decision rule for his action choice for any given $Y$.
Using this relationship so that $a(Y;\bphi_{1}) = (1 - \phi^{A})Y$
and letting $m(Y;\bphi) = \left(W - u(c) + h(a(Y;\bphi)) - \phi^{P}_{1} a(Y;\bphi) Y \right)$
denote the drift of the agent's continuation value (from the principal's perspective)
now expresses the HJBI equation in the following form:
\begin{equation}\label{eqn:HJBI-Asy-Lin}
\begin{split}
r F(W;\bphi_{1}) = \max_{(Y,c)\geq 0 } \big\{  r \left( a(Y;\bphi_{1})(1 - \phi^{P}_{1}) - c  \right)
+ &r F'(W;\bphi_{1})  m(Y;\bphi_{1}) \\
&+ {F''(W;\bphi_{1})} r^{2} Y^{2} \sigma^{2}/2 \big\}
\end{split}
\end{equation}
subject to the same set of boundary conditions.

Analogous to monotone comparative statics analysis in the proof
of \normalfont{Theorem}~\ref{thm:Flattening-Linears}
 the focus is on the terms of the objective function of
HJBI functional equation \eqref{eqn:HJBI-Asy-Lin} that involve
interactions between the strength of incentives $Y$ and the strength of ambiguity
$\bphi_{1} = (\phi^{A}_{1},\phi^{P}_{1})$
\begin{equation}\label{eqn:Hfull-asy}
\mathcal{H} (Y;\bphi_{1}):=
r  (1 - \phi^{A}_{1})(1 - \phi^{P}_{1}) Y
+ r F'(W;\bphi_{1}) m (Y;\bphi_{1}) + {F''(W;\bphi_{1})} r^{2} Y^{2} \sigma^{2}/2
\end{equation}
We next analyze modularity of this function to characterize
the monotone comparative statics of the optimal
incentive sensitivity $Y$ with respect to $\bphi_{1}$.
We do so first with respect to the principal's ambiguity $\phi^{P}_{1}$ and
then turn to the agent's linear ambiguity $\phi^{A}_{1}$.
%
%
The partial derivative of $\mathcal{H} (Y;\bphi_{1})$ with respect to $\phi^{P}_{1}$ (up to positive constant $r$)  is
\[
\small{
\mathcal{H}_{\phi^{P}_{1}} (Y;\bphi_{1}) =
- (1 - \phi^{A}_{1})Y
+ F_{\phi^{P}_{1}}'(W;\bphi_{1}) m(Y;\bphi_{1})
+ F'(W;\bphi_{1}) m_{\phi^{P}_{1}} (Y;\bphi_{1})
+ {F_{\phi^{P}_{1}}''(W;\bphi_{1})} r Y^{2} \sigma^{2}/2
}
\]
The first term is decreasing in $Y$ because
in any viable contracting relationship $\phi^{A}_{1}\leq 1$, for otherwise the incentive compatibility condition
 \eqref{eqn:IC-Y-Asy}
implies that the optimal contract would implement zero-action corresponding to termination.
By envelope theorem applied to the agent's
decision problem,
the agent's continuation value at the optimum
does not vary with differential change in $Y$.
Turning to the third term,
since $m_{\phi^{P}_{1}} (Y;\bphi_{1}) = -(1 - \phi^{A}_{1})Y^{2}$ is decreasing in $Y$,
it is decreasing in $Y$ for low enough continuation values so that
$W \in (W_o, \upbar{W})$ as in this range $F'(W;\bphi_{1}) \geq 0$.
On the other hand, it is then increasing in $Y$ for high enough continuation values
$W \in (\upbar{W},W_{gp})$ for which $F'(W;\bphi_{1}) < 0$.
Finally, the last term  of  $\mathcal{H}_{\phi^{P}} (Y;\bphi_{1})$ 
is monotone decreasing in $Y$ as $F_{\phi^{P}_{1}}'' < 0$ by the
part~\ref{prop:kappaconcavity-Asy} of \normalfont{Proposition}~\ref{prop:Basic-Asy}.

Considering the sum of these  terms then,
except possibly for  high enough continuation values such that
$W \in (\upbar{W},W_{gp})$
the objective function is sub-modular in
$Y$ and $\phi^{P}_{1}$ in the linear model.
Therefore, in this case by monotone comparative statics arguments
the optimal incentives $Y$ decreases in $\phi^{P}_{1}$.
This yields the result stated in the part (a) of \normalfont{Theorem}~\ref{thm:Flattening-Linears-Het} for an increase in $\phi_{1}^{P}$ in an affine increasing model
of drift ambiguity.

Next we turn to analyze the monotone comparative statics
with respect to the agentâ€™s linear component of ambiguity
ambiguity $\phi^{A}_{1}$.
Consider the partial derivative the objective function in \eqref{eqn:HJBI-Asy-Lin}
involving interaction between $Y$ and $\phi^{A}_{1}$
\[
\small{
\mathcal{H}_{\phi^{A}_{1}}(Y) =
- (1 - \phi^{P}_{1})Y
+ F_{\phi^{A}_{1}}'(W;\bphi_{1}) m(Y;\bphi_{1})
+ F'(W;\bphi_{1}) m_{\phi^{A}_{1}} (Y;\bphi_{1})
+ {F_{\phi^{A}_{1}}''(W;\bphi_{1})} r Y^{2} \sigma^{2}/2
}
\]
For monotone comparative statics argument, we characterize
monotonicity of this partial derivative in $Y$ term by term.
The first term of $\mathcal{H}_{\phi^{A}_{1}}(Y) $ is decreasing in $Y$.
Its second term
does not vary with $Y$ by envelope theorem applied to the
agent's decision problem.
Turning to the third term of $\mathcal{H}_{\phi^{A}_{1}}(Y)$,
notice that its second component
$m_{\phi^{A}_{1}} (Y;\bphi_{1}) = - a Y + \phi^{P} _{1}Y^{2} = - Y^{2}\left(1 - \phi^{A}_{1} - \phi^{P}_{1}\right)$
is decreasing in $Y$ for any viable contracting relationship:
$1 - \phi^{A}_{1} - \phi^{P}_{1} \geq 0$, for otherwise, the principal
would optimally implement zero-action.
For any viable contract,
the third term is then monotone decreasing in $Y$
for low enough continuation values such that $W \in (W_o, \upbar{W})$,
as in this range $F'(W;\bphi_{1}) >0$,
and it is decreasing in $Y$
for high enough continuation values such that
$W \in (\upbar{W},W_{gp})$ for which $F'(W;\bphi_{1}) < 0$.
Finally, the last term of $\mathcal{H}_{\phi^{A}_{1}}(Y)$
is monotone decreasing in $Y$ as $F_{\phi^{A}_{1}}'' < 0$
by the part~\ref{prop:kappaconcavity-Asy} of
\normalfont{Proposition}~\ref{prop:Basic-Asy}
Summarizing, the partial derivative $\mathcal{H}_{\phi^{A}_{1}}(Y) $
is monotone decreasing in $Y$
and hence by monotone comparative static argument
the optimal incentive sensitivity $Y$ is decreasing in $\phi^{A}_{1}$
except possibly
for high enough continuation values $W \in (\upbar{W},W_{gp})$.
This yields the result stated in the part (a) of \normalfont{Theorem}~\ref{thm:Flattening-Linears-Het} for an increase in $\phi^{A}_{1}$ in affine increasing model.


\textbf{Ambiguity linearly decreasing in action.}
Consider a linear drift ambiguity model with
$\kappa(a;\bphi^{P}) = \phi^{P}_{0} - \phi^{P}_{1} a$ and
$\kappa(a;\bphi^{A}) = \phi^{A}_{0} - \phi^{A}_{1} a$.
Denote this heterogenous ambiguity by $\bphi = (\bphi^{P},\bphi^{A})$
with $\bphi^{i} = (\phi_{0}^{i},\phi_{1}^{i})$ for $i = A,P$.

With this formulation of the drift ambiguity, the HJBI formulation~\eqref{eqn:HJBI-Asy} of the contracting problem now takes the form:
\[
\begin{split}
r F(W;\bphi) = \max_{(a,Y,c)\geq 0 }
\big\{ r \left( a - \kappa(a;\bphi^{P})  - c  \right)
+ &r   F'(W;\bphi_{1}) \left(W  - u(c)  + h(a) - \kappa(a;\bphi^{P}) Y \right) \\
&+ {F''(W;\bphi_{1})} r^{2} Y^{2} \sigma^{2}/2 \big\}
\end{split}
\]
subject to boundary conditions and
incentive compatibility \eqref{eqn:IC-Y-Asy}.

The latter implies the following relationship
between an action $a$ and incentive compatible variation in the continuation value
to the agent $Y$ required to implement it:
$Y= \frac{a}{1 - \phi^{A}_{1}}$.
As the incentive compatibility condition describes the agent's decision rule for his action choice for any given $Y$,
it depends only on linear component of
the agent's perception  $\bphi^{A}$
but not on the principal's perception $\bphi^{P}$.
Using this relationship so that $a(Y;\bphi^{A}) = (1 - \phi^{A}_{1})Y$
let $m(Y;\bphi) = \left(W - u(c) + h(a(Y;\bphi)) -\kappa(a(Y;\bphi^{A});\bphi^{P})  Y \right)$
denote the drift of the agent's continuation value (from the principal's perspective).
The latter together with the functional form for the principal's
kappa ambiguity now
now yields the HJBI equation in the following form:
\begin{equation}\label{eqn:HJBI-Asy-Lin-dec}
\begin{split}
r F(W;\bphi) = \max_{(Y,c)\geq 0 }
\big\{  r  \left(  (1 - \phi^{A}_{1})(1 - \phi^{P}_{1})Y - \phi_{0}^{P} - c  \right)
+ &r F'(W;\bphi)  m(Y;\bphi) \\
&+ \frac{F''(W;\bphi)}{2} r^{2} Y^{2} \sigma^{2} \big\}
\end{split}
\end{equation}
subject to the same set of boundary conditions.

Since the affine from of drift ambiguity are additively separable in constant and slope part,
the monotone comparative static analysis made
for the constant ambiguity model and for the linear increasing ambiguity model
 apply analogously. Specifically,
notice that in the linear decreasing formulation
constant ambiguity $\phi_{0}^{i}$  and slope ambiguity $\phi_{1}^{i}$
for each actor $i = A$ and $i=P$
enter into the formulation linearly additivity.
Due to this separability, the monotone comparative static analysis of $Y$
with respect to the constant ambiguity $\phi_{0}^{i}$
follow from the analogous arguments made for constant ambiguity
model in the earlier part of this proof.

Similarly the monotone comparative static analysis with respect to $\phi_{1}^{i}$
for each actor $i = A$ and $i=P$
follows from analogous arguments made for
linear increasing model by a change of variable.
As in the monotone comparative static analysis of the model with symmetric ambiguity,
this change of variable replaces
$\phi_{1}^{i}$ in the linear increasing ambiguity model
with $-\phi_{1}^{i}$ to represent drift set shrinking as effort increases.
In this specification, higher drift ambiguity
arises from an increase in $\phi_{0}^{i}$ and a decrease in $\phi_{1}^{i}$,
which expand drift sets.
From this change of variables, the analogous characterization follows.
for an affine decreasing model of drift ambiguity.

\bibliographystyle{chicago}
\bibliography{PAAmbigLit,imp-mh-ambig}
\end{document}

\textbf{* (Mon):} the solution $v$ ultimately decreasing.
include this into the properties borrowed
from PT.

The following would be the way to construct this
argument with drift-ambiguity.
Since their argument is general enough to allow for
my drift-ambiguity formulation.
This formulation does not affect the content of their argument
and applies almost word-by-word.
Therefore, it is enough to simply `assume'
the property and mention that it is most probably true
based on this verbal argument.

The unique solution $v$ is `ultimately' decreasing.

For this we will show that for some super-solution $F_{b} \geq v$
which is decreasing ultimately. Together with $F_{0} \leq v$,
which is also decreasing, it establishes that the solution function
$v$ is `ultimately' decreasing.

1. notice that by direct calculation $\boldsymbol{L}F_{0} \leq 0$
and hence the principal's outside value $F_{0}$ is a `sub-solution'.

2. taking $F_{b} := F_{0} + b g$ where
$g$ is increasing $C^{2}$ and its growth asymptotically
controlled by $\log$ for large enough $b$
and calculating yields
$\boldsymbol{L}F_{b} \geq 0$, i.e., $F_{b}$ is a super-solution.

3. Lemma B.3. (comparison):

$u$ sub-solution and $v$ super-solution, that is
$\boldsymbol{L} u \leq 0 \leq \boldsymbol{L}v$
and $F_{0} \leq u,v \leq F_{b}$ for some $b$
 implies that $u \leq v$.

 4. The unique solution $u$ satisfies that $v \leq F_{b}$
and the latter is `ultimately' decreasing. Hence, $v$
is also `ultimately' decreasing.